\def\inftyn #1{\left\|#1\right\|_{\infty}}
\def\twon #1{\left\|#1\right\|_2}
\def\onen #1{\left\|#1\right\|_1}
\def\frobn #1{\left\|#1\right\|_{\text{F}}}
\def\atomn #1{\left\|#1\right\|_{\cA}}
\def\atomni #1{\left\|#1\right\|_{\cA\sbra{\m{\Omega}}}}
\def\Natomni #1{\left\|#1\right\|_{\cA_N\sbra{\m{\Omega}}}}
\def\datomn #1{\left\|#1\right\|_{\cA}^*}
\def\datomni #1{\left\|#1\right\|_{\cA\sbra{\m{\Omega}}}^*}
\def\Ndatomni #1{\left\|#1\right\|_{\cA_N\sbra{\m{\Omega}}}^*}
\def\abs #1{\left|#1\right|}
\def\inp #1{\left\langle#1\right\rangle}
\def\st{\text{subject to }}
\def\bC{\mathbb{C}}
\def\bR{\mathbb{R}}
\def\bS{\mathbb{S}}
\def\m #1{\boldsymbol{#1}}
\def\cA{\mathcal{A}}
\def\cL{\mathcal{L}}
\def \qed {\hfill \vrule height6pt width 6pt depth 0pt}
\def\bee{\begin{equation}}
\def\ene{\end{equation}}
\def\beq{\begin{eqnarray}}
\def\enq{\end{eqnarray}}
\def\lentwo{\setlength\arraycolsep{2pt}}
\newtheorem{lem}{Lemma}
\newtheorem{rem}{Remark}
\newtheorem{thm}{Theorem}
\newcommand{\BOX}{\hfill\rule{2mm}{2mm}}
\newtheorem{defi}{Definition}
\def\equ #1{\begin{equation}#1\end{equation}}
\def\equa #1{\begin{eqnarray}#1\end{eqnarray}}
\def\sbra #1{\left(#1\right)}
\def\mbra #1{\left[#1\right]}
\def\lbra #1{\left\{#1\right\}}
\def\diag #1{\text{diag}#1}
\def\tr #1{\text{tr}#1}
\def\rank #1{\text{rank}#1}
\def\st {\text{ subject to }}
\title{On Gridless Sparse Methods for Line Spectral Estimation From Complete and Incomplete Data}
\author{Zai Yang, \emph{Member, IEEE}, and Lihua Xie, \emph{Fellow, IEEE}
\thanks{Manuscript November 2013; accepted by IEEE Transactions on Signal Processing March 2015.

The authors are with the School of Electrical and Electronic Engineering, Nanyang Technological University, 639798, Singapore (e-mail: \{yangzai, elhxie\}@ntu.edu.sg).}}
\begin{document}
\maketitle

\begin{abstract}
This paper is concerned about sparse, continuous frequency estimation in line spectral estimation, and focused on developing gridless sparse methods which overcome grid mismatches and correspond to limiting scenarios of existing grid-based approaches, e.g., $\ell_1$ optimization and SPICE, with an infinitely dense grid. We generalize AST (atomic-norm soft thresholding) to the case of nonconsecutively sampled data (incomplete data) inspired by recent atomic norm based techniques. We present a gridless version of SPICE (gridless SPICE, or GLS), which is applicable to both complete and incomplete data without the knowledge of noise level. We further prove the equivalence between GLS and atomic norm-based techniques under different assumptions of noise. Moreover, we extend GLS to a systematic framework consisting of model order selection and robust frequency estimation, and present feasible algorithms for AST and GLS. Numerical simulations are provided to validate our theoretical analysis and demonstrate performance of our methods compared to existing ones.
\end{abstract}

\begin{IEEEkeywords}
Line spectral estimation, atomic norm, gridless SPICE (GLS), model order selection, frequency splitting.
\end{IEEEkeywords}

\section{Introduction}
Spectral analysis of signals \cite{stoica2005spectral} is a major problem in statistical signal processing. In this paper we are concerned about the line spectral estimation problem which has wide applications in communications, radar, sonar, seismology, astronomy and so on. In particular, suppose that we observe a noisy sinusoidal signal (indexed by $j$)
\equ{y_j=\sum_{k=1}^K s_ke^{i2\pi\sbra{j-1}f_k} + e_j \label{formu:model1}}
on the index set $\mbra{M}\triangleq\lbra{1,\cdots,M}$ or a subset $\m{\Omega}\subset\mbra{M}$, where $y_j$ denotes the $j$th entry of $\m{y}\in\bC^M$ (similarly for $f_k$, $s_k$ and $e_j$), $i=\sqrt{-1}$, $f_k\in\left[0,1\right)$ and $s_k\in\bC$ denote the normalized frequency and (complex) amplitude of the $k$th sinusoidal component respectively, and $e_j\in\bC$ is the measurement noise. The sinusoid number $K<M$, usually referred to as the model order, is typically unknown in practice. Following from \cite{wang2006spectral}, the case when the signal is observed on $\mbra{M}$ is referred to as the complete data case while the other case when only samples on $\m{\Omega}\subset\mbra{M}$ are available is called the incomplete data case (or missing data case), in which the samples on the complementary set of $\m{\Omega}$, $\overline{\m{\Omega}}\triangleq\mbra{M}\backslash \m{\Omega}$, are called missing data. The missing data case is important since missing samples are common in practice that can be caused by sensor failure, outliers, weather condition or other physical constraints \cite{schafer2002missing,wang2006spectral}. Frequency estimation and model order selection are two important topics in line spectral estimation. Given $f_k$'s and $K$, $s_k$'s can be obtained by a simple least-squares method according to (\ref{formu:model1}). This paper is mainly focused on frequency estimation but we also incorporate existing model order selection tools in our methods.

Many methods have been proposed for frequency estimation. Common classical methods include periodogram (or beamforming), nonlinear least squares (NLS) and MUSIC but often have limitations (see the review in \cite{stoica2005spectral}). For example, the periodogram suffers from leakage problems and have difficulties in resolving closely separated frequencies \cite{stoica2005spectral}. It is worth noting that the recent iterative adaptive approach (IAA) \cite{yardibi2010source ,stoica2009missing} reduces the leakage of periodogram. The NLS involves nonconvex optimization and, as well as MUSIC, requires to know the model order $K$. Both the problems are not easy to deal with. Model order selection is usually a prerequisite for (or interleaved with) frequency estimation in, for example, NLS and MUSIC. Existing approaches are usually based on information-theoretic criteria or data covariance matrix such as the second order statistic of eigenvalues (SORTE) and predicted eigen-threshold approach \cite{stoica2004model,grunwald2007minimum,he2010detecting,chen1991detection}. It is recently shown in \cite{han2013improved} that SORTE outperforms other methods in a related problem.

With the development of sparse signal representation (SSR) and later the compressed sensing (CS) concept \cite{donoho2006compressed}, sparse methods for frequency estimation have been popular in the past decade. In this kind of methods, the continuous frequency domain $\left[0,1\right)$ is discretized/gridded into a finite set of grid points. By assuming that the true frequencies are on (practically, close to) some grid points, the observation model is approximately written into a linear system of equations. Then frequency estimation is accomplished by sparse signal recovery followed by support detection. Two prominent sparse methods are $\ell_1$ optimization and sparse iterative covariance-based estimation (SPICE) \cite{malioutov2005sparse,stoica2011new,stoica2011spice,stoica2012spice}. SPICE is usually more practical since it estimates the noise variance, which is unavailable in advance, jointly with frequency estimation.

Since CS so far has been focused on signals that can be sparsely represented under a finite dictionary (or a finite set of atoms), discretization/gridding of the frequency domain is inevitable in early sparse methods. According to the wisdom of CS the sampling grid should not be too dense, otherwise almost complete correlations between adjacent atoms (or steering vectors) may degrade the sparse recovery performance. However, it is intuitively reasonable and in fact has been verified by many algorithms that a dense grid leads to a more accurate frequency estimate since both grid mismatches (between grid points and the true frequencies) and approximation errors (of the observation model) can be reduced with a dense grid. Therefore, one naturally wonders whether the existing sparse methods can be practically implemented with an infinitely dense grid or equivalently, directly on the continuous interval $\left[0,1\right)$ without gridding and, if implementable, what performances the \emph{gridless} sparse methods can obtain. This paper will answer these questions.

Before proceeding to gridless sparse methods, it is worth noting that grid-based methods have been proposed to alleviate the drawbacks of the finite discretization with affordable computational workloads. Many of them start with a coarse grid and gradually modify the frequency estimate out of or during the algorithms. Examples include iterative grid refinement \cite{malioutov2005sparse} and joint sparse signal and parameter estimation \cite{shutin2011sparse, hu2012compressed,yang2012robustly, yang2013off,austin2013dynamic,hu2013fast, shutin2013incremental,tan2014joint}, where \cite{shutin2011sparse,hu2012compressed,shutin2013incremental} are sparse versions of the space-alternating generalized expectation-maximization (SAGE) algorithm \cite{fleury1999channel,feder1988parameter}. Since the observed samples are nonlinear functions of the frequencies by (\ref{formu:model1}), the joint estimation methods typically need to carry out nonconvex optimization and cannot guarantee global optimality. Other methods such as \cite{duarte2013spectral,fannjiang2012coherence} start with a fixed, highly dense grid and iteratively optimize sparse solutions supported on sufficiently separate grid points.

The first gridless sparse method for frequency estimation is introduced in \cite{candes2013towards} motivated by the concept of atomic norm (or total variation norm) for continuous-time signals \cite{aleksanyan1944real,chandrasekaran2012convex}, which generalizes the $\ell_1$ norm for the discrete counterpart. Therefore, the atomic norm-based methods in \cite{candes2013towards} and later papers \cite{candes2013super, bhaskar2013atomic,tang2012compressed} correspond to gridless versions (or limiting scenarios with an infinitely dense grid) of the $\ell_1$-based methods. In particular, the noiseless complete data case is studied in \cite{candes2013towards}, where it is shown that the frequencies can be exactly recovered provided that they are appropriately separated. The bounded-energy-noise case is then studied in \cite{candes2013super}. An atomic norm soft thresholding (AST) method is presented in \cite{bhaskar2013atomic} in the presence of stochastic noise, a common assumption in the literature. In the presence of missing data, the noiseless case is studied in \cite{tang2012compressed} via atomic norm minimization with exact recovery proven under some technical assumptions. Since computation of the atomic norm can be formulated as convex programming \cite{candes2013towards,bhaskar2013atomic}, the gridless sparse methods above can be solved in a polynomial time. Other related papers include \cite{azais2014spike} for complete data and \cite{chen2014robust} based on matrix completion for incomplete data. After submission of this paper, atomic norm methods have also been proposed in the case of multiple measurement vectors encountered in array processing and for further enhancing resolution \cite{yang2014continuous,yang2014exact,yang2015achieving}.


In this paper, we develop new gridless sparse methods for line spectral estimation and demonstrate their relations to the existing grid-based methods. Note that 1) atomic norm-based methods are still absent for noisy incomplete data, and 2) the existing atomic norm-based methods require the practically unknown noise variance/energe. An estimate can be possibly obtained as in \cite{bhaskar2013atomic} in the complete data case, however, it is not clear how to do this with incomplete data. The contributions of this paper are summarized as follows:
\begin{enumerate}
 \item We generalize AST and its theoretical results in \cite{bhaskar2013atomic} to the missing data case.
 \item We develop the gridless version of SPICE, named as gridless SPICE or GLS for short. GLS is obtained based on our recent work \cite{yang2014discretization} where the focus is on the \emph{spatial} spectral analysis (a.k.a. array processing) as opposed to the \emph{temporal} spectral analysis considered here. Moreover, we extend it to a systematic framework for line spectral estimation consisting of model order selection and improved frequency estimation.
 \item We explore connections between GLS and atomic norm-based methods and prove their equivalence under different assumptions of noise. The result holds in both the complete and missing data cases.
 \item We develop feasible algorithms for AST and GLS based on duality and the alternating direction method of multipliers (ADMM) \cite{boyd2011distributed}.
 \item We demonstrate that existing grid-based SPICE and $\ell_1$ optimization are approximate versions of GLS analytically and via numerical simulations.
\end{enumerate}

Notations used in this paper are as follows. $\bR$ and $\bC$ denote the sets of real and complex numbers respectively. Boldface letters are reserved for vectors and matrices. For an integer $N$, $[N]\triangleq\lbra{1,\cdots,N}$. $\abs{\cdot}$ denotes the amplitude of a scalar or cardinality of a set. $\onen{\cdot}$, $\twon{\cdot}$ and $\frobn{\cdot}$ denote the
$\ell_1$, $\ell_2$ and Frobenius norms respectively. $\m{A}^T$ and $\m{A}^H$ are the matrix transpose and conjugate transpose of $\m{A}$ respectively. $x_j$ is the $j$th entry of a vector $\m{x}$. Unless otherwise stated, $\m{x}_{\m{\Omega}}$ and $\m{A}_{\m{\Omega}}$ respectively reserve the entries of $\m{x}$ and the rows of $\m{A}$ in the index set $\m{\Omega}$. For a vector $\m{x}$, $\diag\sbra{\m{x}}$ is a diagonal matrix with $\m{x}$ being its diagonal. $\m{x}\succeq\m{0}$ means $x_j\geq0$ for all $j$. $\tr\sbra{\m{A}}$ denotes the trace of a matrix $\m{A}$. For positive semidefinite matrices $\m{A}$ and $\m{B}$, $\m{A}\geq\m{B}$ means that $\m{A}-\m{B}$ is positive semidefinite. $E\mbra{\cdot}$ denotes expectation and $\widehat{f}$ is an estimator of $f$. For notational simplicity, a random variable and its numerical value will not be distinguished.

The rest of the paper is organized as follows. Section \ref{sec:preliminary} introduces some preliminary results. Section \ref{sec:AST_inc} extends AST to the missing data case. Section \ref{sec:GLS} presents GLS and Section \ref{sec:framework} extends it to a systematic framework for line spectral estimation. Section \ref{sec:connection} proves the equivalence between GLS and atomic norm-based methods. Section \ref{sec:ADMM} presents feasible algorithms for AST and GLS. Section \ref{sec:simulation} provides numerical simulations and Section \ref{sec:conclusion} concludes this paper.

\section{Preliminaries} \label{sec:preliminary}

\subsection{$\ell_1$ Norm Denoising}
Consider the problem of recovering a signal $\m{z}\in\bC^M$ from its noisy measurement $\m{y}\in\bC^M$, with the prior knowledge that $\m{z}$ has a sparse representation under a discrete dictionary $\m{A}\in\bC^{M\times N}$, i.e., there exists a sparse vector $\m{s}\in\bC^{N}$ such that $\m{z}=\m{A}\m{s}$. The $\ell_1$ norm has been widely used for this signal denoising problem. In particular, $\m{z}$ is recovered by solving $\m{s}$ from the following optimization problem:
\equ{\min_{\m{s}} \mu \onen{\m{s}}+g\sbra{\m{y}-\m{A}\m{s}}, \label{formu:l1ND}}
where function $g\sbra{\cdot}$ plays data fitting and the regularization parameter $\mu>0$ balances the fidelity of the measurement $\m{y}$ and the sparsity of $\m{s}$. Collectively, we call (\ref{formu:l1ND}) $\ell_1$ norm denoising (L1ND). Three common choices of $g\sbra{\cdot}$ are $\twon{\cdot}^2$, $\twon{\cdot}$ and $\onen{\cdot}$, with which (\ref{formu:l1ND}) is referred to as Lasso, square root- (SR-) Lasso and least absolute deviation- (LAD-) Lasso, respectively \cite{tibshirani1996regression,belloni2011square,wang2007robust}. From a statistical perspective, Lasso suits for Gaussian noise and LAD-Lasso is robust to outliers. Compared to Lasso, SR-Lasso requires loose assumptions of the noise distribution with an easy choice of $\mu$ \cite{belloni2011square}.

\subsection{Atomic Norm} \label{sec:atomicnorm}
The concept of atomic norm is introduced in \cite{chandrasekaran2012convex}, which generalizes many common sparse norms such as the $\ell_1$ norm and the nuclear norm of matrices. Let $\cA$ be a collection of atoms satisfying that its convex hull, $\text{conv}\sbra{\cA}$, is compact, centrally symmetric, and contains the origin as an interior point. Then the gauge function of $\text{conv}\sbra{\cA}$ defines a norm which is called the atomic norm and denoted by $\atomn{\cdot}$:
\equ{\begin{split}\atomn{\m{y}}
&\triangleq\inf\lbra{t>0: \m{y}\in t\text{conv}\sbra{\cA}} \\
&= \inf\lbra{\sum_k c_k: \m{y}=\sum_k c_k\m{a}_k, c_k\geq0, \m{a}_k\in\cA}.\end{split} \label{formu:atomicnorm}}
The dual norm of the atomic norm is given by
\equ{\datomn{\m{z}}=\sup\lbra{\inp{\m{z},\m{a}}_{\bR}: \atomn{\m{a}}\leq1},}
where $\inp{\m{z},\m{a}}_{\bR}=\Re\inp{\m{z},\m{a}}=\Re\lbra{\m{a}^H\m{z}}$ and $\Re$ takes the real part of a complex number. Moreover, it can be shown that $\text{conv}\sbra{\cA}=\lbra{\m{a}:\atomn{\m{a}}\leq1}$ and thus $\cA$ contains all extreme points of $\lbra{\m{a}:\atomn{\m{a}}\leq1}$. It follows that
\equ{\datomn{\m{z}}=\sup_{\m{a}\in\cA}\inp{\m{z},\m{a}}_{\bR}.\label{formu:dualatomnorm}}

\subsection{AST for Line Spectral Estimation From Complete Data} \label{sec:AST}
The observation model in (\ref{formu:model1}) can be written more compactly as follows:
\equ{\m{y}=\sum_{k=1}^K \m{a}\sbra{f_k}s_k + \m{e} =\m{A}\sbra{\m{f}}\m{s}+\m{e}, \label{formu:model2}}
where $\m{a}\sbra{f_k}=\mbra{1,e^{i2\pi f_k},\cdots,e^{i2\pi\sbra{M-1}f_k}}^T\in\bC^{M}$, $\m{A}\sbra{\m{f}}=\mbra{\m{a}\sbra{f_1},\dots,\m{a}\sbra{f_K}}\in\bC^{M\times K}$, $\m{y}\in\bC^M$ is a vector by stacking all $y_j$, and $\m{s}\in\bC^K$, $\m{e}\in\bC^M$ are similarly defined. Denote $\m{a}\sbra{f,\phi}=\m{a}\sbra{f}\phi$, where $\phi\in\bS^1\triangleq\lbra{\phi\in\bC:\abs{\phi}=1}$. The set of atoms $\cA$ in this application is defined as
\equ{\cA\triangleq\lbra{\m{a}\sbra{f,\phi}: f\in\left[0,1\right), \phi\in\bS^1}. \label{formu:atomset1}}
The induced atomic norm $\atomn{\cdot}$ can be computed via semidefinite programming (SDP) \cite{bhaskar2013atomic}:
\equ{\atomn{\m{y}}=\min_{x,\m{u}} \frac{1}{2}\sbra{x+ u_1}, \st \begin{bmatrix}x& \m{y}^H \\ \m{y} & T\sbra{\m{u}} \end{bmatrix}\geq\m{0}, \label{formu:SDP_ANM}}
where $\m{u}\in\bC^M$ and $T\sbra{\m{u}}\in\bC^{M\times M}$ denotes a (Hermitian) Toeplitz matrix with
\equ{T\sbra{\m{u}}=\begin{bmatrix}u_1 & u_2 & \cdots & u_M\\ {u}_2^H & u_1 & \cdots & u_{M-1}\\ \vdots & \vdots & \ddots & \vdots \\ {u}_M^H & {u}_{M-1}^H & \cdots & u_1\end{bmatrix}, \label{formu:Toeplitz}}
where $u_j$ denotes the $j$th entry of $\m{u}$.

In the presence of independently and identically distributed (i.i.d.) zero-mean Gaussian noise with noise variance $\sigma_0$, \cite{bhaskar2013atomic} proposes the following atomic soft thresholding (AST) method for estimating the noiseless sinusoidal signal $\m{z}\triangleq\m{A}\sbra{\m{f}}\m{s}$:
\equ{\min_{\m{z}} \mu\atomn{\m{z}} +\frac{1}{2}\twon{\m{y}-\m{z}}^2, \label{formu:AST}}
where $\mu\approx\sqrt{M\ln M}\sigma_0^{\frac{1}{2}}$ when $M$ is sufficiently large. (\ref{formu:AST}) can be formulated as the following SDP by (\ref{formu:SDP_ANM}):
\equ{\min_{x,\m{u},\m{z}} \frac{\mu}{2}\sbra{x+ u_1} +\frac{1}{2}\twon{\m{y}-\m{z}}^2, \st \begin{bmatrix}x& \m{z}^H \\ \m{z} & T\sbra{\m{u}} \end{bmatrix}\geq\m{0}. \label{formu:AST11}}
Given the optimal solution $\sbra{x^*,\m{u}^*,\m{z}^*}$ of (\ref{formu:AST11}), the frequency and amplitude estimates $\widehat{\m{f}}$ and $\widehat{\m{s}}$ can be obtained from the Vandermonde decomposition of $T\sbra{\m{u}^*}$ (see Lemma \ref{lem:toeplitz} in Appendix \ref{sec:retrieval}). In particular, it holds that $T\sbra{\m{u}^*}=\m{A}\sbra{\widehat{\m{f}}}\diag\sbra{\abs{\widehat{\m{s}}}} \m{A}^H\sbra{\widehat{\m{f}}}$ and $\m{z}^*=\m{A}\sbra{\widehat{\m{f}}}\widehat{\m{s}}$, where $\abs{\cdot}$ operates elementwise for a vector (the two $\widehat{\m{s}}$ in the two equations above are identical following from the proof of Proposition II.1 in \cite{tang2012compressed}). We provide a computational method of the Vandermonde decomposition in Appendix \ref{sec:retrieval} which will be revisited later.

\section{AST for Incomplete Data} \label{sec:AST_inc}

\subsection{Atomic Norm for Incomplete Data}
Suppose that the observed samples are on a subset $\m{\Omega}\subset\mbra{M}$, where $\m{\Omega}$ is assumed to be sorted ascendingly. Denote the sample size $L=\abs{\m{\Omega}}\leq M$ and the range of the sampling period $\overline{M}=\Omega_L-\Omega_1+1\leq M$. Note that $\overline{M}$ is more practically relevant than $M$ since we can always re-index the observed samples by the set $\m{\Omega}-\Omega_1+1\triangleq\lbra{\Omega_l-\Omega_1+1: l\in\mbra{L}} =\lbra{1,\Omega_2-\Omega_1+1,\dots,\overline{M}}$. We define the set of atoms in this missing data case as follows:
\equ{\begin{split}\cA\sbra{\m{\Omega}}
&\triangleq\lbra{\m{a}_{\m{\Omega}}: \m{a}\in\cA}\\
&= \lbra{\m{a}_{\m{\Omega}}\sbra{f,\phi}: f\in\left[0,1\right), \phi\in\bS^1},\end{split}}
where $\m{a}_{\m{\Omega}}\sbra{f,\phi}=\m{a}_{\m{\Omega}}\sbra{f}\phi$ and $\m{a}_{\m{\Omega}}\sbra{f}$ is a subvector of $\m{a}\sbra{f}$ indexed by $\m{\Omega}$.
The convex hull $\text{conv}\sbra{\cA\sbra{\m{\Omega}}}$ can be shown to satisfy the conditions specified in Subsection \ref{sec:atomicnorm}. It follows that the gauge function of $\text{conv}\sbra{\cA\sbra{\m{\Omega}}}$ defines a norm that is denoted by $\atomni{\cdot}$.


\begin{lem} For the atomic norm $\atomni{\cdot}$ it holds that
\equ{\begin{split}
&\atomni{\m{y}_{\m{\Omega}}}=\min_{\m{y}_{\overline{\m{\Omega}}}}\atomn{\m{y}}\\
&=\min_{x,\m{u}, \m{y}_{\overline{\m{\Omega}}}} \frac{1}{2}\sbra{x+ u_1}, \st \begin{bmatrix}x& \m{y}^H \\ \m{y} & T\sbra{\m{u}} \end{bmatrix}\geq\m{0}.\end{split} \label{formu:SDP_AN_inc}} \label{thm:SDP_AN_inc}
\end{lem}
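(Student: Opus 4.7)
The plan is to establish the two equalities in sequence. The first equality, $\atomni{\m{y}_{\m{\Omega}}}=\min_{\m{y}_{\overline{\m{\Omega}}}}\atomn{\m{y}}$, is the substantive step; the second is then immediate upon substituting the SDP characterization (\ref{formu:SDP_ANM}) of $\atomn{\cdot}$ and collapsing the two minimizations into a single joint program over $x$, $\m{u}$ and $\m{y}_{\overline{\m{\Omega}}}$.

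Before invoking $\atomni{\cdot}$, I would first justify that it is a well-defined norm by checking the three conditions on $\text{conv}\sbra{\cA\sbra{\m{\Omega}}}$ listed in Subsection \ref{sec:atomicnorm}. Central symmetry is clear since $-\m{a}_{\m{\Omega}}\sbra{f,\phi}=\m{a}_{\m{\Omega}}\sbra{f,-\phi}\in\cA\sbra{\m{\Omega}}$, and compactness follows by parameterizing $f$ on the unit circle and using continuity of $\sbra{f,\phi}\mapsto\m{a}_{\m{\Omega}}\sbra{f,\phi}$. The origin lies in the interior of $\text{conv}\sbra{\cA\sbra{\m{\Omega}}}$ provided that $\cA\sbra{\m{\Omega}}$ spans $\bC^L$ as a real vector space; I would argue this by contradiction, since any nonzero $\m{v}\in\bC^L$ orthogonal in the real sense to every atom would, after optimizing over $\phi\in\bS^1$, force the trigonometric polynomial $\sum_{j\in\m{\Omega}}\overline{v}_j e^{i2\pi\sbra{j-1}f}$ to vanish identically on $\left[0,1\right)$, so $\m{v}=\m{0}$.

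To prove the first equality I would use the atomic decomposition (\ref{formu:atomicnorm}) via a double inequality. Any completion $\m{y}$ with decomposition $\m{y}=\sum_k c_k\m{a}\sbra{f_k,\phi_k}$, $c_k\geq 0$, restricts to the valid decomposition $\m{y}_{\m{\Omega}}=\sum_k c_k\m{a}_{\m{\Omega}}\sbra{f_k,\phi_k}$, so $\atomni{\m{y}_{\m{\Omega}}}\leq\sum_k c_k$; taking the infimum over such decompositions and completions yields $\atomni{\m{y}_{\m{\Omega}}}\leq\min_{\m{y}_{\overline{\m{\Omega}}}}\atomn{\m{y}}$. Conversely, any decomposition $\m{y}_{\m{\Omega}}=\sum_k c_k\m{a}_{\m{\Omega}}\sbra{f_k,\phi_k}$ lifts by setting $\m{y}_{\overline{\m{\Omega}}}=\sum_k c_k\m{a}_{\overline{\m{\Omega}}}\sbra{f_k,\phi_k}$, producing a completion with $\atomn{\m{y}}\leq\sum_k c_k$ and hence $\min_{\m{y}_{\overline{\m{\Omega}}}}\atomn{\m{y}}\leq\sum_k c_k$, which on taking the infimum gives the reverse inequality.

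The main obstacle I anticipate is the interior-point verification in the well-definedness step; while intuitively clear, it is the only place where the Vandermonde structure of the atoms on $\m{\Omega}$ enters nontrivially. The rest reduces to routine bookkeeping with atomic decompositions and the known SDP representation of $\atomn{\cdot}$. An alternative presentation would be to simply \emph{define} $\atomni{\m{y}_{\m{\Omega}}}$ by the formula on the right-hand side and verify directly that it is a norm; this trades the interior-point check for a separate verification of norm axioms but leaves the substantive content of the lemma unchanged.
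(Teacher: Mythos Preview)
Your proposal is correct and follows essentially the same route as the paper: the paper proves the first equality by expanding the atomic-norm definition and observing that minimizing over $\m{y}_{\overline{\m{\Omega}}}$ simply drops the constraints on those coordinates, which is precisely your restriction/lifting double inequality written as a single chain of equalities; the second equality is then obtained, as you do, by inserting (\ref{formu:SDP_ANM}). Your additional verification that $\text{conv}\sbra{\cA\sbra{\m{\Omega}}}$ satisfies the conditions of Subsection~\ref{sec:atomicnorm} is extra rigor the paper asserts without proof, so the ``obstacle'' you anticipate is not actually needed for the lemma as the paper presents it.
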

\begin{proof} By definition the following equalities hold:
\equ{\begin{split}
&\min_{\m{y}_{\overline{\m{\Omega}}}}\atomn{\m{y}}\\
&= \min_{\m{y}_{\overline{\m{\Omega}}}}\inf\lbra{\sum_k c_k: \m{y}=\sum_k c_k\m{a}_k, c_k\geq0, \m{a}_k\in\cA}\\
&= \inf\lbra{\sum_k c_k: \m{y}_{\m{\Omega}}=\sum_k c_k\sbra{\m{a}_k}_{\m{\Omega}}, c_k\geq0, \m{a}_k\in\cA}\\
&= \inf\lbra{\sum_k c_k: \m{y}_{\m{\Omega}}=\sum_k c_k\m{b}_k, c_k\geq0, \m{b}_k\in\cA\sbra{\m{\Omega}}}\\
&= \atomni{\m{y}_{\m{\Omega}}}.
\end{split}}
The second equality in (\ref{formu:SDP_AN_inc}) follows from (\ref{formu:SDP_ANM}).
\end{proof}

Note that the SDP formulation in (\ref{formu:SDP_AN_inc}) has been studied in \cite{tang2012compressed} for exact frequency recovery in the noiseless missing data case. Lemma \ref{thm:SDP_AN_inc} shows that this technique is exactly computing the atomic norm of the incomplete data.

For the dual atomic norm we have similarly to (\ref{formu:dualatomnorm}) that
\equ{\datomni{\m{z}_{\m{\Omega}}}=\sup_{f,\phi\in\bS^1}\inp{\m{z}_{\m{\Omega}}, \m{a}_{\m{\Omega}}\sbra{f,\phi}}_{\bR} = \sup_{f}\abs{\inp{\m{z}_{\m{\Omega}}, \m{a}_{\m{\Omega}}\sbra{f}}}, \label{formu:datomni}}
which will be useful in later analysis.

\subsection{AST for Incomplete Data}
Suppose that the observed samples are contaminated with i.i.d. noise. As suggested by \cite{bhaskar2013atomic} we estimate the noiseless signal, denoted by $\m{z}$ (or $\m{z}_{\m{\Omega}}$ on $\m{\Omega}$), by solving the following AST problem:
\equ{\min_{\m{z}_{\m{\Omega}}} \mu\atomni{\m{z}_{\m{\Omega}}} +\frac{1}{2}\twon{\m{y}_{\m{\Omega}}-\m{z}_{\m{\Omega}}}^2, \label{formu:AST_inc}}
where $\mu>0$ is to be specified.
Following from (\ref{formu:SDP_AN_inc}), (\ref{formu:AST_inc}) can be written into the following SDP:
\equ{\begin{split}
&\min_{x,\m{u},\m{z}} \frac{\mu}{2}\sbra{x+u_1} +\frac{1}{2}\twon{\m{y}_{\m{\Omega}}-\m{z}_{\m{\Omega}}}^2, \\
&\st \begin{bmatrix}x& \m{z}^H \\ \m{z} & T\sbra{\m{u}} \end{bmatrix}\geq\m{0}. \end{split} \label{formu:AST_SDP_inc}}

\begin{thm} Suppose the signal $\m{y}$ given by (\ref{formu:model1}) or (\ref{formu:model2}) is observed on the subset $\m{\Omega}\subset\mbra{M}$. Denote the original noiseless signal by $\m{z}^o=\m{A}\sbra{\m{f}}\m{s}$. The estimate $\widehat{\m{z}}$ of $\m{z}^o$ given by the solution of AST in (\ref{formu:AST_inc}) or (\ref{formu:AST_SDP_inc}) with $\mu\geq E\datomni{\m{e}_{\m{\Omega}}}$ has the expected (per-element) mean squared error (MSE)
\equ{\frac{1}{L} E\twon{\widehat{\m{z}}_{\m{\Omega}} - \m{z}_{\m{\Omega}}^o}^2\leq \frac{\mu}{L} \sum_{k=1}^K\abs{s_k}.}
Moreover, assume that $\m{e}_{\m{\Omega}}$ denotes i.i.d. zero-mean Gaussian noise with noise variance $\sigma_0$. Then the expected dual norm is upper bounded as follows:
\equ{E\datomni{\m{e}_{\m{\Omega}}}\leq \mu^*\triangleq \min_{p>1}\frac{p}{p-1}\sqrt{L\sbra{\ln \overline{M}+\ln\sbra{\pi p}+1}}\sigma_0^{\frac{1}{2}}, \label{formu:upperbound}}
where the optimizer $p^*$ satisfies that $2\ln\overline{M}<p^*<5\ln\overline{M}$ as $\overline{M}\geq100$. \label{thm:AST_inc}
\end{thm}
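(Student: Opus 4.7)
The plan is to treat the two claims separately: (i) the MSE bound for AST assuming $\mu$ dominates $E\datomni{\m{e}_{\m{\Omega}}}$, which is essentially convex analysis; and (ii) the upper bound on $E\datomni{\m{e}_{\m{\Omega}}}$ in the Gaussian case, which reduces to controlling the expected supremum of a random trigonometric polynomial.

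For (i), I would exploit the $1$-strong convexity of the AST objective (from the $\tfrac{1}{2}\twon{\cdot}^2$ data-fit term) to compare $\widehat{\m{z}}_{\m{\Omega}}$ against the feasible point $\m{z}_{\m{\Omega}}^o$. Substituting $\m{y}_{\m{\Omega}}=\m{z}_{\m{\Omega}}^o+\m{e}_{\m{\Omega}}$ and expanding the squared norms yields
\[
\twon{\widehat{\m{z}}_{\m{\Omega}}-\m{z}_{\m{\Omega}}^o}^2 \leq \mu\bigl(\atomni{\m{z}_{\m{\Omega}}^o}-\atomni{\widehat{\m{z}}_{\m{\Omega}}}\bigr)+\Re\inp{\m{e}_{\m{\Omega}},\widehat{\m{z}}_{\m{\Omega}}-\m{z}_{\m{\Omega}}^o},
\]
where strong convexity is what promotes the customary prefactor $\tfrac{1}{2}$ on the left to $1$. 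I would then decompose the cross term additively as $\Re\inp{\m{e}_{\m{\Omega}},\widehat{\m{z}}_{\m{\Omega}}}-\Re\inp{\m{e}_{\m{\Omega}},\m{z}_{\m{\Omega}}^o}$; the first summand is bounded via the atomic-norm/dual-norm H\"older inequality $\Re\inp{\m{e}_{\m{\Omega}},\widehat{\m{z}}_{\m{\Omega}}}\leq\datomni{\m{e}_{\m{\Omega}}}\atomni{\widehat{\m{z}}_{\m{\Omega}}}$, while the second summand has zero expectation. Taking expectation and using $\mu\geq E\datomni{\m{e}_{\m{\Omega}}}$ to cancel the $\atomni{\widehat{\m{z}}_{\m{\Omega}}}$ contributions then gives $E\twon{\widehat{\m{z}}_{\m{\Omega}}-\m{z}_{\m{\Omega}}^o}^2\leq\mu\atomni{\m{z}_{\m{\Omega}}^o}$. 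The closing estimate $\atomni{\m{z}_{\m{\Omega}}^o}\leq\sum_k|s_k|$ is immediate from the conic representation $\m{z}_{\m{\Omega}}^o=\sum_k|s_k|\m{a}_{\m{\Omega}}(f_k,s_k/|s_k|)$ together with definition (\ref{formu:atomicnorm}); dividing by $L$ yields the claim.

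For (ii), identity (\ref{formu:datomni}) gives $\datomni{\m{e}_{\m{\Omega}}}=\sup_{f\in[0,1)}|p(f)|$ with $p(f)=\sum_{l\in\m{\Omega}} e_l e^{-i2\pi(l-1)f}$ a random trigonometric polynomial. Factoring out the unimodular phase $e^{-i2\pi(\Omega_1-1)f}$ shows that $|p|$ is the modulus of a polynomial of degree at most $\overline{M}-1$ in $e^{-i2\pi f}$, so Bernstein's inequality supplies $\|p'\|_\infty\leq 2\pi(\overline{M}-1)\|p\|_\infty$. Combining this with $\abs{|p(f)|-|p(f')|}\leq|p(f)-p(f')|\leq|f-f'|\|p'\|_\infty$ on an equispaced grid of $N=\lceil p\pi\overline{M}\rceil$ nodes (parameter $p>1$) yields $\sup_{[0,1)}|p|\leq\frac{p}{p-1}\max_{j\in[N]}|p(j/N)|$. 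At each node $p(j/N)$ is a centered complex Gaussian of variance $L\sigma_0$, and a tail-integration argument using the sub-exponential concentration of $|p(j/N)|^2$ together with Jensen's inequality produces $E\max_{j\in[N]}|p(j/N)|\leq\sqrt{L\sigma_0(\ln N+1)}$ with no independence of the samples needed. Substituting $N=p\pi\overline{M}$ reproduces (\ref{formu:upperbound}). The stated range of the minimizer $p^*$ then follows by differentiating $g(p)=\frac{p}{p-1}\sqrt{\ln\overline{M}+\ln(\pi p)+1}$ and showing that the stationarity condition forces $p^*\in(2\ln\overline{M},5\ln\overline{M})$ for $\overline{M}\geq 100$ by elementary monotonicity estimates.

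The main obstacle is the Gaussian-supremum step. Bernstein's inequality controls the derivative of the complex-valued $p$, while what we ultimately need is a bound on the modulus $|p|$; the passage is clean via $\abs{|p|'}\leq|p'|$, but all constants must be tracked carefully to recover the factor $\pi$ (rather than $2\pi$) inside the logarithm of (\ref{formu:upperbound}), which originates from the maximum grid-to-interval distance being the half-spacing $1/(2N)$. A second delicacy is that the $N$ samples $p(j/N)$ are correlated through the shared noise vector $\m{e}_{\m{\Omega}}$, so the expected-maximum bound cannot invoke order statistics of independent random variables and must instead come from a tail integration. Step (i) is by comparison routine; its only subtle point is the additive split of the stochastic cross term, which is exactly what lets the hypothesis be imposed on the \emph{expected} dual norm rather than on a pointwise upper bound.
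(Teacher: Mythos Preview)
Your proposal is correct and follows essentially the same route as the paper. For part (i) the paper simply invokes \cite[Theorem~1]{bhaskar2013atomic}, whose proof is precisely the strong-convexity plus H\"older argument you outline; for part (ii) the paper also reduces $\datomni{\m{e}_{\m{\Omega}}}$ to the supremum of a random trigonometric polynomial, applies Bernstein/Schaeffer (Lemma~\ref{lem:bernstein}, stated in the $z$-variable rather than the $f$-variable but equivalent after the chain rule) together with the half-spacing grid bound to obtain $(1-\pi\overline{M}/N)^{-1}$, uses the same Gaussian-maximum bound $\sqrt{\ln N+1}$ (Lemma~\ref{lem:maxgauss}), and substitutes $N=p\pi\overline{M}$. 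The only cosmetic difference is that the paper locates $p^*$ via a fixed-point iteration $p_{k+1}=2\ln p_k+2\ln(\pi\overline{M})+3$ rather than by differentiating $g(p)$ directly.
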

\begin{proof} The first part of the theorem is a direct result of \cite[Theorem 1]{bhaskar2013atomic}. The upper bound of the expected dual norm in the case of i.i.d. Gaussian noise is derived in Appendix \ref{sec:proofupperbound}.
\end{proof}

It is interesting to note that Theorem \ref{thm:AST_inc} generalizes the result in the complete data case stated in \cite[Theorem 2]{bhaskar2013atomic}, where $\m{\Omega}=\mbra{M}$ and $L=\overline{M}=M$. The upper bound $\mu^*\approx\sqrt{L\ln \overline{M}}\sigma_0^{\frac{1}{2}}$ holds when $\overline{M}$ is sufficiently large, which depends on the range $\overline{M}$ of $\m{\Omega}$ besides the sample size $L$. By Theorem \ref{thm:AST_inc} AST with $\mu=\mu^*$ guarantees to produce a consistent signal estimate (on $\m{\Omega}$) if $K=o\sbra{\sqrt{\frac{L}{\ln M}}}$. In the limiting noiseless case where $\mu^*\propto\sigma_0^{\frac{1}{2}}\rightarrow0$, AST in (\ref{formu:AST_inc}) is equivalent to computing $\atomni{\m{y}_{\m{\Omega}}}$, which has been shown in \cite{tang2012compressed} to result in exact frequency recovery when the frequencies are sufficiently separate. Consequently it is expected that accurate frequency estimation can be obtained based on (\ref{formu:AST_inc}), which can be extracted from the Vandermonde decomposition of $T\sbra{\m{u}^*}$ as in the complete data case. Note, however, that it is not easy to determine the regularization parameter $\mu^*$ in practical scenarios where the noise variance $\sigma$ is unavailable and difficult to estimate from the incomplete data set. Motivated by this observation, we will turn to SPICE in Section \ref{sec:GLS}.

\subsection{Extension: Gridless Atomic Norm vs. Grid-based $\ell_1$ Norm} \label{sec:gridlessvsgrid}
In this subsection we show rigorously that the gridless atomic norm is the limiting scenario of the grid-based $\ell_1$ norm as the grid gets infinitely dense. In particular, suppose that a uniform grid of $N$ points $\widetilde{\m{f}}\triangleq\lbra{0,\frac{1}{N},\cdots,1-\frac{1}{N}}$ is used to sample the continuous domain $\left[0,1\right)$, or equivalently, the frequency variables are constrained on $\widetilde{\m{f}}$. Denote the resulting discrete set of atoms $\cA_N\sbra{\m{\Omega}}=\lbra{\m{a}_{\m{\Omega}}\sbra{f,\phi},f\in\widetilde{\m{f}}, \phi\in\bS^1}$ and its induced atomic norm by $\Natomni{\cdot}$. Moreover, define $\m{A}_{\m{\Omega}}=\mbra{\m{a}_{\m{\Omega}} \sbra{\widetilde{f}_1}, \m{a}_{\m{\Omega}}\sbra{\widetilde{f}_2},\dots, \m{a}_{\m{\Omega}}\sbra{\widetilde{f}_N}}$. We obtain by the definition of the atomic norm that
\equ{\begin{split}&\Natomni{\m{y}_{\m{\Omega}}}\\
&=\min_{c_j\geq 0,\phi_j\in\bS^1} \sum_{j=1}^Nc_j, \st \m{y}_{\m{\Omega}}=\sum_{j=1}^N c_j\m{a}_{\m{\Omega}}\sbra{\widetilde{f}_j,\phi_j}\\ &=\min_{\m{s}}\onen{\m{s}} \st \m{A}_{\m{\Omega}}\m{s}=\m{y}_{\m{\Omega}}, \end{split} \label{formu:linkgrid}}
where $s_j=c_j\phi_j$. Note that the grid-based atomic norm $\Natomni{\cdot}$ is linked to the $\ell_1$ norm by (\ref{formu:linkgrid}). Correspondingly, in the missing data case the L1ND formulation in (\ref{formu:l1ND}) can be equivalently written into
\equ{\min_{\m{z}_{\m{\Omega}}}\mu\Natomni{\m{z}_{\m{\Omega}}}+g\sbra{\m{y}_{\m{\Omega}}-\m{z}_{\m{\Omega}}}, \label{formu:NAND}}
where $\m{z}_{\m{\Omega}}\triangleq\m{A}_{\m{\Omega}}\m{s}$. Intuitively, as $N\rightarrow+\infty$ the discrete atomic set $\cA_N\sbra{\m{\Omega}}$ becomes the continuous atomic set $\cA\sbra{\m{\Omega}}$. Therefore, $\Natomni{\m{z}_{\m{\Omega}}}\rightarrow\atomni{\m{z}_{\m{\Omega}}}$. Formally, we have the following result:
\equ{\sbra{1-\frac{\pi \overline{M}}{N}}\Natomni{\m{z}_{\m{\Omega}}} \leq \atomni{\m{z}_{\m{\Omega}}}\leq \Natomni{\m{z}_{\m{\Omega}}}
\label{formu:Natomnvsatomn}}
which generalizes the result in the complete data case in \cite{bhaskar2013atomic} and is proven in Appendix \ref{sec:Natomnvsatomn}. Two implications of (\ref{formu:Natomnvsatomn}) are as follows: 1) the atomic norm is the limiting scenario of the $\ell_1$ norm when the grid becomes infinitely dense, and 2) the grid-based $\ell_1$ optimization methods are good approximations of corresponding gridless atomic norm methods when the grid size $N=O\sbra{M}$.

\section{Gridless SPICE (GLS)} \label{sec:GLS}
\subsection{Introduction to SPICE}
SPICE \cite{stoica2011new,stoica2011spice,stoica2012spice} is a grid-based sparse method for line spectral estimation based on weighted covariance fitting (WCF). It is advantageous to Lasso in practice in the sense that it automatically estimates the noise variance which is typically unavailable in advance. To develop SPICE, it is assumed that the phases of $s_k$ in (\ref{formu:model2}) or (\ref{formu:model1}), $k\in\mbra{K}$, are independently and uniformly distributed, which is a common assumption in covariance-based methods, e.g., MUSIC, and also in \cite{tang2012compressed}. It follows that $E\mbra{\m{s}\m{s}^H}=\diag\sbra{\abs{s_k}^2}\triangleq\diag\sbra{\m{p}}$, where $\m{p}$ is called the power parameter. Further assume that the noise $\m{e}$ is independent from $\m{s}$ and satisfies that $E\mbra{\m{e}\m{e}^H}=\diag\sbra{\m{\sigma}}$, where $\m{\sigma}$ denotes the noise variance parameter whose elements can be different from each other. Then the covariance matrix of $\m{y}$ has the following expression:
\equ{\m{R}=E\mbra{\m{y}\m{y}^H} =\m{A}\sbra{\m{f}}\diag\sbra{\m{p}}\m{A}^H\sbra{\m{f}}+\diag\sbra{\m{\sigma}}. \label{formu:R}}
SPICE attempts to minimize a WCF criterion as follows:
\equ{\begin{split}h\sbra{\m{f},\m{p},\m{\sigma}}
&=\frobn{\m{R}^{-\frac{1}{2}}\sbra{\m{y}\m{y}^H-\m{R}}}^2\\
& =\tr\sbra{\m{R}}+\twon{\m{y}}^2\m{y}^H\m{R}^{-1}\m{y}-2\twon{\m{y}}^2. \end{split} \label{formu:criterion2}}
According to \cite{stoica2011new,stoica2011spice} and references therein, (\ref{formu:criterion2}) is a \emph{suboptimal} criterion, a new understanding of which will be provided in Subsection \ref{sec:subopt}.
Note that the resulting optimization problem
\equ{\min_{\m{f},\m{p}\succeq\m{0},\m{\sigma}\succeq\m{0}} \tr\sbra{\m{R}}+\twon{\m{y}}^2\m{y}^H\m{R}^{-1}\m{y} \label{formu:GLS1}}
is nonconvex since the data covariance $\m{R}$ is nonlinear with respect to $\m{f}$ by (\ref{formu:R}). Like other existing sparse methods, discretization is applied to the continuous frequency domain to eliminate the dependence of $\m{R}$ on $\m{f}$. An alternating algorithm, named as SPICE, is then developed to solve the grid-based version of (\ref{formu:GLS1}). It is shown in \cite{rojas2013note,babu2014connection} that SPICE is connected to $\ell_1$ optimization methods, which will be revisited later.

\subsection{GLS in the Complete Data Case} \label{sec:GLS_com}
We now introduce the gridless version of SPICE, namely, GLS, in the complete data case. In particular, GLS adopts the WCF criterion of SPICE in (\ref{formu:criterion2}) but exactly solves (\ref{formu:GLS1}). Rather than the discretization which linearizes the covariance matrix $\m{R}$, a critical technique of GLS is to reparameterize $\m{R}$ by introducing a positive semidefinite Toeplitz matrix $T\sbra{\m{u}}=\m{A}\sbra{\m{f}}\diag\sbra{\m{p}}\m{A}^H\sbra{\m{f}}$. It follows from (\ref{formu:R}) that
\equ{\m{R}=T\sbra{\m{u}}+\diag\sbra{\m{\sigma}} \label{formu:reparameterization}}
with $T\sbra{\m{u}}\geq\m{0}$ and $\m{\sigma}\succeq\m{0}$. GLS is based on the following result.
\begin{lem} The two representations of $\m{R}$ in (\ref{formu:R}) and (\ref{formu:reparameterization}) are equivalent in the sense that, if $\m{R}$ can be represented by one, then it can be represented by the other. \label{lem:equivalenceR}
\end{lem}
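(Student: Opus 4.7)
The plan is to prove the two implications separately, relying on the fact that the matrix $\m{A}(\m{f})\diag(\m{p})\m{A}^H(\m{f})$ is structurally a positive semidefinite Toeplitz matrix, while the converse is exactly the content of the classical Carath\'eodory--Toeplitz (Vandermonde) decomposition invoked as Lemma \ref{lem:toeplitz} in Appendix \ref{sec:retrieval}.

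First, suppose $\m{R}$ admits the representation (\ref{formu:R}). I would define $T(\m{u})\triangleq\m{A}(\m{f})\diag(\m{p})\m{A}^H(\m{f})=\sum_{k=1}^K p_k \m{a}(f_k)\m{a}^H(f_k)$. Because each rank-one outer product $\m{a}(f_k)\m{a}^H(f_k)$ has $(j,l)$ entry $e^{i2\pi(j-l)f_k}$ that depends only on $j-l$, each summand is Hermitian Toeplitz, and so is their nonnegative combination; positive semidefiniteness follows from $\m{p}\succeq\m{0}$. Reading off the first row determines a vector $\m{u}\in\bC^M$ with $T(\m{u})\geq\m{0}$, and keeping the same $\m{\sigma}$ immediately yields (\ref{formu:reparameterization}).

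Conversely, suppose $\m{R}$ is given by (\ref{formu:reparameterization}) with $T(\m{u})\geq\m{0}$ and $\m{\sigma}\succeq\m{0}$. Applying Lemma \ref{lem:toeplitz} of Appendix \ref{sec:retrieval} (the Vandermonde decomposition of a PSD Toeplitz matrix) produces an integer $K\leq M$, distinct frequencies $\m{f}\in[0,1)^K$ and powers $\m{p}\succeq\m{0}$ such that $T(\m{u})=\m{A}(\m{f})\diag(\m{p})\m{A}^H(\m{f})$. Substituting back into (\ref{formu:reparameterization}) reproduces (\ref{formu:R}) with the same $\m{\sigma}$. I would briefly note that the decomposition is unique when $\rank T(\m{u})<M$ and is otherwise a one-parameter family, but either case supplies a valid $(\m{f},\m{p})$, which is all that is required for the equivalence claim.

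The only nontrivial ingredient is the reverse direction, so the main obstacle is really just to legitimately invoke the Carath\'eodory--Toeplitz theorem; this is not proven inside the lemma but is imported from Appendix \ref{sec:retrieval}. Everything else is a straightforward structural observation about sums of Vandermonde rank-one terms. I would therefore keep the proof short, devoting one or two lines to the forward direction and appealing to Lemma \ref{lem:toeplitz} for the converse.
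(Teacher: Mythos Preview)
Your proposal is correct and follows essentially the same route as the paper, which simply states that the result is a direct consequence of the Vandermonde decomposition lemma (Lemma~\ref{lem:toeplitz}) in Appendix~\ref{sec:retrieval}. Your write-up merely makes explicit the easy forward direction (that $\m{A}(\m{f})\diag(\m{p})\m{A}^H(\m{f})$ is PSD Toeplitz) and spells out the appeal to Lemma~\ref{lem:toeplitz} for the converse, so there is no substantive difference.
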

\begin{proof} It is a direct result of the Vandermonde decomposition lemma (see Lemma \ref{lem:toeplitz} in Appendix \ref{sec:retrieval}).
\end{proof}

By Lemma \ref{lem:equivalenceR}, the optimization problem in (\ref{formu:GLS1}) of GLS is equivalent to the following SDP:
\equ{\begin{split}
&\min_{\m{u},\m{\sigma}\succeq\m{0}} \tr\sbra{\m{R}}+\twon{\m{y}}^2\m{y}^H\m{R}^{-1}\m{y}, \st T\sbra{\m{u}}\geq\m{0}\\
&=\min_{x,\m{u},\m{\sigma}\succeq\m{0}} \tr\sbra{\m{R}}+\twon{\m{y}}^2x,\\
&\quad\st \begin{bmatrix}x & \m{y}^H \\ \m{y} & \m{R}\end{bmatrix}\geq\m{0} \text{ and } T\sbra{\m{u}}\geq\m{0},\end{split} \label{formu:GLS2}}
where $\m{R}$ is given in (\ref{formu:reparameterization}).
After (\ref{formu:GLS2}) is solved, the remaining task is to retrieve from its solution $\sbra{\m{u}^*,\m{\sigma}^*}$ the parameter estimate $\sbra{\widehat{\m{f}}, \widehat{\m{p}}, \widehat{\m{\sigma}}}$ of interest or solution  of (\ref{formu:GLS1}). In particular, if $T\sbra{\m{u}^*}$ is rank-deficient, then $\widehat{\m{f}}$ and $\widehat{\m{p}}$ can be uniquely determined by its Vandermonde decomposition using Lemma \ref{lem:toeplitz} in Appendix \ref{sec:retrieval}, and $\widehat{\m{\sigma}}=\m{\sigma}^*$. However, if $T\sbra{\m{u}^*}$ has full rank, then $\widehat{\m{f}}$ and $\widehat{\m{p}}$ cannot be uniquely determined. That means, (\ref{formu:GLS1}) has multiple optimal solutions. Among these solutions, we choose the one such that $\widehat{\m{f}}$ and $\widehat{\m{p}}$ have the minimum length since it is always of interest to simplify the model. In particular, let $\delta=\lambda_{\text{min}}\sbra{T\sbra{\m{u}^*}}$ be the minimum eigenvalue of $T\sbra{\m{u}^*}$. Then $\widehat{\m{f}}$ and $\widehat{\m{p}}$, which satisfy that $\abs{\widehat{\m{f}}}=\abs{\widehat{\m{p}}}\leq M-1$, are uniquely obtained by the Vandermonde decomposition of $T\sbra{\m{u}^*}-\delta\m{I}$, and $\widehat{\m{\sigma}} =\m{\sigma}^*+\delta\m{1}$, where $\m{I}$ and $\m{1}$ are respectively an identity matrix and a vector of ones.

\begin{rem} Under the assumption of homoscedastic noise, the data covariance matrix $\m{R}$ itself is Toeplitz and thus can be represented as $\m{R}=T\sbra{\widetilde{\m{u}}}$, where $\widetilde{\m{u}}\in\bC^M$ and $T\sbra{\widetilde{\m{u}}}\geq\m{0}$. Then the SDP of GLS in (\ref{formu:GLS2}) can be simplified accordingly (by simply setting $\m{\sigma}=\m{0}$). After the SDP is solved, the parameter estimate $\sbra{\widehat{\m{f}}, \widehat{\m{p}}, \widehat{\m{\sigma}}}$ can be given in the same manner. \label{rem:Rtoeplitz}
\end{rem}

\subsection{GLS in the Missing Data Case}
In the missing data case, only samples on $\m{\Omega}\subset\mbra{M}$ are observed. The same WCF criterion is adopted but applied only to the available data $\m{y}_{\m{\Omega}}$. Under the same assumptions as in SPICE, the covariance matrix of $\m{y}_{\m{\Omega}}$, denoted by $\m{R}_{\m{\Omega}}$, is
\equ{\m{R}_{\m{\Omega}}=E\mbra{\m{y}_{\m{\Omega}}\m{y}_{\m{\Omega}}^H} =\m{A}_{\m{\Omega}}\sbra{\m{f}}\diag\sbra{\m{p}}\m{A}_{\m{\Omega}}^H\sbra{\m{f}} +\diag\sbra{\m{\sigma}_{\m{\Omega}}}. \label{formu:R_inc}}
Note that $\m{A}_{\m{\Omega}}\sbra{\m{f}}=\m{\Gamma}_{\m{\Omega}}\m{A}\sbra{\m{f}}$, where $\m{\Gamma}_{\m{\Omega}}\in \lbra{0,1}^{L\times M}$ and its elements equal 1 only at $\sbra{j, \Omega_j}$, $j\in\mbra{L}$. Consequently, $\m{R}_{\m{\Omega}}$ can be equivalently reparameterized as
\equ{\m{R}_{\m{\Omega}}=\m{\Gamma}_{\m{\Omega}}T\sbra{\m{u}}\m{\Gamma}_{\m{\Omega}}^T+\diag\sbra{\m{\sigma}_{\m{\Omega}}} \label{formu:R_Omega}}
under the constraints $T\sbra{\m{u}}\geq\m{0}$ and $\m{\sigma}_{\m{\Omega}}\succeq\m{0}$, where $T\sbra{\m{u}}$ can be interpreted as the covariance of the ``clean'' complete data as before.
So GLS solves the following convex optimization problem:
\equ{\min_{\m{u},\m{\sigma}_{\m{\Omega}}\succeq\m{0}} \tr\sbra{\m{R}_{\m{\Omega}}}+\twon{\m{y}_{\m{\Omega}}}^2 \m{y}_{\m{\Omega}}^H\m{R}_{\m{\Omega}}^{-1}\m{y}_{\m{\Omega}},\\
\st T\sbra{\m{u}}\geq\m{0}, \label{formu:GLS2_inc}}
where $\m{R}_{\m{\Omega}}$ is given in (\ref{formu:R_Omega}).
Given the solution $\sbra{\m{u}^*,\m{\sigma}_{\m{\Omega}}^*}$ of (\ref{formu:GLS2_inc}), the parameter estimate $\sbra{\widehat{\m{f}}, \widehat{\m{p}}, \widehat{\m{\sigma}}_{\m{\Omega}}}$ can be obtained as well from the Vandermonde decomposition of $T\sbra{\m{u}^*}$ or $T\sbra{\m{u}^*}-\lambda_{\text{min}}\sbra{T\sbra{\m{u}^*}}\m{I}$.


To sum up, GLS is a gridless version of SPICE and can be applied to both the complete and missing data cases in the presence of either homoscedastic or heteroscedastic noise where the noise variance(s) is/are unknown. GLS carries out convex optimization in a reparameterized domain $\sbra{\m{u},\m{\sigma}}$ similarly to AST while the reparameterization process of GLS is done in a more explicit manner. The sparsity-promoting property of GLS is not so obvious as AST in which the atomic norm explicitly promotes sparsity. But it can be seen from (\ref{formu:GLS2_inc}) [and (\ref{formu:GLS2})] that the trace norm promotes sparsity while the second term plays data fitting. Since GLS requires neither the model order nor the noise variance, it might have some limitations in practice such as model order determination, which will be studied in the ensuing section. Motivated by that the GLS formulation in (\ref{formu:GLS2}) and the atomic norm in (\ref{formu:SDP_ANM}) are seemingly related, we will explore connections between GLS and atomic norm-based methods in Section \ref{sec:connection}.

\section{Extension of GLS: A Systematic Framework for Line Spectral Estimation} \label{sec:framework}

\subsection{Two Limitations of GLS: Inaccurate Model Order and Frequency Splitting} \label{sec:limits}
GLS may suffer from two limitations because it requires neither the model order nor the noise variance (and maybe more reasons). One is its inaccurate model order estimate, a common problem for sparse methods. GLS generally produces a frequency estimate of length much larger than the true model order. To see this, we consider the complete data case with homoscedastic noise as an example. According to Remark \ref{rem:Rtoeplitz}, the data covariance $\m{R}\geq\m{0}$ can be expressed as a Toeplitz matrix itself. By (\ref{formu:GLS2}) $\m{y}$ lies in the range space of $\m{R}$. Then the solution $\m{R}^*$ of $\m{R}$ has full rank with probability one, since otherwise $\m{y}\in\bC^M$ with random noise can be decomposed as superposition of $M-1$ sinusoids. Therefore, the number of estimated sinusoidal components of GLS, which equals the rank of $\m{R}^*-\lambda_{\text{min}}\sbra{\m{R}^*}\m{I}$, is almost sure to be $M-1$ (with probability zero the minimum eigenvalue is a multiple eigenvalue).

The other limitation is frequency splitting. An example is presented in Fig. \ref{Fig:frequencysplit}, where we attempt to estimate $K=3$ sinusoidal components using GLS from $L=50$ noisy samples that are randomly selected among $M=100$ measurements and corrupted by i.i.d. Gaussian noise with $\sigma\approx 0.4$. To rule out the possibility of numerical reasons, we solve GLS using a highly accurate SDP solver SDPT3 \cite{toh1999sdpt3} which is set to obtain the best precision. It is shown that the second component is split into two that are nearly located, which brings challenges to frequency estimation as well as model order selection.

\begin{figure}
\centering
  \includegraphics[width=2.8in]{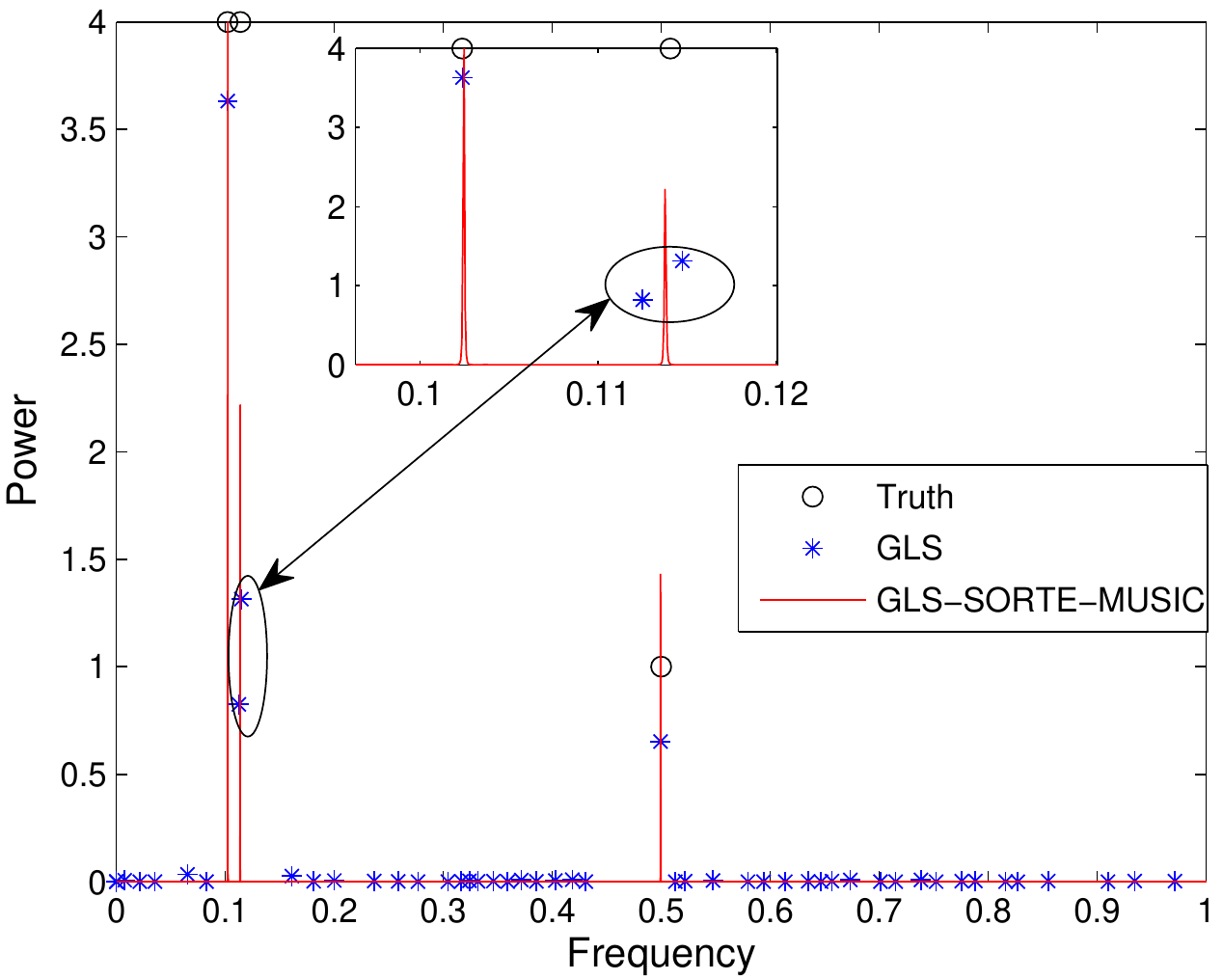}
\centering
\caption{Illustration of frequency splitting of GLS and its correction within the proposed framework. The three black circles indicate ground truth of the three sinusoidal components. Blue stars are produced by GLS using SDPT3, and the red curve of GLS-SORTE-MUSIC is given by GLS followed by SORTE for model order selection and MUSIC for spectral estimation. The area around the first two frequencies are zoomed in for better illustration.} \label{Fig:frequencysplit}
\end{figure}

\begin{rem} Note that the frequency splitting phenomenon reported in this paper is different from that encountered in grid-based methods (i.e., one sinusoidal component is split into a few supported on nearby grid points, see e.g., \cite{yang2012robustly,austin2013dynamic,bhaskar2013atomic}). In particular, the latter is caused by the grid and the convergence issue of certain algorithms \cite{austin2013dynamic}. In contrast, the frequency splitting shown in Fig. 1 is caused in part due to the absence of the noise level (note that AST rarely suffers from frequency splitting).
\end{rem}

\begin{rem} \label{rem:freqsplit} Since SPICE and $\ell_1$ optimization are approximate versions of GLS (more details will be shown in Section \ref{sec:connection}), they have the same frequency splitting problem (not due to the grid) without surprise, which has been confirmed in our simulations but we omit the details. This phenomenon has not been observed and reported in previous publications because of two reasons: 1) a rough grid leads to a worse frequency resolving resolution which can only detect frequency splitting caused by the grid, and 2) a highly dense grid means almost complete correlations between adjacent atoms and might result in numerical issues which bring challenges to the detection.
\end{rem}

\subsection{A Framework for Line Spectral Estimation} \label{sec:frameworkdetail}
To overcome the two limitations of GLS mentioned above, we propose the following framework for line spectral estimation which consists of three steps:
\begin{enumerate}
 \item Covariance estimation using GLS;
 \item Model order selection based on the covariance estimate;
 \item Frequency estimation based on the covariance and model order estimates.
\end{enumerate}
That is, we consider GLS as a covariance estimation scheme in this framework and then carry out line spectral estimation based on its solution. While these steps seem to be standard in covariance-based methods, the main contribution of this framework is the way that the covariance estimate is obtained. For example, it is generally not clear how to estimate the data covariance in the presence of missing data, while GLS provides a solution via covariance fitting and exploiting its Toeplitz structure. A data covariance estimate can be given in the complete data case by appropriately choosing a time window (see, e.g., \cite{stoica2005spectral}), however, the time window shortens the data length and potentially degrades the resolution limit \cite{candes2013towards}. In this paper, we choose the SORTE algorithm \cite{he2010detecting} for model order selection in \emph{Step 2} and MUSIC for the ensuing frequency estimation in \emph{Step 3}. It would be interesting to study in the future what choices of these methods result in the best performance.

In the framework, we estimate the model order from the covariance estimate $\m{\Gamma}_{\m{\Omega}}T\sbra{\m{u}^*}\m{\Gamma}_{\m{\Omega}}^T$ of the ``clean'' observed samples (to avoid effects of heteroscedastic noise). SORTE divides the eigenvalues of $\m{\Gamma}_{\m{\Omega}}T\sbra{\m{u}^*}\m{\Gamma}_{\m{\Omega}}^T$ into two clusters: one with larger eigenvalues corresponds to the signal subspace and the other the noise subspace. The estimated model order equals the size of the former cluster.
After that, frequency estimation is carried out using MUSIC. It is shown in Fig. \ref{Fig:frequencysplit} that the frequency splitting can be corrected within this framework, where the model order is correctly estimated.

\begin{rem} The proposed framework starts with covariance estimation and is applicable beyond GLS to grid-based methods such as SPICE and IAA. So, as a byproduct of the framework we also provide an off-grid frequency estimation approach to existing grid-based methods in the sense that the final frequency estimates are not constrained on the grid. Its effectiveness will be shown via numerical simulations. Moreover, it can also be applied to AST and $\ell_1$ optimization methods by connections of GLS and atomic norm based methods shown in the next section.
\end{rem}

\begin{rem} The proposed model order selection method is very different from conventional information-theoretic methods such as the minimum description length (MDL) principle, Akaike information criterion (AIC) and Bayesian information criterion (BIC) \cite{stoica2004model ,grunwald2007minimum}. For the aforementioned methods, one needs to solve a series of maximum likelihood estimation, or equivalently, NLS problems with respect to a set of candidate values of $K$, and then choose the best $K$. It is challenging to solve the NLS problems which require very accurate initialization since the objective functions have a \emph{complicated multimodal shape} with \emph{a very sharp global maximum} \cite{stoica2005spectral}. To date, there is no available method which is guaranteed to globally solve the NLS. The performance of existing initialization methods degrades in the presence of missing data. In contrast, the proposed method carries out convex optimization at the first step without the need of careful initialization.
\end{rem}


\section{Connections of GLS and Atomic Norm Denoising (AND)} \label{sec:connection}

\subsection{Basic Lemmas}
If the $\ell_1$ norm in L1ND [see (\ref{formu:l1ND})] is replaced by the atomic norm $\atomn{\cdot}$ (or $\atomni{\cdot}$), we call the resulting optimization problem
\equ{\min_{\m{z}} \mu \atomn{\m{z}}+g\sbra{\m{y}-\m{z}} \label{formu:AND}}
atomic norm denoising (AND). Correspondingly, (\ref{formu:AND}) with the three common choices of function $g\sbra{\cdot}$, including $\twon{\cdot}^2$, $\twon{\cdot}$ and $\onen{\cdot}$, is called gridless (GL-) Lasso, SR-Lasso and LAD-Lasso, respectively (GL-Lasso is exactly AST). To show connections of GLS and AND, we begin with some basic lemmas.

We use the following identity whenever $\m{R}\geq\m{0}$:
\equ{\m{y}^H\m{R}^{-1}\m{y}=\min_x x, \st \begin{bmatrix}x & \m{y}^H \\ \m{y} & \m{R}\end{bmatrix}\geq0. \label{formu:yRy}}
It follows that $\m{y}^H\m{R}^{-1}\m{y}$ is finite if and only if $\m{y}$ is in the range space of $\m{R}$. In fact, (\ref{formu:yRy}) is equivalent to defining $\m{y}^H\m{R}^{-1}\m{y}\triangleq\lim_{\sigma\rightarrow0_+} \m{y}^H\sbra{\m{R}+\sigma\m{I}}^{-1}\m{y}$ when $\m{R}$ loses rank. The following result follows from (\ref{formu:SDP_ANM}) and (\ref{formu:yRy}).

\begin{lem} It holds that
\equ{\atomn{\m{y}}=\min_{\m{u}} \frac{1}{2}u_1 +\frac{1}{2}\m{y}^H\mbra{T\sbra{\m{u}}}^{-1}\m{y}, \st T\sbra{\m{u}}\geq\m{0}. \label{formu:SDP_AN2}} \label{lem:AN} \BOX
\end{lem}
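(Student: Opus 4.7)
The plan is to derive the identity directly from the two ingredients highlighted just before the statement: the SDP representation of $\atomn{\cdot}$ in (\ref{formu:SDP_ANM}) and the variational identity (\ref{formu:yRy}) for $\m{y}^H \m{R}^{-1}\m{y}$. Essentially, the lemma amounts to eliminating the auxiliary scalar $x$ from the SDP in (\ref{formu:SDP_ANM}) by partially minimizing over $x$ before $\m{u}$.

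First, I would start from (\ref{formu:SDP_ANM}), which writes $\atomn{\m{y}}$ as the joint minimum of $\tfrac{1}{2}(x+u_1)$ over $(x,\m{u})$ subject to the $(M+1)\times(M+1)$ PSD block constraint. I would point out that this PSD block constraint forces $T(\m{u})\geq\m{0}$ as a principal submatrix, so the feasible set of $\m{u}$ in (\ref{formu:SDP_ANM}) is exactly $\{\m{u}:T(\m{u})\geq\m{0}\}$. I would then rewrite the joint minimization as a nested one,
\equ{\atomn{\m{y}}=\min_{\m{u}:\, T\sbra{\m{u}}\geq\m{0}}\; \frac{1}{2} u_1 + \frac{1}{2}\min_{x}\left\{x:\begin{bmatrix}x & \m{y}^H\\ \m{y} & T\sbra{\m{u}}\end{bmatrix}\geq\m{0}\right\}.}

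Next, I would apply (\ref{formu:yRy}) to the inner minimization with $\m{R}=T(\m{u})$, which gives $\min_x\{x:\cdot\}=\m{y}^H[T(\m{u})]^{-1}\m{y}$ whenever $T(\m{u})\geq\m{0}$. Substituting this expression back yields exactly the claimed identity. The only subtle point is the case when $T(\m{u})$ is singular: by the discussion just before Lemma \ref{lem:AN}, the quantity $\m{y}^H[T(\m{u})]^{-1}\m{y}$ is interpreted as $\lim_{\sigma\to 0_+}\m{y}^H[T(\m{u})+\sigma\m{I}]^{-1}\m{y}$, which is finite precisely when $\m{y}$ lies in the range of $T(\m{u})$ and equals $+\infty$ otherwise; this is consistent with the minimum-$x$ characterization in (\ref{formu:yRy}), so no feasible minimizer of the outer problem is lost or spuriously introduced.

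There is no real obstacle: the proof is essentially an exercise in partial minimization. The only thing worth handling carefully is the rank-deficient case, and the convention set in (\ref{formu:yRy}) already takes care of it, so a single sentence suffices.
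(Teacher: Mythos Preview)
Your proposal is correct and follows exactly the approach indicated in the paper, which simply states that the lemma follows from (\ref{formu:SDP_ANM}) and (\ref{formu:yRy}). Your write-up merely spells out the nested minimization and the rank-deficient case that the paper leaves implicit.
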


\begin{lem} Given $\m{R}=\m{A}\m{A}^H\geq\m{0}$, it holds that $\m{y}^H\m{R}^{-1}\m{y}=\min\twon{\m{x}}^2, \st \m{A}\m{x}=\m{y}$. \label{lem:lem1}
\end{lem}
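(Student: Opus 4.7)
The plan is to prove the identity by orthogonally decomposing the feasible set of the constrained problem and then matching the optimal value with the SDP definition of $\m{y}^H\m{R}^{-1}\m{y}$ given in (\ref{formu:yRy}).

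First I would dispose of the feasibility issue. Since $\m{R}=\m{A}\m{A}^H$, standard linear algebra gives $\text{range}(\m{R})=\text{range}(\m{A})$, so the constraint $\m{A}\m{x}=\m{y}$ admits a solution if and only if $\m{y}\in\text{range}(\m{R})$. By (\ref{formu:yRy}), the left-hand side $\m{y}^H\m{R}^{-1}\m{y}$ is finite on exactly the same set (otherwise the Schur-complement LMI is infeasible and the SDP equals $+\infty$), so both sides equal $+\infty$ on the complement. Thus I can restrict attention to $\m{y}\in\text{range}(\m{R})$.

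The main step is the following decomposition argument. Using the orthogonal splitting $\bC^N=\text{range}(\m{A}^H)\oplus\text{null}(\m{A})$, any feasible $\m{x}$ can be written uniquely as $\m{x}=\m{x}_R+\m{x}_N$ with $\m{x}_R\in\text{range}(\m{A}^H)$ and $\m{x}_N\in\text{null}(\m{A})$. Orthogonality gives $\twon{\m{x}}^2=\twon{\m{x}_R}^2+\twon{\m{x}_N}^2$, while $\m{A}\m{x}=\m{A}\m{x}_R=\m{y}$, so removing $\m{x}_N$ preserves feasibility and strictly decreases the objective unless $\m{x}_N=\m{0}$. Hence the minimizer has the form $\m{x}^*=\m{A}^H\m{w}$ for some $\m{w}$ satisfying $\m{R}\m{w}=\m{y}$, which is solvable by the range assumption. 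Plugging back in, the optimal value is
\equ{\twon{\m{x}^*}^2=\m{w}^H\m{A}\m{A}^H\m{w}=\m{w}^H\m{R}\m{w}=\m{y}^H\m{w}. \notag}

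The last step is to identify $\m{y}^H\m{w}$ with $\m{y}^H\m{R}^{-1}\m{y}$. When $\m{R}$ is invertible this is immediate since $\m{w}=\m{R}^{-1}\m{y}$. When $\m{R}$ is singular, I would invoke the convention right below (\ref{formu:yRy}), namely $\m{y}^H\m{R}^{-1}\m{y}=\lim_{\sigma\to 0_+}\m{y}^H\sbra{\m{R}+\sigma\m{I}}^{-1}\m{y}$. Writing $\m{y}=\m{R}\m{w}$ and simplifying yields $\m{y}^H\sbra{\m{R}+\sigma\m{I}}^{-1}\m{y}=\m{w}^H\m{R}\sbra{\m{R}+\sigma\m{I}}^{-1}\m{R}\m{w}$, which tends to $\m{w}^H\m{R}\m{w}=\m{y}^H\m{w}$ as $\sigma\to 0_+$ (by simultaneous diagonalization of $\m{R}$ and $\m{R}+\sigma\m{I}$ and $\m{y}\in\text{range}(\m{R})$). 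The value is well-defined because any two solutions $\m{w},\m{w}'$ of $\m{R}\m{w}=\m{y}$ differ by a vector in $\text{null}(\m{R})$, which is annihilated by $\m{y}$.

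The only mild obstacle is the rank-deficient case, but since the paper has already fixed the convention via the SDP in (\ref{formu:yRy}) and its limiting interpretation, this is a routine verification rather than a real difficulty.
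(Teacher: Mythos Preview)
Your argument is correct, but it proceeds quite differently from the paper's. The paper exploits the SDP characterization (\ref{formu:yRy}) directly: for any feasible $\m{x}$ with $\m{A}\m{x}=\m{y}$, it observes the factorization
\[
\begin{bmatrix}\twon{\m{x}}^2 & \m{y}^H \\ \m{y} & \m{R}\end{bmatrix}
=\begin{bmatrix}\m{x}^H\\\m{A}\end{bmatrix}\begin{bmatrix}\m{x}^H\\\m{A}\end{bmatrix}^H\geq\m{0},
\]
which immediately certifies that $\twon{\m{x}}^2$ is feasible in (\ref{formu:yRy}) and hence $\m{y}^H\m{R}^{-1}\m{y}\leq\twon{\m{x}}^2$. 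The reverse inequality is taken as standard (the minimum-norm solution $\m{A}^H\m{R}^{-1}\m{y}$ attains it). Your route instead computes the constrained minimum explicitly via the orthogonal decomposition $\bC^N=\text{range}(\m{A}^H)\oplus\text{null}(\m{A})$, obtains the minimizer $\m{x}^*=\m{A}^H\m{w}$ with $\m{R}\m{w}=\m{y}$, and then matches $\m{y}^H\m{w}$ to the limiting definition of $\m{y}^H\m{R}^{-1}\m{y}$. The paper's proof is a one-line Schur-complement trick that stays entirely within the SDP framework; yours is a more classical least-norm computation that is longer but self-contained and treats the rank-deficient case more explicitly.
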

\begin{proof} We need only to show that for any $\m{x}$ satisfying $\m{A}\m{x}=\m{y}$ it holds that $\m{y}^H\m{R}^{-1}\m{y}\leq\twon{\m{x}}^2$, or equivalently,
$\begin{bmatrix}\twon{\m{x}}^2 & \m{y}^H \\ \m{y} & \m{R}\end{bmatrix}\geq0$.
The conclusion follows from that $\begin{bmatrix}\twon{\m{x}}^2 & \m{y}^H \\ \m{y} & \m{R}\end{bmatrix} = \begin{bmatrix}\twon{\m{x}}^2 & \m{x}^H\m{A}^H \\ \m{A}\m{x} & \m{A}\m{A}^H\end{bmatrix}=\begin{bmatrix}\m{x}^H\\\m{A}\end{bmatrix}\begin{bmatrix}\m{x}^H\\\m{A}\end{bmatrix}^H\geq0$.
\end{proof}

\begin{lem} Given $\m{R}\geq\m{0}$ and $\m{\sigma}\succeq\m{0}$, it holds that
\equ{\begin{split}
&\m{y}^H\mbra{\m{R}+\diag\sbra{\m{\sigma}}}^{-1}\m{y}\\
&=\min_{\m{z}} \m{z}^H\m{R}^{-1}\m{z} + \sbra{\m{y}-\m{z}}^H\diag^{-1}\sbra{\m{\sigma}}\sbra{\m{y}-\m{z}}. \end{split}}
\label{lem:separate}
\end{lem}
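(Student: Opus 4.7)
The plan is to reduce the identity to a pair of applications of Lemma~\ref{lem:lem1}, using the factored form of $\m{R}$ and an augmented design matrix that absorbs the diagonal noise term. The driving intuition is the classical variational decomposition: splitting $\m{y}$ additively as signal $\m{z}$ (explained by $\m{R}$) plus residual $\m{y}-\m{z}$ (explained by $\diag(\m{\sigma})$) should be equivalent to putting everything through the total covariance $\m{R}+\diag(\m{\sigma})$.

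First, since $\m{R}\geq\m{0}$, write $\m{R}=\m{A}\m{A}^H$ for some matrix $\m{A}$ (e.g.\ from the eigendecomposition). Then form the block matrix $\m{B}=\bigl[\m{A},\ \diag^{1/2}(\m{\sigma})\bigr]$, so that
\equ{\m{R}+\diag(\m{\sigma})=\m{A}\m{A}^H+\diag^{1/2}(\m{\sigma})\diag^{1/2}(\m{\sigma})=\m{B}\m{B}^H.}
Applying Lemma~\ref{lem:lem1} to $\m{B}\m{B}^H$, the left-hand side becomes
\equ{\m{y}^H[\m{R}+\diag(\m{\sigma})]^{-1}\m{y}=\min_{\m{x},\m{v}}\twon{\m{x}}^2+\twon{\m{v}}^2,\ \st\ \m{A}\m{x}+\diag^{1/2}(\m{\sigma})\m{v}=\m{y}.}

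Next, I reparameterize the constraint by introducing $\m{z}\triangleq\m{A}\m{x}$, so that $\diag^{1/2}(\m{\sigma})\m{v}=\m{y}-\m{z}$. In the generic case $\m{\sigma}\succ\m{0}$ this forces $\m{v}=\diag^{-1/2}(\m{\sigma})(\m{y}-\m{z})$ and gives
\equ{\twon{\m{v}}^2=(\m{y}-\m{z})^H\diag^{-1}(\m{\sigma})(\m{y}-\m{z}).}
For a fixed feasible $\m{z}$, minimizing $\twon{\m{x}}^2$ subject to $\m{A}\m{x}=\m{z}$ is exactly the problem that Lemma~\ref{lem:lem1} evaluates, yielding $\m{z}^H\m{R}^{-1}\m{z}$. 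Combining the two minimizations and minimizing over $\m{z}$ on the outside delivers the claimed identity.

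The only delicate point is the degenerate case where either $\m{R}$ loses rank or some $\sigma_j=0$. I will handle this by adopting the convention from~\eqref{formu:yRy}, which defines $\m{y}^H\m{R}^{-1}\m{y}$ as $+\infty$ unless $\m{y}$ lies in the range of $\m{R}$; equivalently, it is the limit of $\m{y}^H(\m{R}+\epsilon\m{I})^{-1}\m{y}$ as $\epsilon\to 0_+$. Under this convention, the constraint $\m{A}\m{x}=\m{z}$ together with $\diag^{1/2}(\m{\sigma})\m{v}=\m{y}-\m{z}$ still characterizes feasibility (any $j$ with $\sigma_j=0$ forces $y_j=z_j$, contributing $+\infty$ to the residual otherwise), and Lemma~\ref{lem:lem1} still applies to the inner problem. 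Thus the argument extends verbatim, and the equality holds in the extended-real sense. This boundary bookkeeping is the main thing to be careful about; everything else is a two-line substitution.
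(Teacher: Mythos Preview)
Your proof is correct and follows essentially the same route as the paper: factor $\m{R}=\m{A}\m{A}^H$, form the augmented matrix $\bigl[\m{A},\ \diag^{1/2}(\m{\sigma})\bigr]$, and invoke Lemma~\ref{lem:lem1} twice to pass between the joint least-squares problem and the two separate quadratic forms. The only cosmetic difference is that the paper runs the chain of equalities from the right-hand side to the left, whereas you start from the left; your added remarks on the degenerate case are a welcome clarification that the paper leaves implicit via~\eqref{formu:yRy}.
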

\begin{proof} Since $\m{R}\geq\m{0}$ there exists a matrix $\m{A}$ satisfying that $\m{R}=\m{A}\m{A}^H$. It follows that $\m{R}+\diag\sbra{\m{\sigma}}=\begin{bmatrix} \m{A} & \diag^{\frac{1}{2}}\sbra{\m{\sigma}}\end{bmatrix} \begin{bmatrix} \m{A} & \diag^{\frac{1}{2}}\sbra{\m{\sigma}}\end{bmatrix}^H$. The following equalities hold by Lemma \ref{lem:lem1}:
\equ{\begin{split}
&\min_{\m{z}} \m{z}^H\m{R}^{-1}\m{z} + \sbra{\m{y}-\m{z}}^H\diag^{-1}\sbra{\m{\sigma}}\sbra{\m{y}-\m{z}}\\
=& \min_{\m{x},\m{A}\m{x}=\m{z}}\twon{\m{x}}^2 + \sbra{\m{y}-\m{z}}^H\diag^{-1}\sbra{\m{\sigma}}\sbra{\m{y}-\m{z}},\\
=& \min_{\m{x}}\twon{\m{x}}^2 + \sbra{\m{y}-\m{A}\m{x}}^H\diag^{-1}\sbra{\m{\sigma}}\sbra{\m{y}-\m{A}\m{x}}\\
=& \min_{\m{x},\m{d}}\twon{\m{x}}^2 + \twon{\m{d}}^2,\\
& \st \diag^{-\frac{1}{2}}\sbra{\m{\sigma}}\sbra{\m{y}-\m{A}\m{x}}=\m{d}\\
=& \min_{\m{x},\m{d}}\twon{\begin{bmatrix}\m{x}\\\m{d}\end{bmatrix}}^2, \st \begin{bmatrix} \m{A} & \diag^{\frac{1}{2}}\sbra{\m{\sigma}}\end{bmatrix}\begin{bmatrix}\m{x}\\\m{d}\end{bmatrix} =\m{y}\\
=& \m{y}^H\mbra{\m{R}+\diag\sbra{\m{\sigma}}}^{-1}\m{y}.
\end{split}}
\end{proof}

\begin{lem} Given $\m{R}\geq\m{0}$, $\min_{\m{y}_{\overline{\m{\Omega}}}}\m{y}^H\m{R}^{-1}\m{y} = \m{y}_{\m{\Omega}}^H\m{R}_{\m{\Omega}}^{-1}\m{y}_{\m{\Omega}}$. \label{lem:augment}
\end{lem}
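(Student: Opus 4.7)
The plan is to reduce both sides to minimum-norm least-squares problems via Lemma \ref{lem:lem1}, and then observe that freely optimizing over the missing coordinates $\m{y}_{\overline{\m{\Omega}}}$ amounts to deleting the corresponding rows of the equality constraint. First I would factor $\m{R} = \m{A}\m{A}^H$ (for concreteness take $\m{A} = \m{R}^{1/2}$); this induces the compatible factorization $\m{R}_{\m{\Omega}} = \m{\Gamma}_{\m{\Omega}}\m{R}\m{\Gamma}_{\m{\Omega}}^T = (\m{\Gamma}_{\m{\Omega}}\m{A})(\m{\Gamma}_{\m{\Omega}}\m{A})^H$ for the right-hand side.

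Applying Lemma \ref{lem:lem1} to the left-hand side and swapping the order of minimization gives
\begin{equation*}
\min_{\m{y}_{\overline{\m{\Omega}}}} \m{y}^H\m{R}^{-1}\m{y} = \min_{\m{x},\,\m{y}_{\overline{\m{\Omega}}}} \twon{\m{x}}^2 \text{ subject to } \m{A}\m{x} = \m{y}.
\end{equation*}
I would next split the equality into its $\m{\Omega}$- and $\overline{\m{\Omega}}$-rows. Since $\m{y}_{\overline{\m{\Omega}}}$ is a free variable, the $\overline{\m{\Omega}}$-block is automatically satisfiable by picking $\m{y}_{\overline{\m{\Omega}}} = (\m{A}\m{x})_{\overline{\m{\Omega}}}$, leaving only $\m{\Gamma}_{\m{\Omega}}\m{A}\m{x} = \m{y}_{\m{\Omega}}$ binding. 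Hence the joint minimum equals $\min_{\m{x}} \twon{\m{x}}^2$ subject to $\m{\Gamma}_{\m{\Omega}}\m{A}\m{x} = \m{y}_{\m{\Omega}}$, and a second invocation of Lemma \ref{lem:lem1} on the factorization of $\m{R}_{\m{\Omega}}$ identifies this with $\m{y}_{\m{\Omega}}^H\m{R}_{\m{\Omega}}^{-1}\m{y}_{\m{\Omega}}$.

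The one subtlety—and the place I expect to spend the most care—is that both sides use the generalized-inverse convention built into (\ref{formu:yRy}), under which $\m{y}^H\m{R}^{-1}\m{y} = +\infty$ whenever $\m{y} \notin \mathrm{Range}(\m{R})$. I would therefore verify the rank-deficient case: if $\m{y}_{\m{\Omega}} \in \mathrm{Range}(\m{R}_{\m{\Omega}}) = \mathrm{Range}(\m{\Gamma}_{\m{\Omega}}\m{A})$, the argument above runs verbatim; if $\m{y}_{\m{\Omega}} \notin \mathrm{Range}(\m{R}_{\m{\Omega}})$, the right-hand side is $+\infty$, and simultaneously the feasibility set $\{\m{x} : \m{\Gamma}_{\m{\Omega}}\m{A}\m{x} = \m{y}_{\m{\Omega}}\}$ is empty for every completion $\m{y}_{\overline{\m{\Omega}}}$, so the left-hand side is $+\infty$ as well. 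This dispenses with Schur-complement gymnastics, which would otherwise be the natural but messier alternative route.
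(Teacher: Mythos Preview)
Your proposal is correct and follows essentially the same route as the paper: factor $\m{R}=\m{A}\m{A}^H$, apply Lemma~\ref{lem:lem1} to rewrite $\m{y}^H\m{R}^{-1}\m{y}$ as a constrained least-squares problem, observe that minimizing over $\m{y}_{\overline{\m{\Omega}}}$ drops the $\overline{\m{\Omega}}$-rows of the constraint, and apply Lemma~\ref{lem:lem1} once more to identify the result with $\m{y}_{\m{\Omega}}^H\m{R}_{\m{\Omega}}^{-1}\m{y}_{\m{\Omega}}$. Your explicit treatment of the rank-deficient case is more careful than the paper's terse presentation, which leaves that consistency to the convention in~(\ref{formu:yRy}).
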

\begin{proof} Suppose $\m{R}=\m{A}\m{A}^H$. It follows that $\m{R}_{\m{\Omega}}=\m{A}_{\m{\Omega}}\m{A}_{\m{\Omega}}^H$. By Lemma \ref{lem:lem1} it holds that
\equ{\begin{split}\min_{\m{y}_{\overline{\m{\Omega}}}}\m{y}^H\m{R}^{-1}\m{y}
&= \min_{\m{y}_{\overline{\m{\Omega}}}} \min_{\m{x}, \m{A}\m{x}=\m{y}}\twon{\m{x}}^2\\
&= \min_{\m{x}, \m{A}_{\m{\Omega}}\m{x}=\m{y}_{\m{\Omega}}} \twon{\m{x}}^2 = \m{y}_{\m{\Omega}}^H\m{R}_{\m{\Omega}}^{-1}\m{y}_{\m{\Omega}}. \end{split}}
\end{proof}

\subsection{Equivalence Between GLS and AND} \label{sec:equivalence}
We consider only the missing data case since the complete data case is a special case with $\m{\Omega}=\mbra{M}$. The result in the latter can be easily obtained by the substitutions $\m{\Omega}\rightarrow\mbra{M}$, $L\rightarrow M$, $\cA\sbra{\m{\Omega}}\rightarrow\cA$, $\m{y}_{\m{\Omega}}\rightarrow\m{y}$ and $\m{z}_{\m{\Omega}}\rightarrow\m{z}$. Since frequency estimation is of most importance which is determined by the solution of $\m{u}$ in both GLS and AND (formulated as SDPs), we use the following definition hereafter.
\begin{defi} We say that two optimization problems are equivalent if they produce the same solution $\m{u}^*$ up to a positive scale, i.e., they produce the same $\widehat{\m{p}}$ or $\abs{\widehat{\m{s}}}$ up to a positive scale and exactly the same frequency estimate $\widehat{\m{f}}$.
\end{defi}

\begin{thm} The GLS optimization problem in (\ref{formu:GLS2_inc}) is equivalent to one of the following AND problems:
\begin{enumerate}
 \item under the assumption of heteroscedastic noise,
  \equ{\min_{\m{z}_{\m{\Omega}}} \sqrt{L}\atomni{\m{z}_{\m{\Omega}}} +\onen{\m{y}_{\m{\Omega}}-\m{z}_{\m{\Omega}}}; \label{formu:equivl1_inc}}
 \item under homoscedastic noise,
  \equ{\min_{\m{z}_{\m{\Omega}}} \atomni{\m{z}_{\m{\Omega}}} +\twon{\m{y}_{\m{\Omega}}-\m{z}_{\m{\Omega}}}; \label{formu:equivl2_inc}}
 \item under homoscedastic noise with known variance $\sigma_0$,
  \equ{\min_{\m{z}_{\m{\Omega}}} \frac{\sqrt{L}}{\twon{\m{y}_{\m{\Omega}}}}\sigma_0\atomni{\m{z}_{\m{\Omega}}} +\frac{1}{2}\twon{\m{y}_{\m{\Omega}}-\m{z}_{\m{\Omega}}}^2. \label{formu:equivl2s_inc}}
\end{enumerate}
Moreover, the GLS optimization problems under different assumptions have the optimal solution $\frac{\twon{\m{y}_{\m{\Omega}}}}{\sqrt{L}}\m{u}^*$ given the solution $\m{u}^*$ of corresponding AND problems mentioned above (formulated as SDPs following from (\ref{formu:SDP_AN_inc})). \label{thm:equiv1_inc}
\end{thm}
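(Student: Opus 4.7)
The plan is to start from the GLS objective in (\ref{formu:GLS2_inc}) and successively peel off the inner variables $\m{\sigma}_{\m{\Omega}}$, $\m{u}$, and $\m{z}_{\overline{\m{\Omega}}}$ in that order, showing in each case that what remains is (a positive multiple of) one of the three AND formulations. The four basic lemmas just established are exactly the machinery needed for this peeling.

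First I would apply Lemma~\ref{lem:separate} with $\m{R}\leftarrow\m{\Gamma}_{\m{\Omega}}T\sbra{\m{u}}\m{\Gamma}_{\m{\Omega}}^T$ and $\diag\sbra{\m{\sigma}}\leftarrow\diag\sbra{\m{\sigma}_{\m{\Omega}}}$ to write
\equ{\m{y}_{\m{\Omega}}^H\m{R}_{\m{\Omega}}^{-1}\m{y}_{\m{\Omega}} =\min_{\m{z}_{\m{\Omega}}} \m{z}_{\m{\Omega}}^H\sbra{\m{\Gamma}_{\m{\Omega}}T\sbra{\m{u}}\m{\Gamma}_{\m{\Omega}}^T}^{-1}\m{z}_{\m{\Omega}} + \sbra{\m{y}_{\m{\Omega}}-\m{z}_{\m{\Omega}}}^H\diag^{-1}\sbra{\m{\sigma}_{\m{\Omega}}}\sbra{\m{y}_{\m{\Omega}}-\m{z}_{\m{\Omega}}}, \notag}
and then use Lemma~\ref{lem:augment} to replace the first term by $\min_{\m{z}_{\overline{\m{\Omega}}}} \m{z}^H T^{-1}\sbra{\m{u}}\m{z}$, which introduces the dummy variable $\m{z}$ with $\m{z}_{\overline{\m{\Omega}}}$ free. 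Noting $\tr\sbra{\m{\Gamma}_{\m{\Omega}}T\sbra{\m{u}}\m{\Gamma}_{\m{\Omega}}^T}=Lu_1$ gives the rewritten GLS objective
\equ{L u_1+\onen{\m{\sigma}_{\m{\Omega}}} +\twon{\m{y}_{\m{\Omega}}}^2\mbra{\m{z}^H T^{-1}\sbra{\m{u}}\m{z} + \sbra{\m{y}_{\m{\Omega}}-\m{z}_{\m{\Omega}}}^H\diag^{-1}\sbra{\m{\sigma}_{\m{\Omega}}}\sbra{\m{y}_{\m{\Omega}}-\m{z}_{\m{\Omega}}}} \notag}
to be jointly minimized over $\m{u}$ with $T\sbra{\m{u}}\geq\m{0}$, $\m{\sigma}_{\m{\Omega}}\succeq\m{0}$, and $\m{z}$.

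Next I would eliminate $\m{\sigma}_{\m{\Omega}}$ by a one-variable AM--GM step. In the heteroscedastic case, minimizing $\sigma_j+\twon{\m{y}_{\m{\Omega}}}^2\abs{y_j-z_j}^2/\sigma_j$ over each $\sigma_j\geq0$ yields $2\twon{\m{y}_{\m{\Omega}}}\abs{y_j-z_j}$, so the $\m{\sigma}_{\m{\Omega}}$-block contributes $2\twon{\m{y}_{\m{\Omega}}}\onen{\m{y}_{\m{\Omega}}-\m{z}_{\m{\Omega}}}$. In the homoscedastic case with $\m{\sigma}_{\m{\Omega}}=\sigma\m{1}_L$ the same trick gives $2\sqrt{L}\twon{\m{y}_{\m{\Omega}}}\twon{\m{y}_{\m{\Omega}}-\m{z}_{\m{\Omega}}}$, while with $\sigma_0$ known the contribution is simply $\sigma_0^{-1}\twon{\m{y}_{\m{\Omega}}}^2\twon{\m{y}_{\m{\Omega}}-\m{z}_{\m{\Omega}}}^2$ plus a constant.

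The step I expect to be the main obstacle is the joint minimization over $\m{u}$, which is how the atomic norm actually appears. The trick is to combine Lemma~\ref{lem:AN} with a positive rescaling $\m{u}=t\m{w}$, $w_1=1$, under which $T^{-1}\sbra{\m{u}}=t^{-1}T^{-1}\sbra{\m{w}}$. The AM--GM inequality then gives
\equ{\min_{t>0}\sbra{Lt\cdot w_1+t^{-1}\twon{\m{y}_{\m{\Omega}}}^2\m{z}^H T^{-1}\sbra{\m{w}}\m{z}} =2\twon{\m{y}_{\m{\Omega}}}\sqrt{L\cdot w_1\cdot\m{z}^H T^{-1}\sbra{\m{w}}\m{z}}, \notag}
and by Lemma~\ref{lem:AN} (applied after the same AM--GM step) the inner minimum over $\m{w}$ with $T\sbra{\m{w}}\geq\m{0}$ equals $\atomn{\m{z}}^2$, producing $2\sqrt{L}\twon{\m{y}_{\m{\Omega}}}\atomn{\m{z}}$. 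The optimal scale is $t^*=\twon{\m{y}_{\m{\Omega}}}\atomn{\m{z}}/\sqrt{L}$, which is exactly the factor $\twon{\m{y}_{\m{\Omega}}}/\sqrt{L}$ relating the GLS and AND optimizers claimed in the theorem.

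Finally, minimizing over the free block $\m{z}_{\overline{\m{\Omega}}}$ converts $\atomn{\m{z}}$ into $\atomni{\m{z}_{\m{\Omega}}}$ by Lemma~\ref{thm:SDP_AN_inc}. Collecting terms then yields (up to positive multiplicative constants $2\twon{\m{y}_{\m{\Omega}}}$, $2\sqrt{L}\twon{\m{y}_{\m{\Omega}}}$, and $2\sigma_0^{-1}\twon{\m{y}_{\m{\Omega}}}^2$ in the three cases, together with an additive constant $L\sigma_0$ in the last) exactly the objectives (\ref{formu:equivl1_inc}), (\ref{formu:equivl2_inc}), and (\ref{formu:equivl2s_inc}). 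Since the reductions preserve the minimizing $\m{z}$ and only scale $\m{u}$ by the positive factor $\twon{\m{y}_{\m{\Omega}}}/\sqrt{L}$, the equivalence in the sense of the definition above (and the scaling of $\m{u}^*$) follows.
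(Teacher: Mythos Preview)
Your proposal is correct and follows essentially the same route as the paper: split the quadratic form via Lemma~\ref{lem:separate}, lift to the full index set via Lemma~\ref{lem:augment}, collapse the $\m{\sigma}_{\m{\Omega}}$-block by AM--GM, identify the $\m{u}$-minimum with the atomic norm via Lemma~\ref{lem:AN}, and finally pass from $\atomn{\m{z}}$ to $\atomni{\m{z}_{\m{\Omega}}}$ via Lemma~\ref{thm:SDP_AN_inc}. The only noteworthy difference is how you eliminate $\m{u}$: the paper simply absorbs the constant into $\m{z}$, writing $Lu_1+\twon{\m{y}_{\m{\Omega}}}^2\m{z}^HT^{-1}\sbra{\m{u}}\m{z} =2L\mbra{\tfrac{1}{2}u_1+\tfrac{1}{2}\sbra{\tfrac{\twon{\m{y}_{\m{\Omega}}}}{\sqrt{L}}\m{z}}^H T^{-1}\sbra{\m{u}}\sbra{\tfrac{\twon{\m{y}_{\m{\Omega}}}}{\sqrt{L}}\m{z}}}$ and invoking Lemma~\ref{lem:AN} directly, so the $\twon{\m{y}_{\m{\Omega}}}/\sqrt{L}$ scaling of $\m{u}^*$ drops out of positive homogeneity rather than from your extra AM--GM in $t$; your $t^*=\twon{\m{y}_{\m{\Omega}}}\atomn{\m{z}}/\sqrt{L}$ is not itself the claimed factor but its ratio to the AND-optimal scale $t_0=\atomn{\m{z}}$ is, which is what you should say.
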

\begin{proof} According to Lemma \ref{lem:augment}, (\ref{formu:GLS2_inc}) is equivalent to the following problem:
\equ{\min_{\m{u},\m{\sigma}_{\m{\Omega}}\succeq\m{0}, \m{y}_{\overline{\m{\Omega}}}} \tr\sbra{\m{R}_{\m{\Omega}}}+\twon{\m{y}_{\m{\Omega}}}^2 \m{y}^H\m{R}^{-1}\m{y},\\
\st T\sbra{\m{u}}\geq\m{0}, \label{formu:GLS3_inc}}
where $\m{R}$ and $\m{R}_{\m{\Omega}}$ are given in (\ref{formu:R}) and (\ref{formu:R_Omega}) respectively.

In \emph{Case 1}, the following equalities hold by consecutively applying Lemma \ref{lem:separate}, Lemma \ref{lem:AN} and Lemma \ref{thm:SDP_AN_inc} (the positive semidefinite constraint is omitted for brevity):
{\lentwo\equa{
&&\text{(\ref{formu:GLS3_inc}) }\notag\\
&&= \min_{\m{u},\m{\sigma}_{\m{\Omega}}\succeq\m{0},\m{y}_{\overline{\m{\Omega}}},\m{z} }Lu_1+\m{1}^T\m{\sigma}_{\m{\Omega}} + \twon{\m{y}_{\m{\Omega}}}^2 \m{z}^H\mbra{T\sbra{\m{u}}}^{-1}\m{z} \notag\\
&&\qquad +\twon{\m{y}_{\m{\Omega}}}^2\sbra{\m{y}-\m{z}}^H\diag^{-1}\sbra{\m{\sigma}}\sbra{\m{y}-\m{z}} \label{formu_equiv1}\\
&&= \min_{\m{u},\m{\sigma}_{\m{\Omega}}\succeq\m{0},\m{z} }2L\mbra{\frac{1}{2}u_1+ \frac{1}{2}\cdot\frac{\twon{\m{y}_{\m{\Omega}}}}{\sqrt{L}}\m{z}^H \mbra{T\sbra{
\m{u}}}^{-1} \frac{\twon{\m{y}_{\m{\Omega}}}}{\sqrt{L}}\m{z} } \notag\\
&&\qquad + \twon{\m{y}_{\m{\Omega}}}^2\sbra{\m{y}_{\m{\Omega}}-\m{z}_{\m{\Omega}}}^H\diag^{-1}\sbra{\m{\sigma}_{\m{\Omega}}} \sbra{\m{y}_{\m{\Omega}}-\m{z}_{\m{\Omega}}} \notag\\
&&\qquad + \m{1}^T\m{\sigma}_{\m{\Omega}} \label{formu_equiv2}\\
&&= \min_{\m{z}}2L\atomn{\frac{\twon{\m{y}_{\m{\Omega}}}}{\sqrt{L}}\m{z}} +2\twon{\m{y}_{\m{\Omega}}}\onen{\m{y}_{\m{\Omega}}-\m{z}_{\m{\Omega}}} \label{formu_equiv3}\\
&&= 2\twon{\m{y}_{\m{\Omega}}}\lbra{\min_{\m{z}_{\m{\Omega}}}\sqrt{L}\atomni{\m{z}_{\m{\Omega}}} +\onen{\m{y}_{\m{\Omega}}-\m{z}_{\m{\Omega}}}}. \label{formu_equiv4}
}}So the equivalence holds.

In \emph{Case 2}, the equalities (\ref{formu_equiv1}) and (\ref{formu_equiv2}) still hold. It then follows from $\m{1}^T\m{\sigma}_{\m{\Omega}}=L\sigma_1$ and $\diag\sbra{\m{\sigma}_{\m{\Omega}}}=\sigma_1\m{I}$ that
\equ{\begin{split}
&\text{(\ref{formu:GLS3_inc})}\\
&= \min_{\m{z}}2L\atomn{\frac{\twon{\m{y}_{\m{\Omega}}}}{\sqrt{L}}\m{z}} +2\sqrt{L}\twon{\m{y}_{\m{\Omega}}}\twon{\m{y}_{\m{\Omega}}-\m{z}_{\m{\Omega}}}\\
&= 2\sqrt{L}\twon{\m{y}_{\m{\Omega}}}\lbra{\min_{\m{z}_{\m{\Omega}}}\atomni{\m{z}_{\m{\Omega}}} +\twon{\m{y}_{\m{\Omega}}-\m{z}_{\m{\Omega}}}}. \end{split}\label{formu_equiv6}}
In \emph{Case 3} where the noise variance is fixed,
\equ{\begin{split} &\text{(\ref{formu:GLS3_inc})}\\
&= \min_{\m{z}}2L\atomn{\frac{\twon{\m{y}_{\m{\Omega}}}}{\sqrt{L}}\m{z}} +\frac{\twon{\m{y}_{\m{\Omega}}}^2}{\sigma_0}\twon{\m{y}_{\m{\Omega}}-\m{z}_{\m{\Omega}}}^2+L\sigma_0\\
&= \frac{2\twon{\m{y}_{\m{\Omega}}}^2}{\sigma_0} \lbra{\min_{\m{z}_{\m{\Omega}}}\frac{\sqrt{L}}{\twon{\m{y}_{\m{\Omega}}}}\sigma_0 \atomni{\m{z}_{\m{\Omega}}} +\frac{1}{2}\twon{\m{y}_{\m{\Omega}}-\m{z}_{\m{\Omega}}}^2}\\
&\quad+L\sigma_0. \end{split}\label{formu_equiv8}}
The problems in (\ref{formu:equivl1_inc}), (\ref{formu:equivl2_inc}) and (\ref{formu:equivl2s_inc}) are convex and can be formulated as SDPs following from (\ref{formu:SDP_AN_inc}). The relation between their optimal solutions of $\m{u}$ and the corresponding GLS formulations can be easily identified from (\ref{formu_equiv2}).
\end{proof}

\begin{rem}~ \label{rem:equivalence}
\begin{enumerate}
 \item Under homoscedastic noise, the SDP of GLS can be simplified according to Remark \ref{rem:Rtoeplitz}. With this simplified representation, the GLS optimization problem is equivalent to computing $\atomni{\m{y}_{\m{\Omega}}}$.
 \item In the limiting noiseless case, GLS is identical to the atomic norm method except that a postprocessing procedure is adopted in GLS for ensuring unique parameter estimate in the case where $T\sbra{\m{u}^*}$ has full rank.
\end{enumerate} \label{rem:equivalence}
\end{rem}

Theorem \ref{thm:equiv1_inc} shows that GLS can be interpreted as three different AND methods under different assumptions of noise. Under  heteroscedastic noise, GLS is in the form of GL-LAD-Lasso which tends to suppress significant noise entries and be robust to outliers by the use of the $\ell_1$ norm in data fitting. Under homoscedastic noise, GLS is in the form of GL-SR-Lasso in which all the noise entries are considered in whole and only the noise energy is reflected. Finally, under homoscedastic noise with fixed variance, GLS is in the form of GL-Lasso or AST and the noise variance is reflected in the regularization parameter.
Moreover, it is worth noting that, since SPICE and L1ND are grid-based versions of GLS and AND, similar equivalence exists between them. Part of the result has been shown in \cite{rojas2013note,babu2014connection}.

\begin{rem} We can now formally show that GLS is equivalent to and provides a practical implementation of the limiting scenario of SPICE. Denote by $\text{SPICE}_N^*$ and $\text{GLS}^*$ the optimal objective function values of SPICE (with a uniform grid of size $N$) and GLS, respectively. By inserting (\ref{formu:Natomnvsatomn}) into the optimization problems in Theorem \ref{thm:equiv1_inc} we obtain that
\equ{\sbra{1-\frac{\pi \overline{M}}{N}}\text{SPICE}_N^* \leq \text{GLS}^* \leq \text{SPICE}_N^*. }
It therefore holds that $\lim_{N\rightarrow +\infty}\text{SPICE}_N^*= \text{GLS}^*$.
\end{rem}

\subsection{Implications of the Equivalence}
\subsubsection{Overfitting under homoscedastic noise} Under homoscedastic noise, the GLS optimization problem is equivalent to computing $\atomni{\m{y}_{\m{\Omega}}}$ by Remark \ref{rem:equivalence}. That is, GLS carries out the optimization as in the noiseless case and results in overfitting, indicating necessity of model order selection and modified frequency estimation discussed in Section \ref{sec:framework}. Note that the parameter estimation process of GLS given in Section \ref{sec:GLS} slightly alleviates the overfitting problem.

\subsubsection{Suboptimal power estimation}  \label{sec:subopt}
By comparing the frequency retrieval processes of GLS and its equivalent AND formulations in Theorem \ref{thm:equiv1_inc} (see Sections \ref{sec:GLS_com} and \ref{sec:AST} respectively), one may find that the power estimate $\widehat{\m{p}}$ of GLS is inherently the amplitude $\abs{\widehat{\m{s}}}$ (scaled by $\frac{\twon{\m{y}_{\m{\Omega}}}}{\sqrt{L}}$) rather than its square, the power. Similarly, $\widehat{\m{\sigma}}_{\m{\Omega}}$ of GLS estimates standard deviation of the noise (scaled by $\twon{\m{y}_{\m{\Omega}}}$) by derivations of (\ref{formu_equiv3}) and (\ref{formu_equiv6}) instead of the variance. Similar arguments have been made in \cite{stoica2011new,stoica2011spice}. These claims are verified by numerical simulations in Section \ref{sec:simulation}.


\begin{rem} Utilizing the aforementioned interpretation of $\widehat{\sigma}$ (when $\sigma$ is estimated from the data), the regularization constant $\frac{\sqrt{L}}{\twon{\m{y}_{\m{\Omega}}}}\sigma_0$ in (\ref{formu:equivl2s_inc}) will be modified into $\frac{\sqrt{L}}{\twon{\m{y}_{\m{\Omega}}}}\times \twon{\m{y}_{\m{\Omega}}}\sigma_0^{\frac{1}{2}}=\sqrt{L}\sigma_0^{\frac{1}{2}}$. It is interesting to note that this constant, without any distribution assumed for the noise, is close to the optimized value $\approx\sqrt{L\ln \overline{M}}\sigma_0^{\frac{1}{2}}$ given in Theorem \ref{thm:AST_inc} for Gaussian noise.
\end{rem}


\section{Computationally Feasible Solutions} \label{sec:ADMM}

\subsection{Exact Solutions via Duality}
We solve GLS in the elegant AND forms in Theorem \ref{thm:equiv1_inc}. Using the standard SDP solver SDPT3 \cite{toh1999sdpt3}, we empirically find that faster speed can be achieved by solving their dual problems. Meanwhile, the solutions of the primal problems are given for free. Consider GL-LAD-Lasso in (\ref{formu:equivl1_inc}) as an example. By Lemma \ref{thm:SDP_AN_inc} it can be written into the following SDP:
\equ{\min_{x,\m{u}, \m{z}} \tau\sbra{x+ u_1} + \onen{\m{y}_{\m{\Omega}} - \m{z}_{\m{\Omega}}}, \st \begin{bmatrix}x& \m{z}^H \\ \m{z} & T\sbra{\m{u}} \end{bmatrix}\geq\m{0}, \label{formu:SDP_AN_inc2}}
where $\tau=\frac{\sqrt{L}}{2}$. Its dual problem is given by the following SDP following a standard Lagrangian analysis \cite{boyd2004convex}:
\equ{\min_{\m{v},\m{W}} 2\Re\lbra{\m{y}_{\m{\Omega}}^H \m{v}_{\m{\Omega}}}, \st \left\{\begin{array}{l}\begin{bmatrix}\tau & \m{v}^H \\ \m{v} & \m{W} \end{bmatrix}\geq\m{0},\\ \m{v}_{\overline{\m{\Omega}}}=\m{0},\\ \inftyn{\m{v}_{\m{\Omega}}}\leq\frac{1}{2},\\ T^*\sbra{\m{W}}=\tau\m{e}_1,\end{array}\right. \label{formu:dualGLLADLasso}}
where $T^*\sbra{\cdot}$ denotes the adjoint operator of $T\sbra{\cdot}$ and $\m{e}_1=\mbra{1,0,\dots,0}^T\in\bR^M$.

\subsection{Exact Solutions via ADMM}
SDPT3 implements the interior point method and does not scale well with the problem dimension. In this subsection we present a first-order algorithm for the SDPs involved in this paper based on ADMM which is a well-established method for large scale problems \cite{boyd2011distributed}. We provide only the algorithm for GL-LAD-Lasso in (\ref{formu:SDP_AN_inc2}) for brevity while those for AND and GL-SR-Lasso can be derived similarly. (\ref{formu:SDP_AN_inc2}) can be written into the following form:
\equ{\begin{split}
&\min_{x,\m{u}, \m{z}, \m{Q}\geq\m{0}} \tau\sbra{x+ u_1} + \onen{\m{y}_{\m{\Omega}} - \m{z}_{\m{\Omega}}},\\
&\quad\st \m{Q} = \begin{bmatrix}x& \m{z}^H \\ \m{z} & T\sbra{\m{u}} \end{bmatrix}. \end{split} \label{formu:SDP_AN_inc3}}
We introduce $\m{\Lambda}$ as the Lagrangian multiplier. Then the augmented Lagrange function of (\ref{formu:SDP_AN_inc3}) is
\equ{\begin{split}
&\cL_A\sbra{x,\m{u}, \m{z}, \m{Q}, \m{\Lambda}} \\
=& \tau\sbra{x+ u_1} + \onen{\m{y}_{\m{\Omega}} - \m{z}_{\m{\Omega}}} + \tr\mbra{\sbra{ \m{Q} - \begin{bmatrix}x& \m{z}^H \\ \m{z} & T\sbra{\m{u}} \end{bmatrix} }\m{\Lambda}}\\
&+ \frac{\beta}{2}\frobn{\m{Q} - \begin{bmatrix}x& \m{z}^H \\ \m{z} & T\sbra{\m{u}} \end{bmatrix}}^2,
\end{split}}
where $\beta>0$ is a penalty parameter set according to \cite{boyd2011distributed}.
Following the routine of ADMM, the variables $\sbra{x,\m{u}, \m{z}, \m{Q}}$ and $\m{\Lambda}$ can be iteratively updated in close forms. We omit the details due to the page limit. The resulting algorithm converges to the optimal solution of (\ref{formu:SDP_AN_inc2}) by \cite{boyd2011distributed}.

\subsection{Approximate Solutions via Frequency Discretization} \label{sec:discretization} 
The ADMM algorithm is more scalable with the problem dimension compared to SDPT3, however, it still needs to carry out one eigen-decomposition of a matrix of order $M+1$ at each iteration which is computationally expensive when $M$ is large. Since we have shown previously that the grid-based $\ell_1$ techniques (including SPICE) are good approximations of AST and GLS if the grid size $N=O\sbra{M}$, they are reasonable substitutions for which many computationally efficient algorithms have been developed such as SPGL1 and ONE-L1 \cite{van2008probing,yang2011orthonormal}. But we note that the $\ell_1$ techniques might suffer from some shortcomings for line spectral estimation with a highly dense grid: 1) almost complete correlations between adjacent atoms typically cause computational issues such as slow convergence and even computational instability, and 2) basis mismatches cause frequency splitting, resulting in underestimation of the power and less accurate frequency estimate.


\section{Numerical Simulations} \label{sec:simulation}

\subsection{Equivalence Between GLS and AND}
We first verify the equivalence between GLS and AND. Suppose that we observe $L=30$ randomly located samples of $M=50$ consecutive measurements of a sinusoidal signal composed of $K=3$ sinusoids with true parameters $\m{f}=\mbra{0.1, 0.12, 0.5}^T$ and $\m{p}=\mbra{9,4,1}^T$. The measurements are contaminated with i.i.d. Gaussian noise with $\sigma=0.1$. The first two frequencies are separated by $\frac{1}{M}=0.02$. We carry out line spectral estimation using both GLS and its equivalent AND forms in Theorem \ref{thm:equiv1_inc} under the assumption of heteroscedastic or homoscedastic noise. We plot in Fig. \ref{Fig:equiv} power estimates of GLS scaled by $\frac{\sqrt{L}}{\twon{\m{y}_{\m{\Omega}}}}$ and amplitude estimates of AND versus their frequency estimates, respectively. As concluded in Theorem \ref{thm:equiv1_inc}, we see that under both the assumptions they obtain identical results. The results under the two assumptions differ slightly from each other. Motivated by this observation, we consider only the assumption of heteroscedastic noise in the rest simulations for clearer presentation.

\begin{figure}
\centering
  \includegraphics[width=3in]{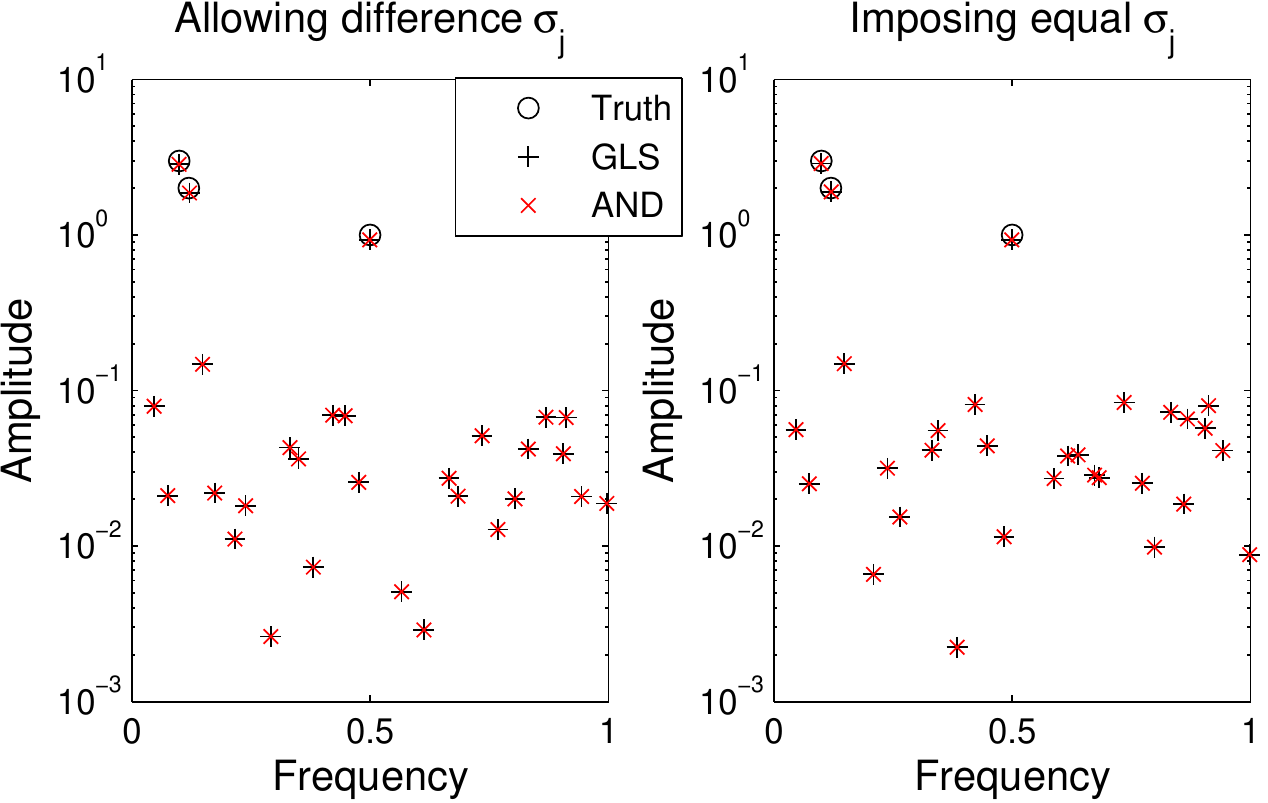}
\centering
\caption{Equivalence between GLS and AND under the assumption of (left) heteroscedastic or (right) homoscedastic noise. The amplitude estimates of GLS are scaled by $\frac{\sqrt{L}}{\twon{\m{y}_{\m{\Omega}}}}$ for better illustration.} \label{Fig:equiv}
\end{figure}

\subsection{Spectra Comparison of Gridless and Grid-based Methods}
We compare spectra of the gridless methods presented in this paper and the grid-based $\ell_1$ and SPICE methods. In our simulation, we set $M=100$, $L=50$, $K=3$, the true frequencies $\m{f}=\mbra{0.103, 0.115, 0.5}^T$ and corresponding powers $\m{p}=\mbra{4,4,1}^T$, and variance of i.i.d. Gaussian noise $\sigma=1$. AST and GLS are implemented using both SDPT3 and ADMM introduced in Section \ref{sec:ADMM}.  We consider two implementations of SPICE: one as in \cite{stoica2011new} and the other by solving its equivalent L1ND version implemented by SPGL1 \cite{van2008probing}. SPICE in \cite{stoica2011new} is considered converged if the objective function value decreases relatively by less than $10^{-6}$ between two consecutive iterations, or the maximum number of iterations, set to 2000, is reached. The Matlab code of SPGL1 is downloaded at http://www.cs.ubc.ca/$\sim$mpf/spgl1, and we use all default parameter settings but $\text{decTol} = 10^{-6}$ and maximum number of iterations 10000. Moreover, we consider two discretization levels for SPICE with $N=5M$ and $N=10M$. Note that the amplitude estimate of the SPICE algorithm in \cite{stoica2011new} suffers from a constant-factor ambiguity. The first two frequencies are located off the grid in SPICE with $N=5M$ and on the grid with $N=10M$ while the last frequency lies on the grid in both the cases.

Fig. \ref{Fig:comp_spectra} presents our simulation results over 5 Monte Carlo runs, where the spectra of AST and GLS solved by SDPT3 are omitted since they are visually identical to those by ADMM. All of GLS and its grid-based versions correctly detect the three components with some small spurious ones, while AST removes almost all of the spurious components with the oracle noise variance. By taking into account the constant-factor ambiguity of SPICE, the results of SPICE and SPGL1 differ slightly from each other. By comparing (b), (d) and (f), we see that SPICE tends to underestimate the power due to basis mismatches and slow convergence of SPGL1, which becomes more significant in the presence of off-grid frequencies and/or a denser grid. In computational time, AST and GLS with SDPT3 take 5.19s and 4.92s on average. The presented six methods (a)-(f) take 0.97s, 2.21s, 0.11s, 2.29s, 0.23s and 3.99s, respectively. We see that SPICE is the fastest in this example. The ADMM-based gridless methods introduced in this paper are faster than SPGL1 which converges slowly due to almost complete correlations between adjacent atoms.


\begin{figure*}
\centering
  \subfigure[AST by ADMM]{
    \label{Fig:spectrum_AST-ADMM}
    \includegraphics[width=2.3in]{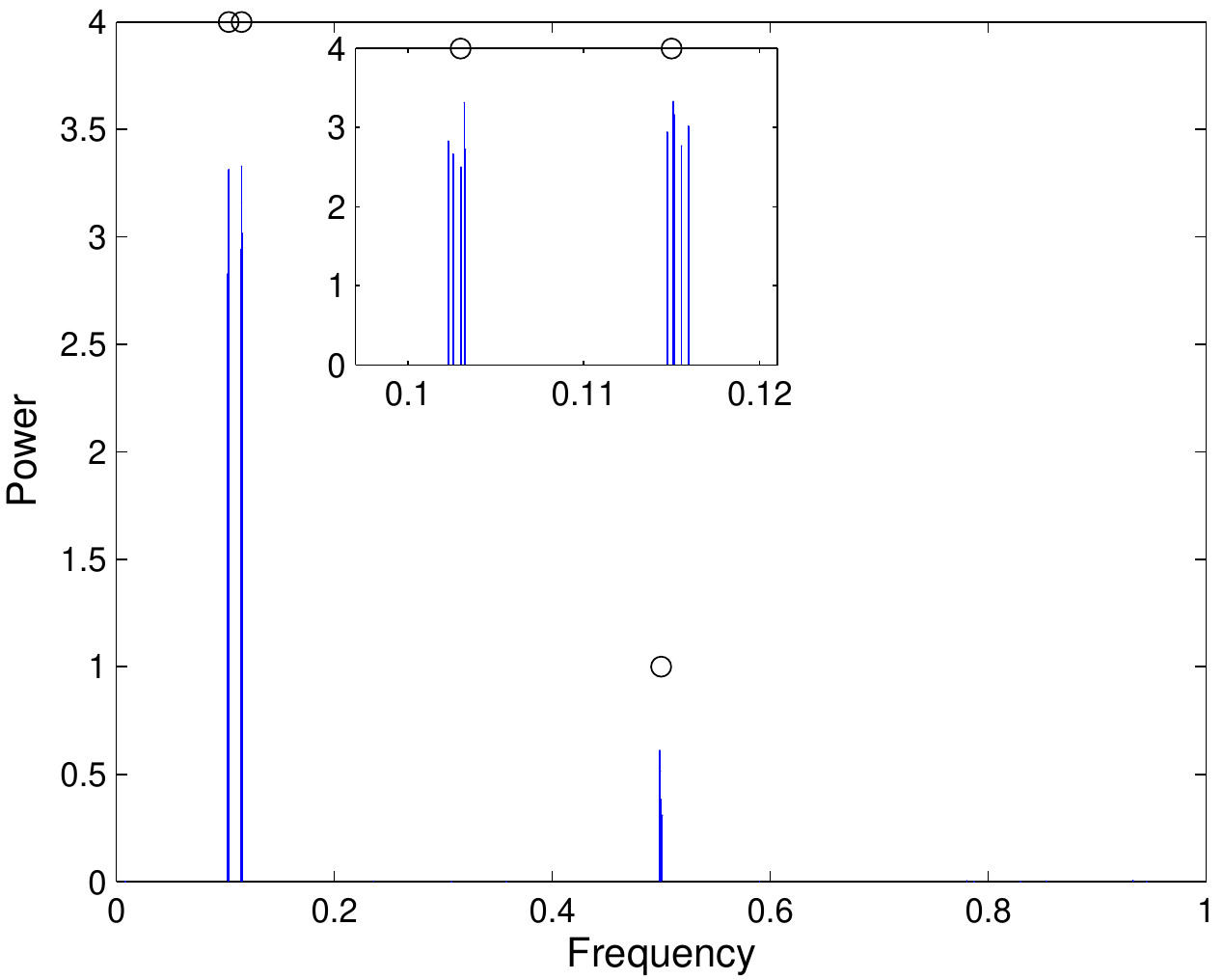}} %
  \subfigure[GLS by ADMM]{
    \label{Fig:spectrum_GLS-ADMM}
    \includegraphics[width=2.3in]{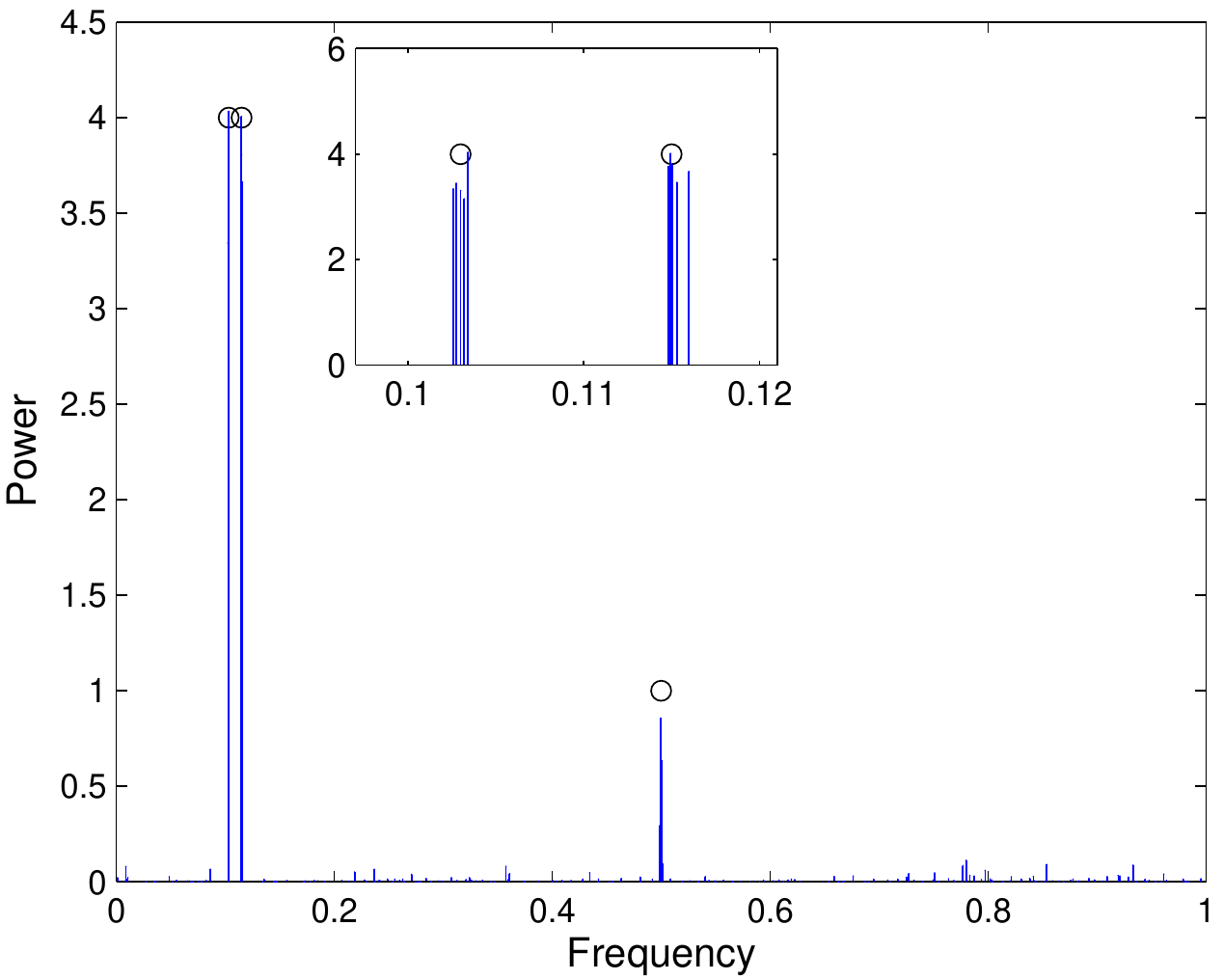}} %
  \subfigure[SPICE \cite{stoica2011new}, $N=5M$]{
    \label{Fig:spectrum_SPICE1}
    \includegraphics[width=2.3in]{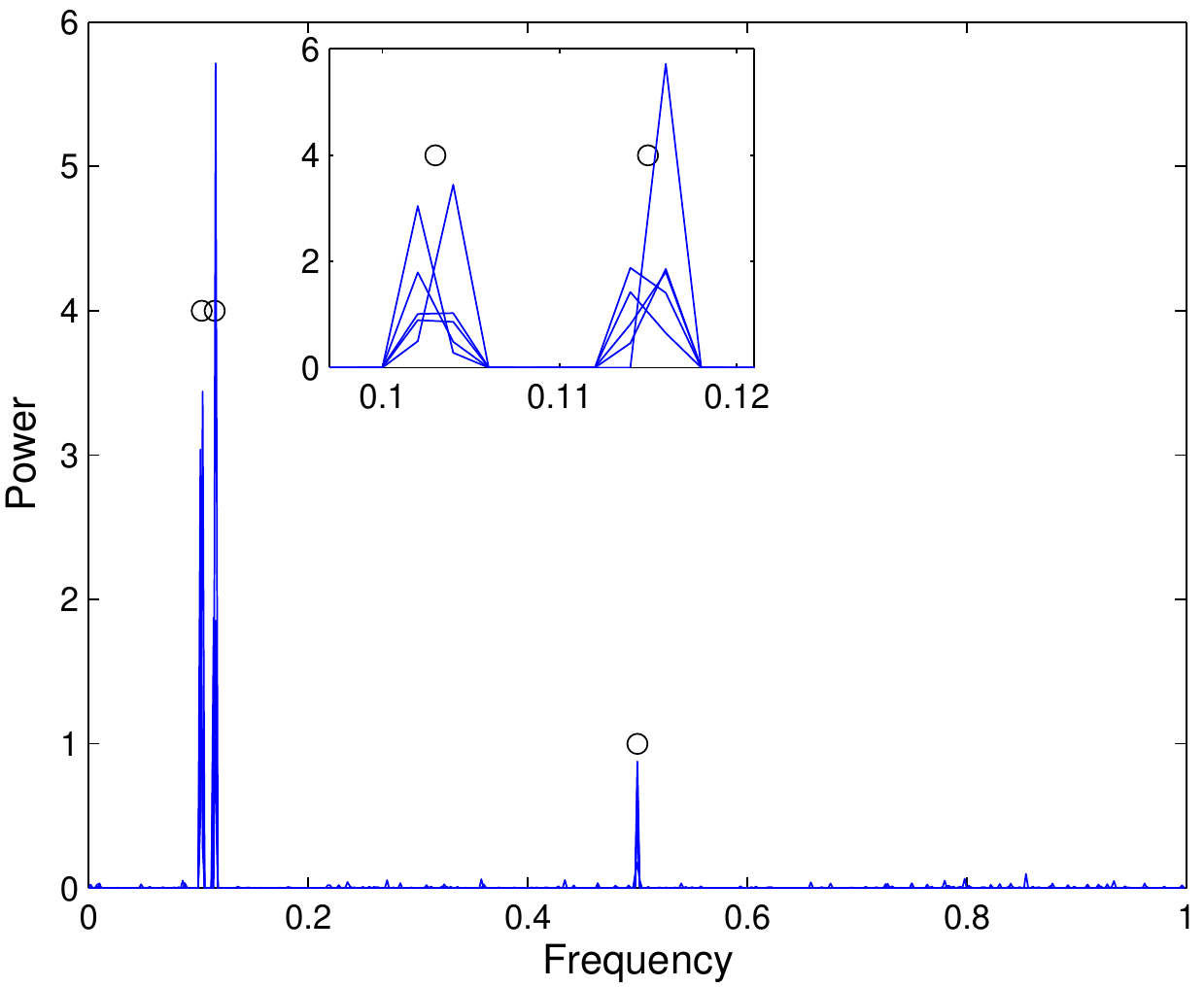}}
  \subfigure[SPICE by SPGL1, $N=5M$]{
    \label{Fig:spectrum_spg1}
    \includegraphics[width=2.3in]{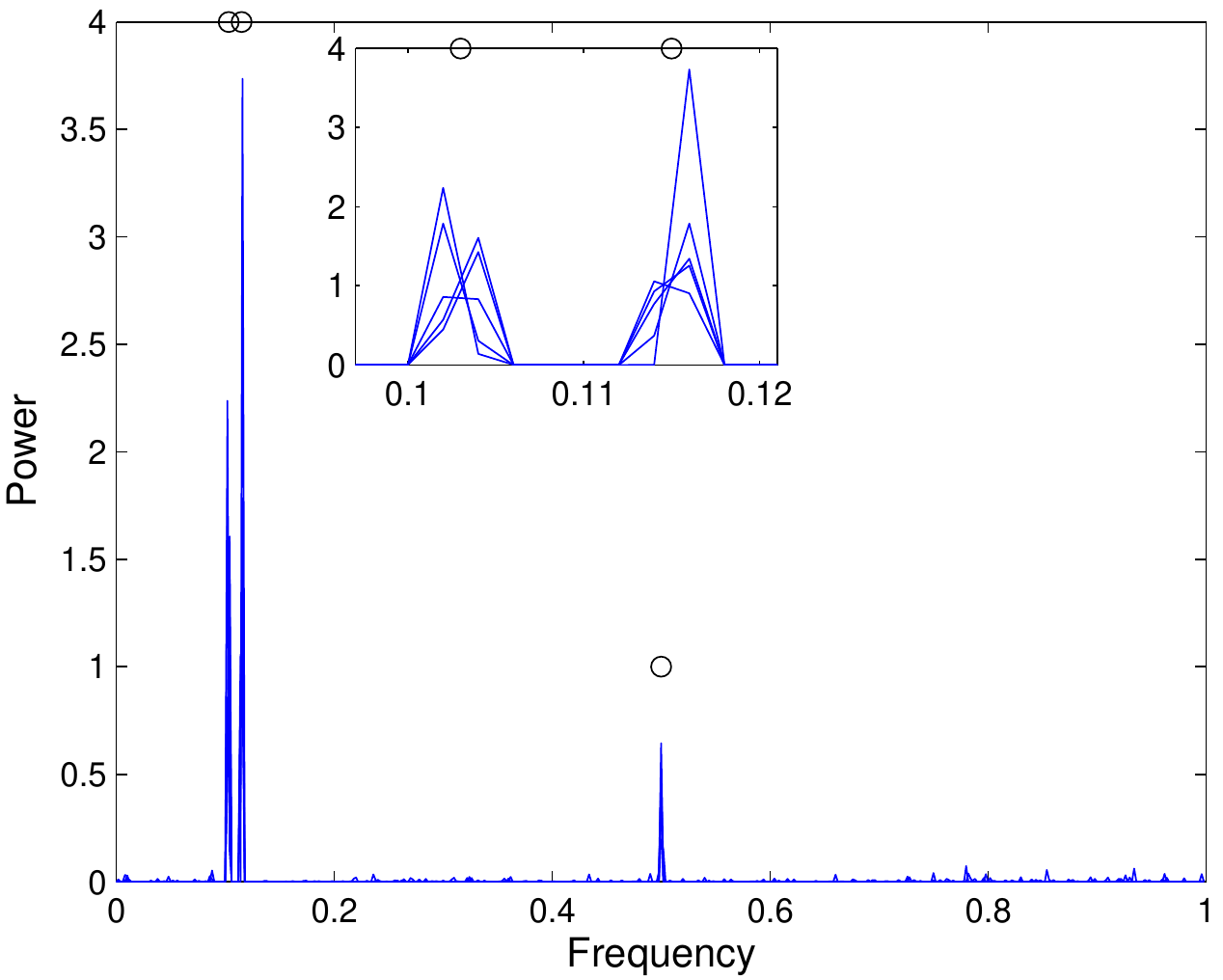}}%
  \subfigure[SPICE \cite{stoica2011new}, $N=10M$]{
    \label{Fig:spectrum_SPICE2}
    \includegraphics[width=2.3in]{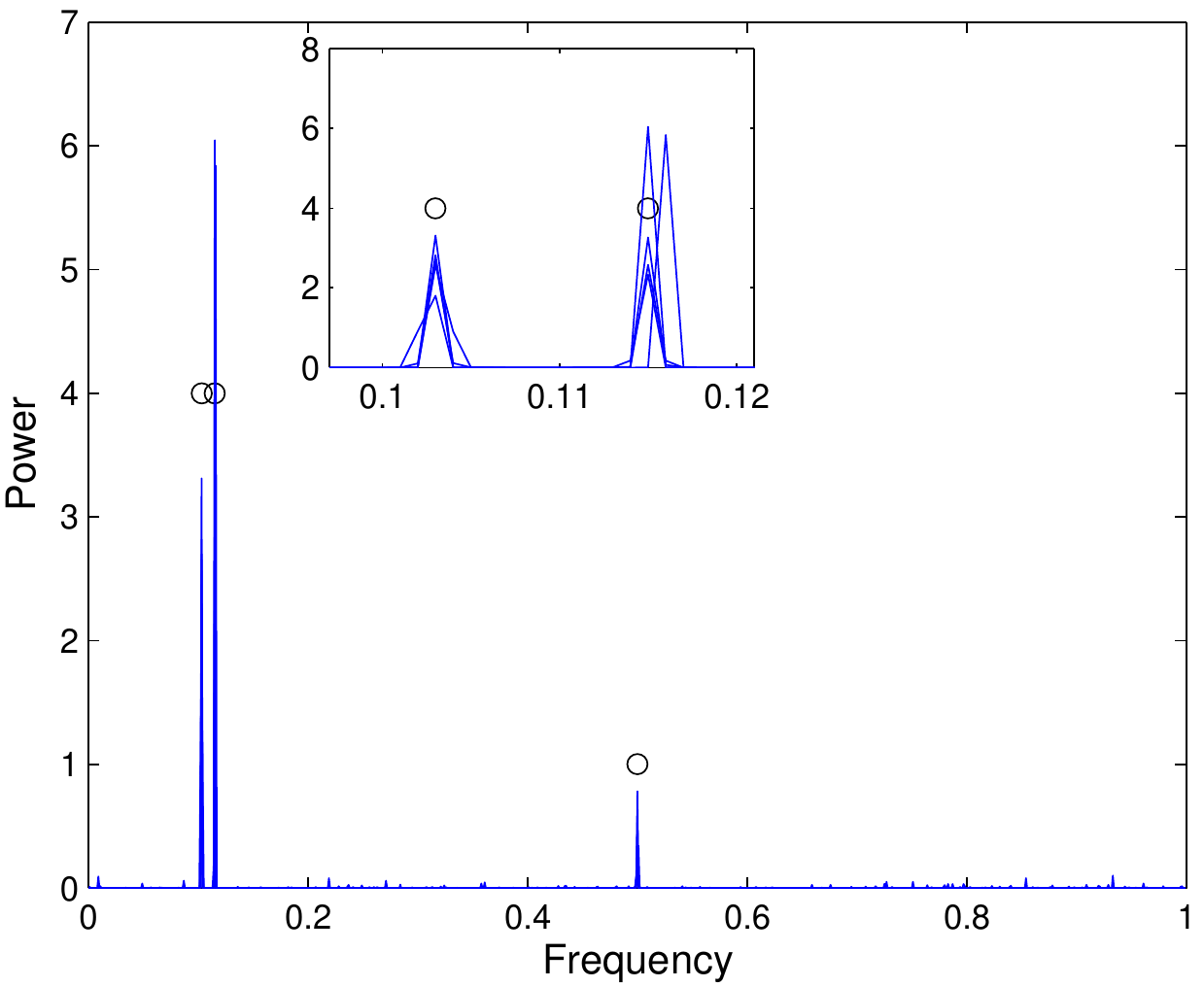}}%
  \subfigure[SPICE by SPGL1, $N=10M$]{
    \label{Fig:spectrum_spg2}
    \includegraphics[width=2.3in]{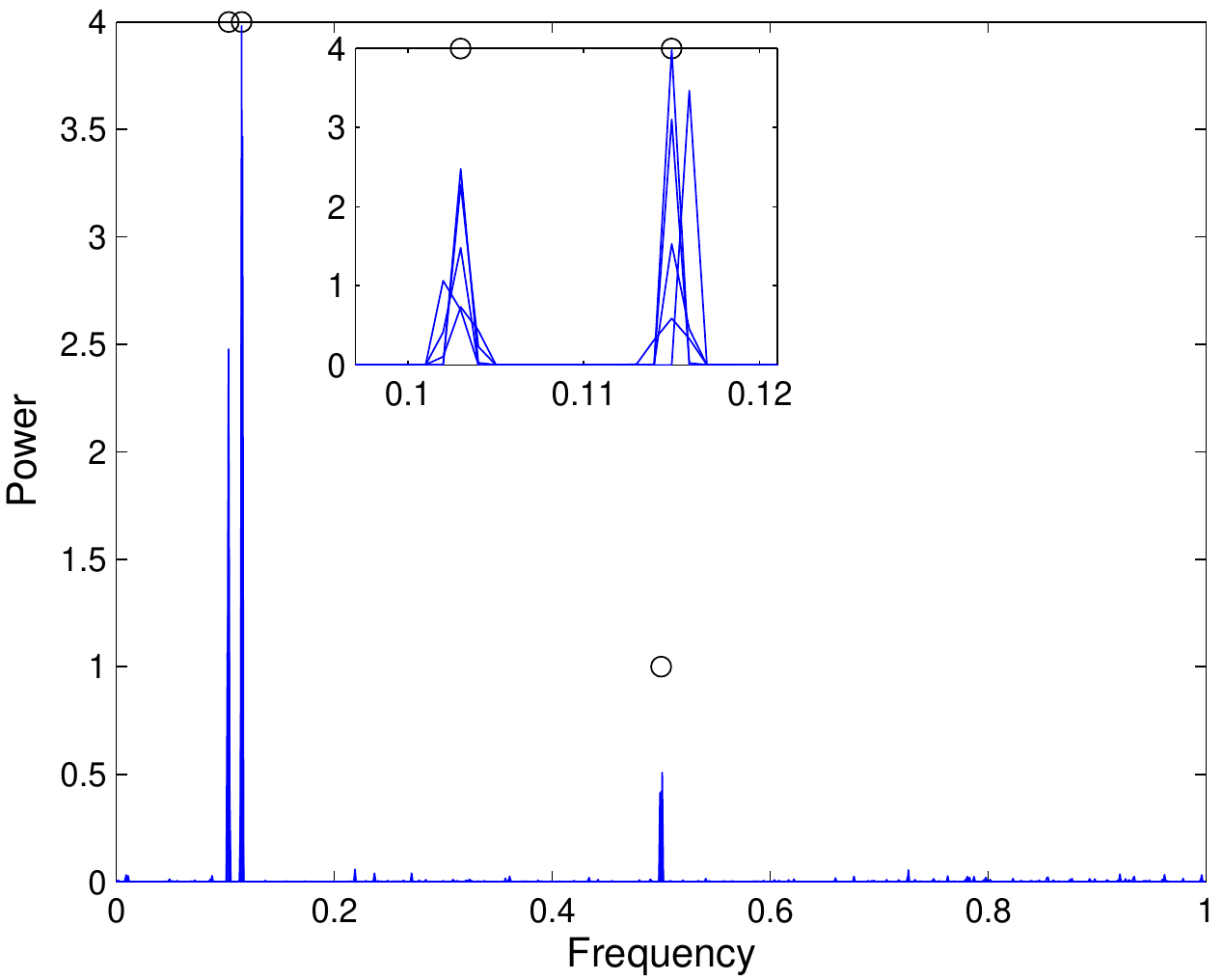}}
\centering
\caption{Power spectra of six methods/algorithms for line spectral estimation over 5 random Monte Carlo runs. Black circles indicate the ground truth with frequencies $\m{f}=\mbra{0.103, 0.115, 0.5}^T$ and powers $\m{p}=\mbra{4,4,1}^T$. The area around the first two frequency components is zoomed in in each subfigure. Other settings include $M=100$, $L=50$, and variance of i.i.d. Gaussian noise $\sigma=1$.} \label{Fig:comp_spectra}
\end{figure*}

\subsection{Model Order Selection and Frequency Estimation} \label{sec:simu_modelorder}
We examine performance of GLS, compared to its grid-based versions, in model order selection and frequency estimation in this subsection. In our simulation, we set $M=100$, $L=50$, $K=3$ and $\m{p}=\mbra{4,4,1}^T$ as before but generate the $K$ frequencies uniformly at random in intervals $(0.102,0.104)$, $(0.114, 0.116)$ and $(0.499, 0.501)$, respectively. Moreover, we define the signal-to-noise ratio $\text{SNR}=-10\log_{10}\sigma$ (i.e., with respect to the smallest component), and consider values of SNR from $-20$ to $20$dB at a step of $2$dB. A number of $100$ Monte Carlo runs are carried out at each value. We quantify the accuracy of frequency estimation of GLS, SPGL1 and SPICE in terms of MSE without and within the framework proposed in Section \ref{sec:framework}, respectively. In the former case, the frequency estimate of SPICE (and SPGL1) is given by the highest $K$ peaks to calculate the MSE while it corresponds to the largest $K$ components for GLS (outliers can be caused to GLS due to frequency splitting as illustrated in Fig. \ref{Fig:frequencysplit}). In the latter case, we use the Matlab routine \texttt{rootmusic} for frequency estimation. To quantify the performances of model order selection and frequency estimation independently, we use the exact model order $K$ in \texttt{rootmusic} rather than the one given by SORTE. More rationales behind this setting will be clarified later.


Our estimation results of the model order (using SORTE) and the $K$ frequencies are presented in Fig. \ref{Fig:varySNR}. Model order selection is considered successful if the estimated model order equals the true value. Fig. \ref{Fig:varySNR_modelorder} shows that the performance of GLS is very convincing in model order selection as the SNR is 0dB or above. In fact, only 2 failures occur out of all the 1100 trials when $\text{SNR}\geq0$dB (a careful study shows that both the failures overestimate the order by 1). SPICE and SPGL1 have similar performances when $N=10M$. When $N=5M$, however, the model order selection becomes less accurate in the high SNR regime, where the approximation errors of grid-based methods become non-negligible compared to noise. A careful recheck shows that SPICE and SPGL1 tend to overestimate the model order by 1.

The MSEs of frequency estimates, which are calculated using the true model order, are presented in Figs. \ref{Fig:varySNR_MSEf_N5M} and \ref{Fig:varySNR_MSEf_N10M}. Each MSE curve (except GLS) is divided into two parts at some critical SNR value, above which all the frequencies are accurately estimated. For SPICE and SPGL1, the MSEs without the framework are lower bounded by $\frac{1}{12N^2}$ (the horizontal black dashed lines) in expectation since the best frequency estimate is the nearest grid point \cite{yang2013off}. In contrast, GLS can outperform the lower bound but can be subject to 1 or 2 outliers (out of 100 runs) caused by frequency splitting as discussed in Section \ref{sec:limits} (see the points of discontinuity in the GLS MSE curve in Fig. \ref{Fig:varySNR_MSEf_N5M}). Within the proposed framework, GLS can stably estimate all the frequencies. Moreover, SPICE and SPGL1 can also outperform the aforementioned lower bound within the framework. Remarkably, their MSE curves coincide with that of GLS for $N=5M$. The results are similar for $N=10M$, however, convergence issues arise with this dense grid which, as shown in Fig. \ref{Fig:varySNR_MSEf_N10M}, cause an outlier to SPICE at $\text{SNR}=0$dB and worse performance to SPGL1-MUSIC in the high SNR regime.

We have used the exact model order $K$ in frequency estimation. In fact, it is shown in Fig. \ref{Fig:varySNR_modelorder} that the model order can be accurately estimated within the framework in the presence of modest or light noise (e.g., when $\text{SNR}\geq 0$dB). Otherwise, in the presence of heavy noise (e.g., when $\text{SNR}<-2$dB), it is shown in Figs. \ref{Fig:varySNR_MSEf_N5M} and \ref{Fig:varySNR_MSEf_N10M} that the frequencies cannot be reliably estimated even with the oracle $K$.

\begin{figure*}
\centering
  \subfigure[Model order selection]{
    \label{Fig:varySNR_modelorder}
    \includegraphics[width=2.3in]{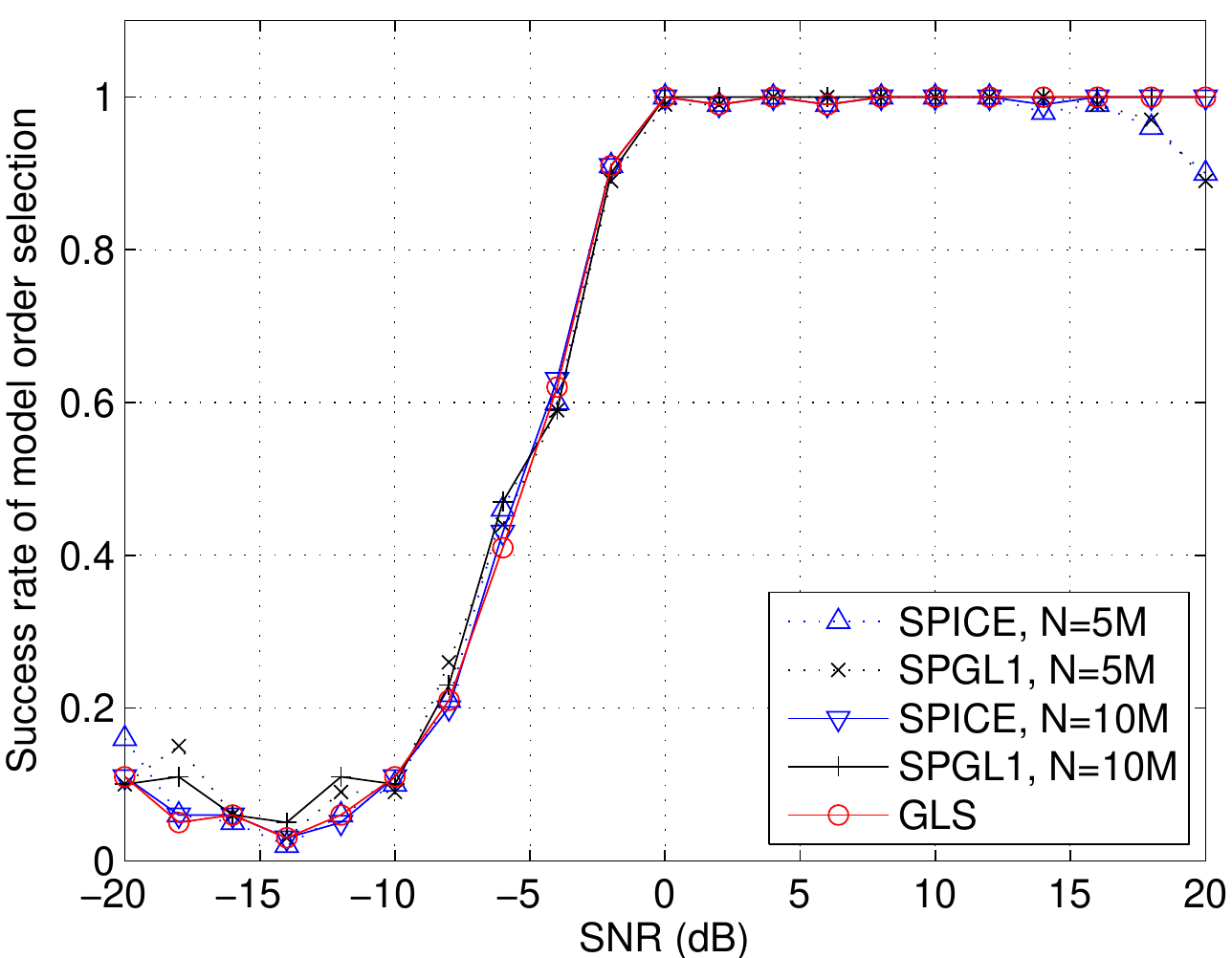}} %
  \subfigure[Frequency estimation, $N=5M$]{
    \label{Fig:varySNR_MSEf_N5M}
    \includegraphics[width=2.3in]{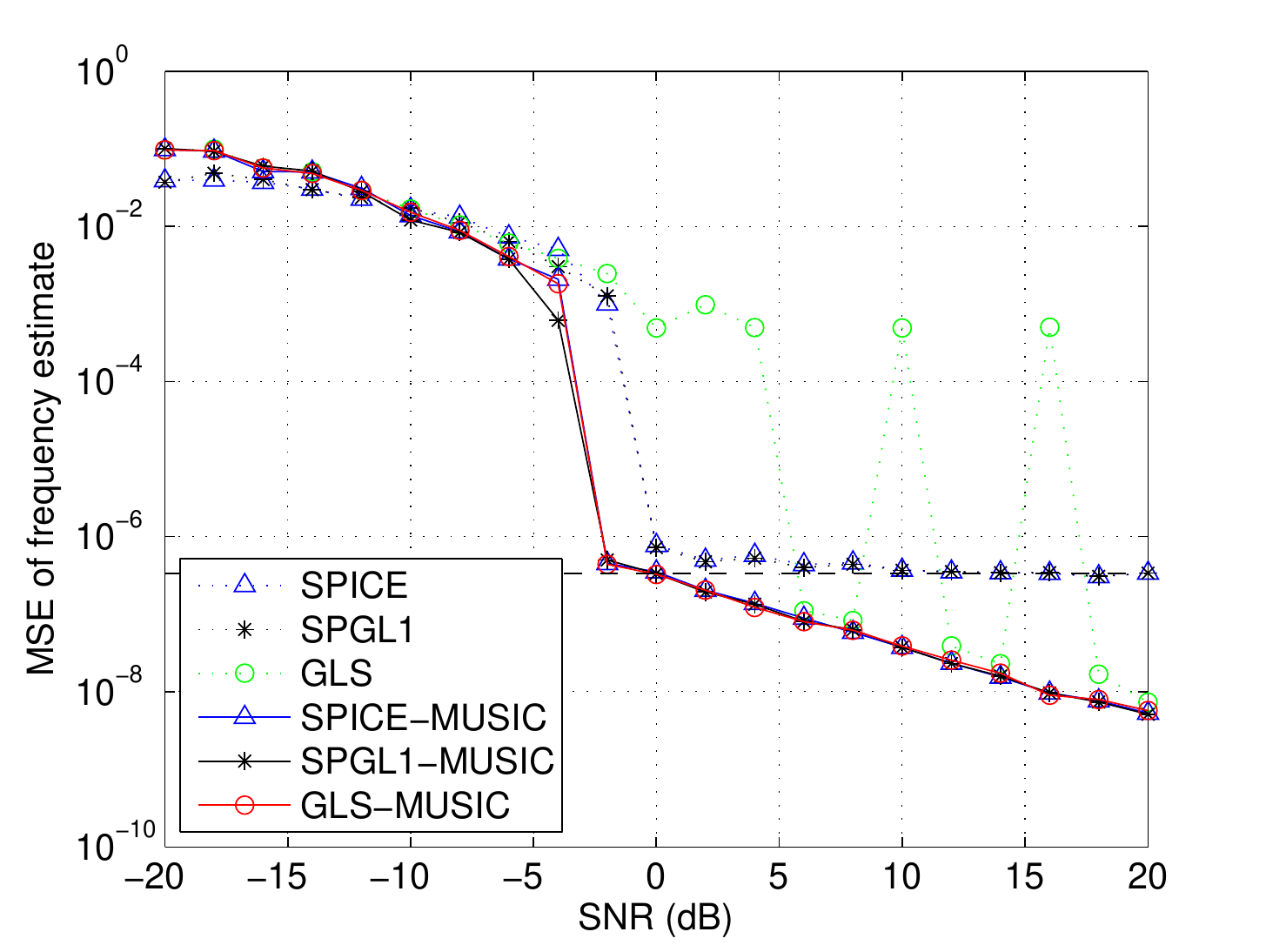}} %
  \subfigure[Frequency estimation, $N=10M$]{
    \label{Fig:varySNR_MSEf_N10M}
    \includegraphics[width=2.3in]{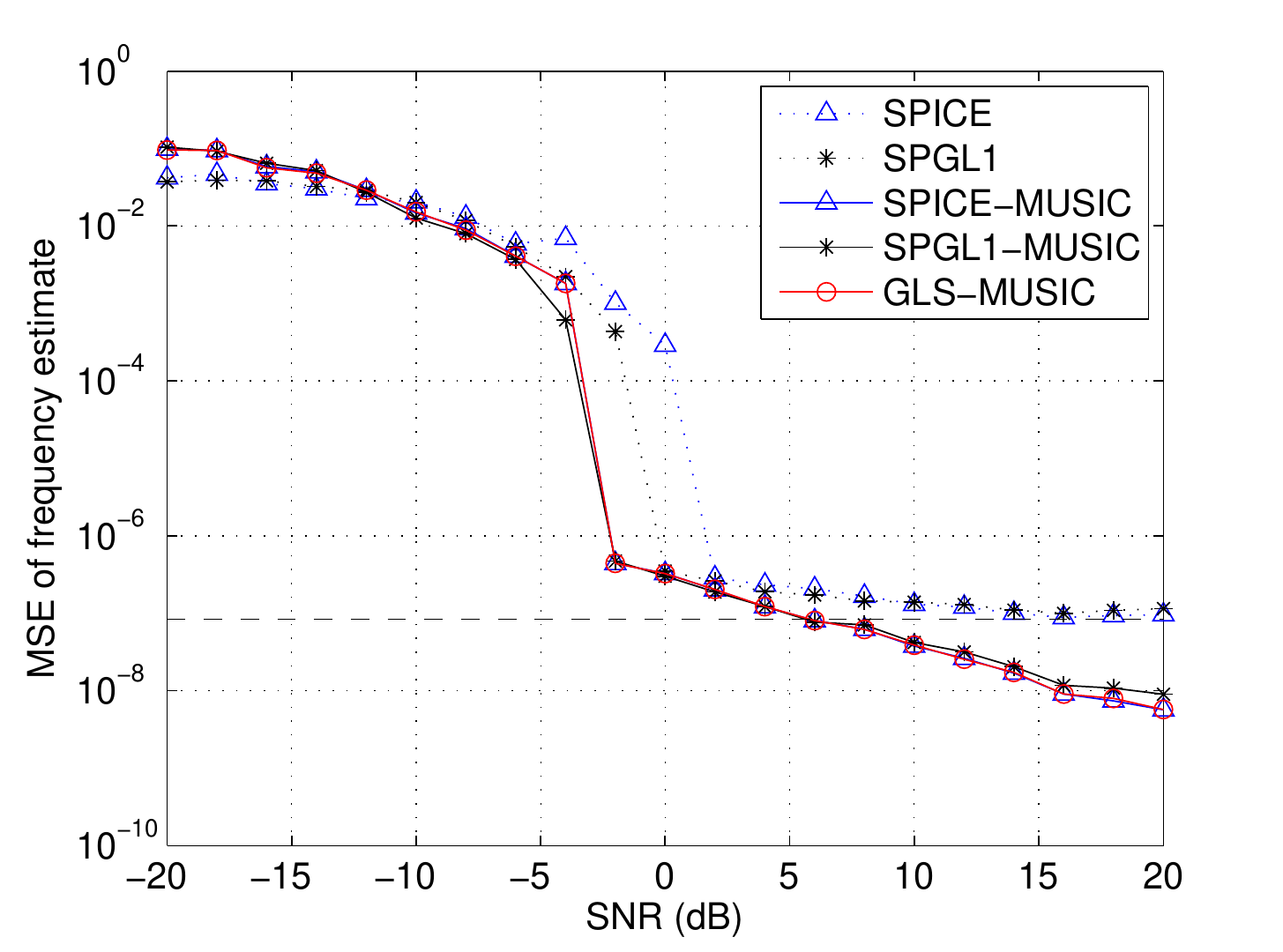}}
\centering
\caption{Results of model order selection (using SORTE) and frequency estimation with respect to SNR. The black dashed lines in (b) and (c) denote the lower bound $\frac{1}{12N^2}$.} \label{Fig:varySNR}
\end{figure*}

We study scalability of GLS in the following simulation. Besides GLS, SPICE and SPGL1, we consider another popular grid-based method named IAA. Differently from the sparse methods, IAA is based on a weighted least squares criterion. Since IAA does not explicitly optimize an objective function, a rigorous convergence analysis of IAA has not been available. In our setup, we proportionally increase the problem dimension. In particular, we let $M=50\kappa$, $L=30\kappa$ and $K=2\kappa$, and consider $\kappa=1,\dots,10$. Moreover, we randomly generate the frequencies with the minimum separation $\Delta_f\geq \frac{1}{M}$ and each power parameter as $1+w^2$, where $w$ is standard normal distributed. We fix the variance of i.i.d. Gaussian noise $\sigma=1$. We set the grid size $N=10M$ for SPICE, SPGL1 and IAA. 40 problems are generated and solved for each value of $\kappa$. The averaged computational times of the four methods are presented in Fig. \ref{Fig:varyDim_time}. Indeed, GLS is most time-consuming when the problem dimension increases due to the eigen-decomposition at each iteration. As a first-order method, the computational time of SPGL1 increases most slowly with the dimension. The performances in model order selection and frequency estimation are presented in Figs. \ref{Fig:varyDim_modelorder} and \ref{Fig:varyDim_MSEf_N5M}. It is shown that GLS and its grid-based versions SPICE and SPGL1 can accurately estimate the model order with only a few failures, and also perform well in frequency estimation within the proposed framework. In contrast, more failures of model order selection happen for IAA. Note that the seemingly bad performance of IAA-MUSIC in frequency estimation when $M\geq250$ is caused by very few outliers (similarly for SPICE and SPGL1 at some values of $M$).


\begin{figure*}
\centering
  \subfigure[Computational time]{
    \label{Fig:varyDim_time}
    \includegraphics[width=2.3in]{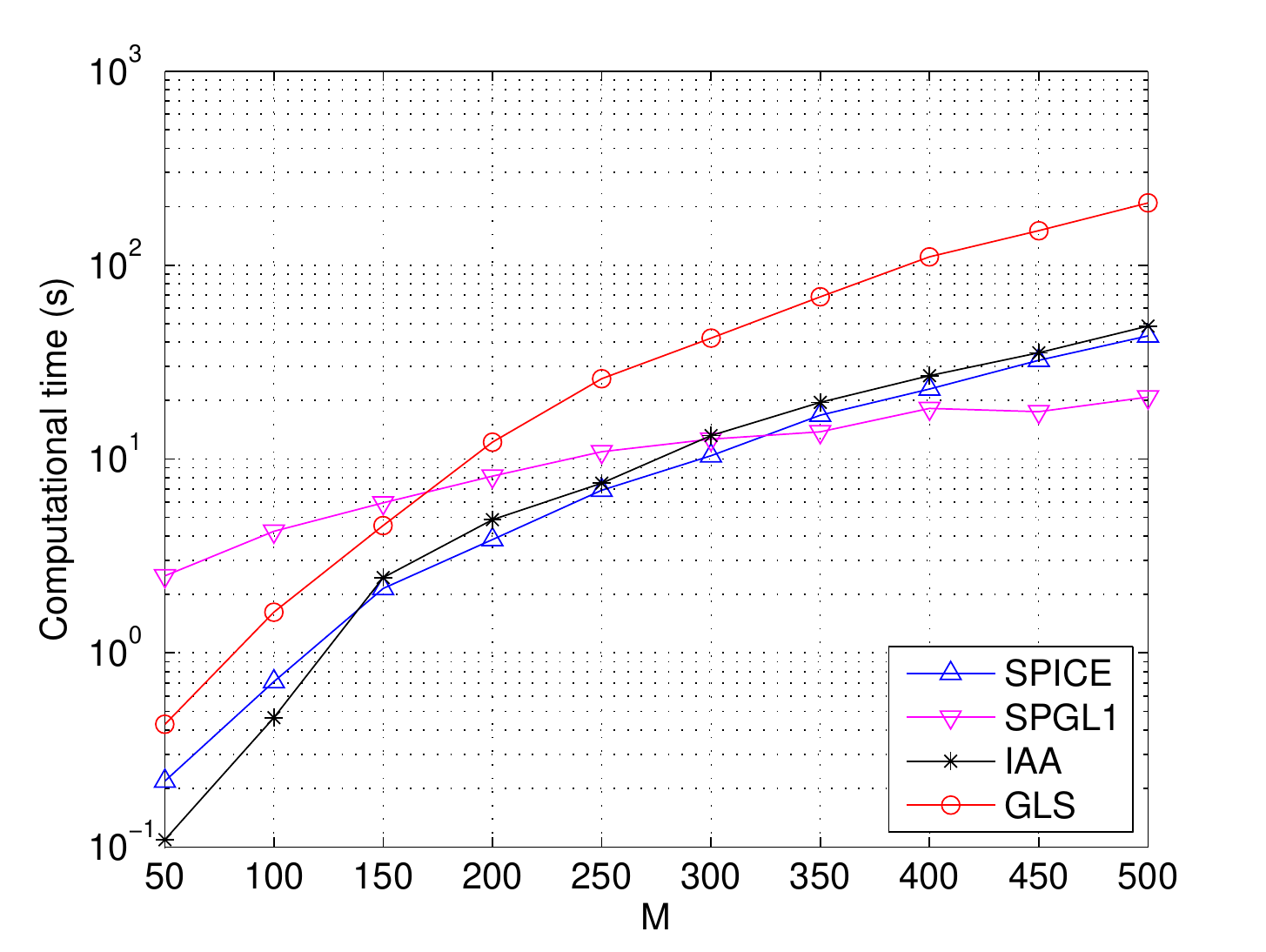}} %
  \subfigure[Model order selection]{
    \label{Fig:varyDim_modelorder}
    \includegraphics[width=2.3in]{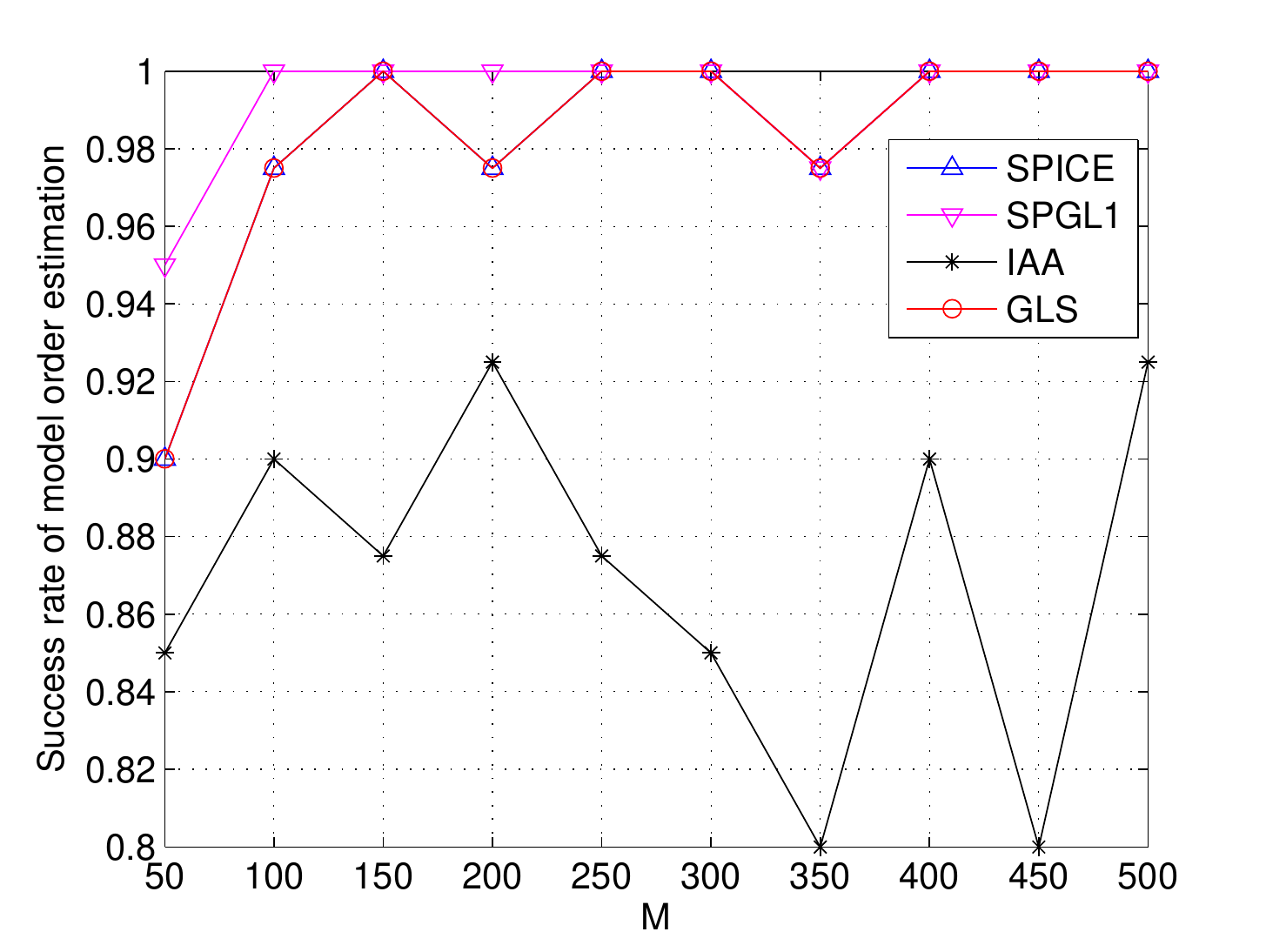}} %
  \subfigure[Frequency estimation]{
    \label{Fig:varyDim_MSEf_N5M}
    \includegraphics[width=2.3in]{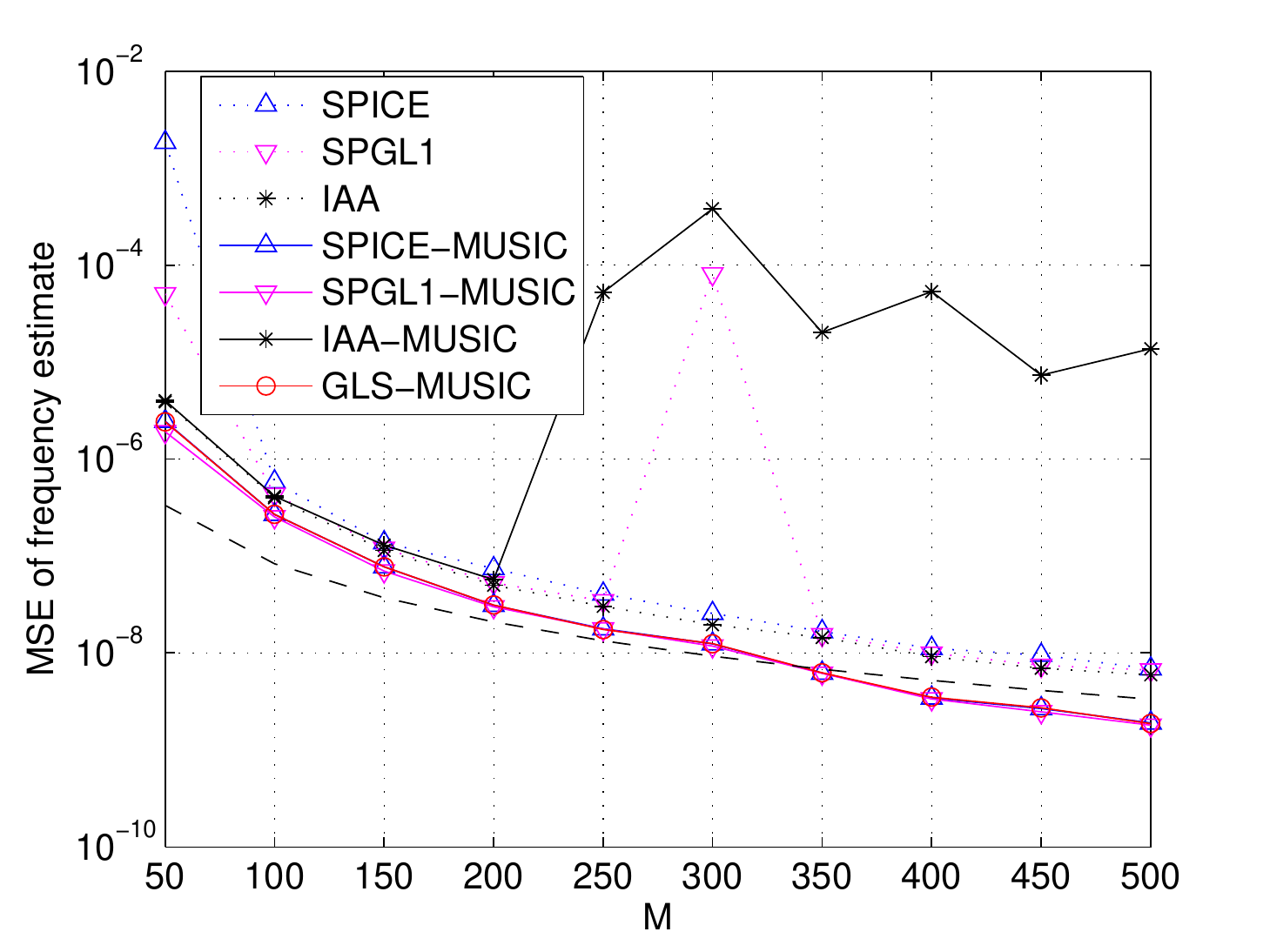}}
\centering
\caption{Computational time, model order selection (using SORTE) and frequency estimation with respect to problem dimension. The black dashed line in (c) denotes the lower bound $\frac{1}{12N^2}$, where $N=10M$ is used for SPICE, SPGL1 and IAA.} \label{Fig:varyDim}
\end{figure*}

We further compare the proposed model order selection method with conventional information-theoretic methods and show that the practical use of the latter is limited by the difficulty in solving the NLS problems. To do this, we measure the success rate of solving the NLS at the true value of $K$. We adopt the same experimental setup as in the previous simulation. An efficient algorithm for NLS is the expectation-majorization (EM) algorithm in \cite{feder1988parameter} in which the complicated multi-parameter NLS problem is decoupled into $K$ separate one-dimensional optimization problems at each iteration which can be efficiently solved by simple line search. The highest $K$ peaks of the periodogram are used to initialize the algorithm according to \cite{stoica2005spectral}. The NLS is considered to be successfully solved if $\inftyn{\widehat{\m{f}}-\m{f}}<\frac{1}{2M}$. Note that the criterion above is not stringent according to Fig. \ref{Fig:varyDim_MSEf_N5M}. Indeed, it is observed that all failures severely violate the criterion. The success rate measured over 100 Monte Carlo runs at each $M$ is presented in Fig. \ref{Fig:NLS}. It is shown that the EM algorithm is more likely to be trapped at a local optimum as the problem dimension increases. At $M=500$ and $K=20$ about one third of the NLS problems cannot be accurately solved. Therefore, we cannot expect that the MDL/AIC/BIC method gives faithful model order selection based on the NLS solution. In contrast, at most a single failure of model order selection out of 40 runs is observed using the proposed method when $M\geq100$ according to Fig. \ref{Fig:varyDim_modelorder}. The averaged computational time of the EM algorithm is also reported in Fig. \ref{Fig:NLS}. Note that a series of NLS problems need to be solved for model order selection, while the computational speed can be accelerated by implementing the $K$ optimization problems at each iteration in parallel.


\begin{figure}
\centering
  \includegraphics[width=2.8in]{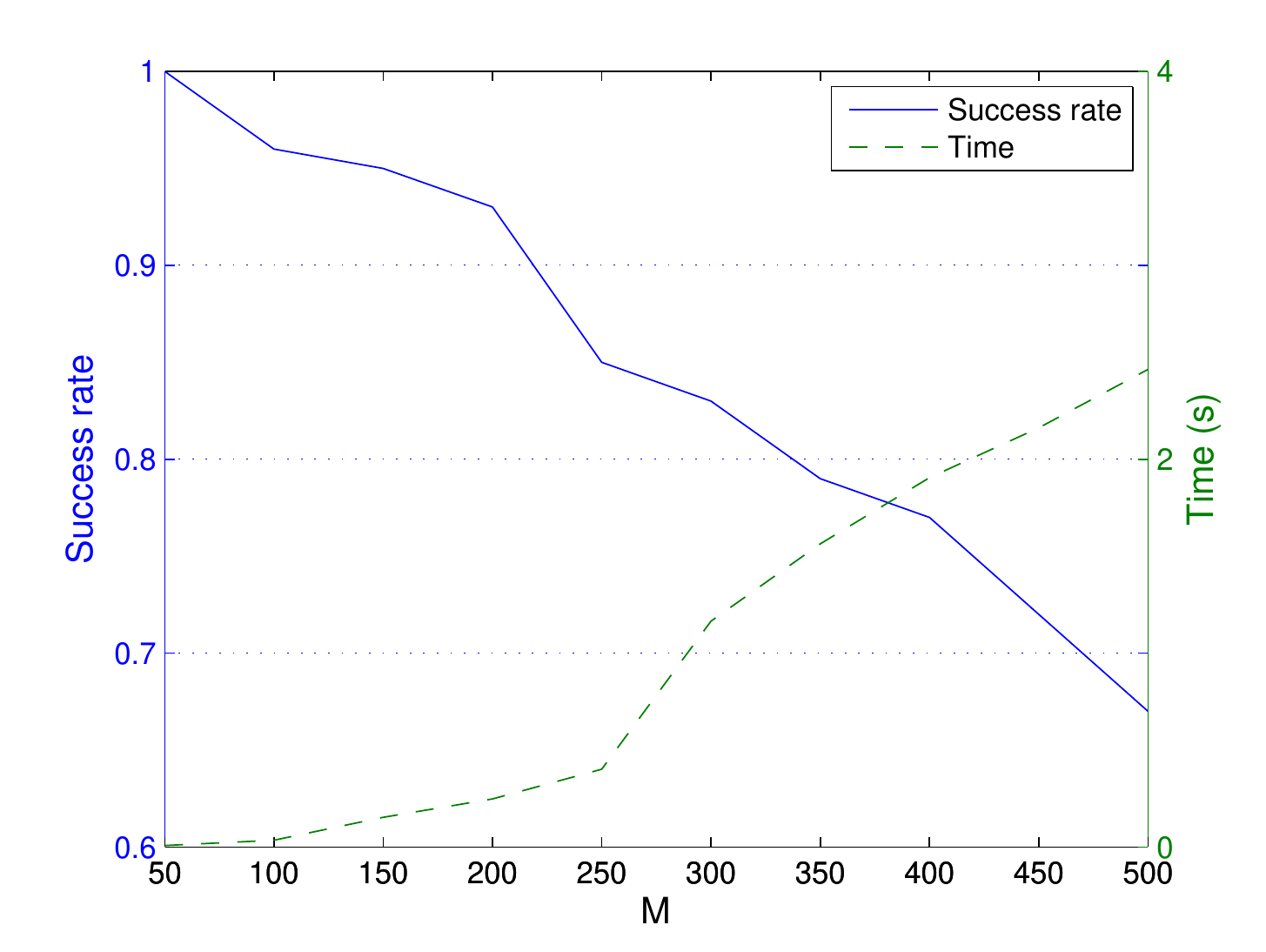}
\centering
\caption{Success rate and computational time of solving NLS using the EM algorithm in \cite{feder1988parameter} with the oracle information of $K$.} \label{Fig:NLS}
\end{figure}

To sum up, the proposed GLS and framework have good performances in both model order selection and frequency estimation in the presence of modest or light noise. Its grid-based versions SPICE and SPGL1 are generally good approximations with accelerated computations but might overestimate the model order with a less dense grid and suffer from convergence issues with a highly dense grid. Compared to conventional information-theoretic model order selection methods, the proposed method is of more practical interest.

\subsection{Resolution}

While the frequencies are separated by at least $\frac{1}{M}$ in previous simulations, we next study the capability of GLS in super-resolving two closely spaced frequencies compared to SPICE with $N=10M$. We fix $K=2$, $M=100$, $L=50$ and $\text{SNR}=10$dB, and vary the separation between the two frequencies, denoted by $\Delta_f$, from $\frac{0.1}{M}$ to $\frac{1}{M}$. Based on GLS (or SPICE), we estimate the two frequencies in three ways. In the first method, we simply select the largest two components of GLS (or the largest two peaks of SPICE). In the second, MUSIC is carried out after GLS (or SPICE) with the $K$ information. That is, the oracle information of $K$ is utilized in the first two methods. The last practical method strictly follows the proposed framework in which the model order used in MUSIC is given by SORTE. The two frequencies are considered to be successfully resolved if the absolute estimation error of every frequency is smaller than $\frac{1}{2}\Delta_f$. We measure the success rates of GLS and SPICE at each $\Delta_f$ over 100 Monte Carlo runs and present the results in Fig. \ref{Fig:resolution}. It is shown that both GLS and SPICE can super-resolve two closely spaced frequencies. GLS outperforms SPICE with the first method, especially at small values of $\Delta_f$ where gridding of the frequency interval exhibits more obvious drawbacks. Their performances are almost the same with the latter two methods since they produce slightly different covariance estimates. The proposed framework has good performance and its gap to the first two methods (with the oracle $K$) diminishes as the separation increases.

\begin{figure}
\centering
  \includegraphics[width=3in]{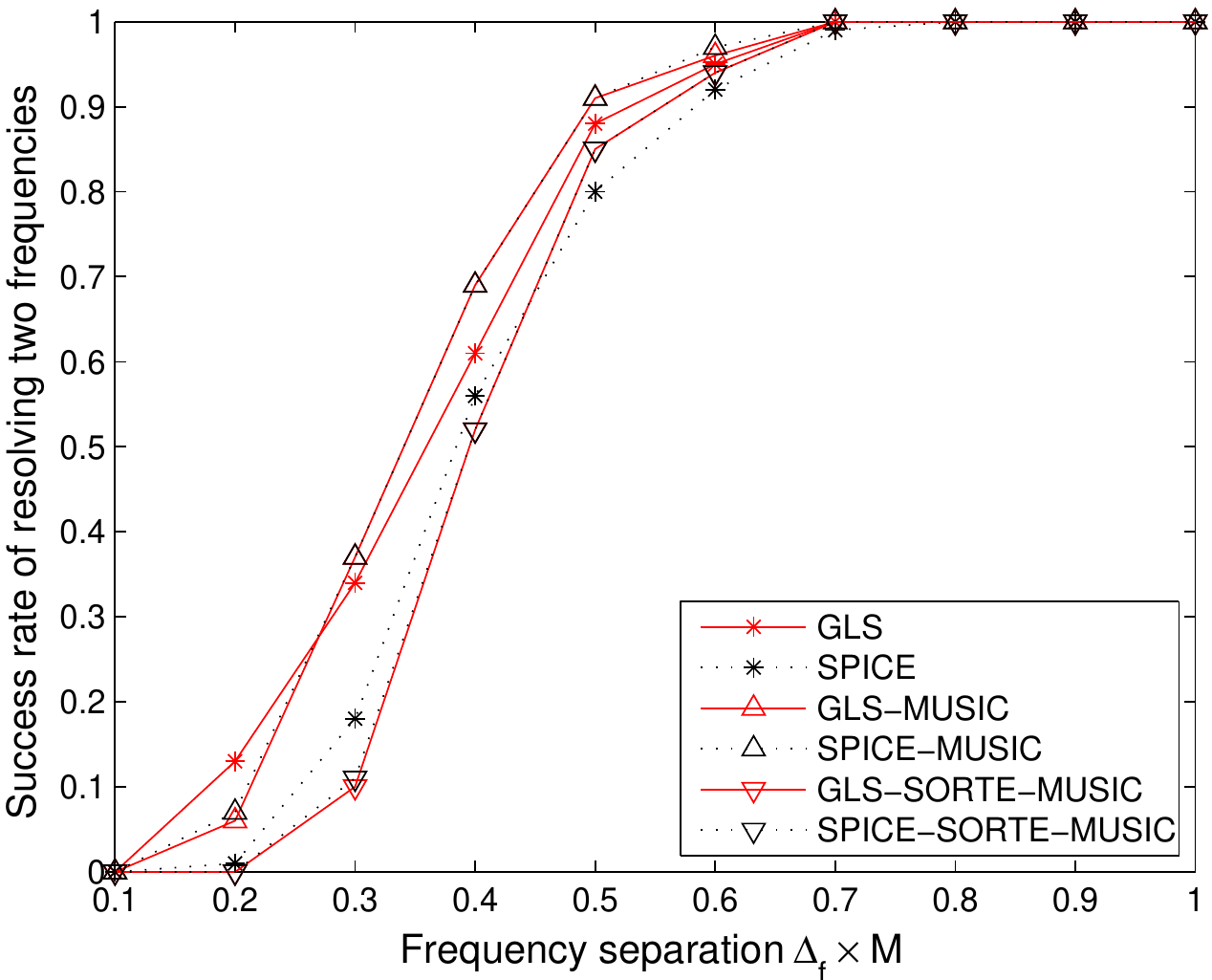}
\centering
\caption{Success rates of GLS and SPICE in resolving two frequencies separated by $\Delta_f$, with $K=2$, $M=100$, $L=50$ and $\text{SNR} = 10$dB.} \label{Fig:resolution}
\end{figure}

\section{Conclusion} \label{sec:conclusion}
The sparse, continuous frequency estimation problem was studied in this paper under the topic of line spectral estimation. Two gridless sparse methods were studied including the atomic norm based AST and weighted covariance fitting based GLS. Theoretical analysis of AST generalizes the existing result in the complete data case. GLS requires neither the model order nor the noise variance but might suffer from some limitations. A systematic framework consisting of model order selection and robust frequency estimation was proposed to overcome the limitations. Both AST and GLS were formulated as convex atomic norm denoising problems with practical algorithms proposed. Their performances were demonstrated on simulated data and compared to existing methods.

The first-order ADMM-based algorithms proposed for the gridless sparse methods are slow compared to existing grid-based ones since they converge slowly and need to carry out an eigen-decomposition at each iteration. A future work is to develop faster solvers for the SDPs involved in this paper. Inspired by a recent paper \cite{hsieh2011sparse} which shows that second-order solvers can be faster due to their fast convergence speed, we may turn to second-order algorithms in future studies. On the other hand, the framework extended from GLS is also applicable to its grid-based versions including SPICE and $\ell_1$ optimization, with satisfactory performances demonstrated in this paper. So, before emergence of very efficient solvers of GLS, its grid-based versions can be adopted as faster alternatives within the framework. Furthermore, it would be interesting to extend the proposed framework to other sparse parameter estimation problems in such as source localization and radar imaging.

\section*{Acknowledgment}
The authors are grateful to the reviewers for helpful comments which improve the content of this paper. Thanks to Petre Stoica and Dave Zachariah for sharing their code of IAA.

\appendix

\subsection{Vandermonde Decomposition and Its Realization} \label{sec:retrieval}
The Vandermonde decomposition is stated in the following lemma and its proof can be found in \cite{grenander1958toeplitz,stoica2005spectral}.
\begin{lem} Any positive semidefinite Toeplitz matrix $T\sbra{\m{u}}\in\bC^{M\times M}$ can be represented as
$T\sbra{\m{u}}=\m{A}\sbra{\m{f}}\m{P}\m{A}^H\sbra{\m{f}}$,
where $\m{A}\sbra{\m{f}}=\mbra{\m{a}\sbra{f_1},\cdots,\m{a}\sbra{f_K}}$, $\m{P}=\diag\sbra{p_1,\cdots,p_K}$, $f_j\in\left[0,1\right)$, $p_j>0$ for $j\in[K]$, and $K=\rank\sbra{T\sbra{\m{u}}}$. Moreover, the representation is unique if $K\leq M-1$. $\qed$ \label{lem:toeplitz}
\end{lem}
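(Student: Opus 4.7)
The plan is to split the proof into an existence argument and a uniqueness argument, both hinging on the link between $\ker T\sbra{\m{u}}$ and polynomials vanishing on the unit circle $\bS^1$.

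For existence, I would first treat the rank-deficient case $K<M$. Associate to any $\m{v}\in\bC^M$ the polynomial $V(z)=\sum_{k=0}^{M-1}\bar{v}_k z^k$. The observation driving everything is that if a representation $T\sbra{\m{u}}=\sum_j p_j\m{a}\sbra{f_j}\m{a}^H\sbra{f_j}$ with $p_j>0$ exists, then $\m{v}^H T\sbra{\m{u}}\m{v}=\sum_j p_j\abs{V(e^{i2\pi f_j})}^2$, so every $\m{v}\in\ker T\sbra{\m{u}}$ yields a $V$ vanishing at each $z_j=e^{i2\pi f_j}$. I would therefore \emph{construct} the frequencies as unit-circle roots of a distinguished kernel element: pick $\m{v}$ so that $V$ is monic of minimal degree $d$ inside $\ker T\sbra{\m{u}}$, and show that (i) $d=K$, and (ii) all $d$ roots of $V$ lie on $\bS^1$. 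Labelling these roots $e^{i2\pi f_1},\dots,e^{i2\pi f_K}$, the candidate powers $p_j$ follow by matching the first column of $T\sbra{\m{u}}$, which is a $K\times K$ Vandermonde linear system in the $p_j$'s; their positivity can be read off from evaluating $\m{w}^H T\sbra{\m{u}}\m{w}\geq 0$ at test vectors $\m{w}$ that isolate a single frequency. The full-rank case $K=M$ reduces to the previous one by a rank-drop trick: subtract $\epsilon\,\m{a}\sbra{f_0}\m{a}^H\sbra{f_0}$ for an arbitrary $f_0$ and the smallest $\epsilon>0$ making $T\sbra{\m{u}}-\epsilon\,\m{a}\sbra{f_0}\m{a}^H\sbra{f_0}$ singular, apply the rank-deficient case to the difference, and then adjoin $f_0$ with weight $\epsilon$ as an additional atom.

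For uniqueness when $K\leq M-1$, the key observation is that $\ker T\sbra{\m{u}}$ is an intrinsic object, independent of any particular decomposition. The frequencies in any valid representation must therefore lie in the common zero set on $\bS^1$ of $\lbra{V:\m{v}\in\ker T\sbra{\m{u}}}$. Since polynomials of degree $\leq M-1$ that vanish at $K$ prescribed points form a subspace of dimension exactly $M-K=\dim\ker T\sbra{\m{u}}$, the common zero set is forced to coincide with $\lbra{e^{i2\pi f_j}}_{j=1}^K$, so the frequencies are pinned down uniquely; the powers are then uniquely determined by the same Vandermonde system as in the existence step, which has full column rank since $K\leq M-1$ and the $f_j$ are distinct.

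The main obstacle is step (ii) in the existence argument: showing that the minimal-degree polynomial in $\ker T\sbra{\m{u}}$ has all of its roots on $\bS^1$. Without it the recovered $f_j$ would fail to be real and the entire construction collapses. I would attack this by contradiction: if $V$ had a root $z_0$ with $\abs{z_0}\neq 1$, then exploiting the Toeplitz shift structure of $T\sbra{\m{u}}$ together with the PSD cone property, one can factor out the inner/outer parts of $V$ in a Schur-type manner to exhibit a strictly lower-degree element of $\ker T\sbra{\m{u}}$, contradicting minimality. The degree identity $d=K$ in step (i) would then follow from comparing the dimensions of $\ker T\sbra{\m{u}}$ and of the space of degree-$\leq M-1$ polynomials annihilated by $V$'s root set.
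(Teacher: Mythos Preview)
The paper does not actually prove this lemma: it states the result and defers to Grenander--Szeg\H{o} and Stoica--Moses, treating the Vandermonde (Carath\'eodory--Toeplitz) decomposition as a classical fact. So there is no ``paper's own proof'' to compare against; your proposal is effectively supplying what the paper omits, and the route you outline is the standard Carath\'eodory--Pisarenko argument.

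The outline is sound. The one place that deserves a sharper statement is your step~(ii), where ``Schur-type factorization'' is vague. The clean execution is the root-reflection trick: if $z_0$ is a root of the minimal kernel polynomial $V$ with $\abs{z_0}\neq 1$, replace the factor $(z-z_0)$ by $(1-\bar z_0 z)$ to get $V'$. Since $\abs{e^{i\theta}-z_0}=\abs{1-\bar z_0 e^{i\theta}}$ for all $\theta$, the coefficient vectors $\m{v}$ and $\m{v}'$ share the same autocorrelation sequence, and for any Hermitian Toeplitz $T$ the quadratic form $\m{v}^H T\m{v}$ depends on $\m{v}$ only through that autocorrelation; hence $\m{v}'^H T\sbra{\m{u}}\m{v}'=0$, and by positive semidefiniteness $\m{v}'\in\ker T\sbra{\m{u}}$. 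Then $\bar z_0 V + V' = (1-\abs{z_0}^2)\,W$ with $\deg W<\deg V$ and $W$ again a kernel polynomial, contradicting minimality. This is presumably what you intend, but it is worth writing out because it avoids any appeal to the Herglotz integral representation, which would otherwise make the argument circular. Your full-rank reduction is fine once you note that $\m{a}\sbra{f_0}\m{a}^H\sbra{f_0}$ is itself Toeplitz (so the difference stays Toeplitz) and that the rank drops by exactly one at the critical $\epsilon$; your uniqueness argument via the common zero set of the kernel polynomials is correct as stated.
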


Lemma \ref{lem:toeplitz} states the existence and uniqueness of the Vandermonde decomposition of $T\sbra{\m{u}}\in\bC^{M\times M}$ provided that it is positive semidefinite and rank-deficient. We introduce a systematic method as follows to solve the parameters $\m{f}$ and $\m{p}\succ\m{0}$ satisfying that $T\sbra{\m{u}}=\m{A}\sbra{\m{f}}\diag\sbra{\m{p}} \m{A}^H\sbra{\m{f}}$. First, it is easy to show that
\equ{\begin{bmatrix}\m{A}\sbra{\m{f}}\\ \overline{\m{A}}_{\lbra{2,\cdots,M}}\sbra{\m{f}}\end{bmatrix}\m{p}= \begin{bmatrix}\overline{\m{u}} \\ \m{u}_{\lbra{2,\cdots,M}}\end{bmatrix}, \label{formu:linsys}}
where $\overline{\cdot}$ denotes the complex conjugate and $\overline{\m{A}}_{\lbra{2,\cdots,M}}\sbra{\m{f}}$ takes all but the first rows of $\overline{\m{A}}\sbra{\m{f}}$. Let $b_{j-1}=u_{j}$ and $b_{1-j}=\overline{u}_{j}$, $j\in\mbra{M}$ (note that $u_1\in\bR$). Then (\ref{formu:linsys}) can be written exactly as
\equ{b_m = \sum_{k=1}^K p_k\theta_k^m,\quad \theta_k=e^{-i2\pi f_k},\label{formu:b}}
for $1-M\leq m\leq M-1$. This system of equations can be solved using Prony's method (see, e.g., \cite{blu2008sparse}). We provide detailed procedures for completeness of this paper. Define the so-called annihilating filter with $z$-transform
\equ{H\sbra{z} = \sum_{k=0}^K h_kz^{-k} = \prod_{k=1}^K\sbra{1-\theta_k z^{-1}}, \label{formu:H}}
where $h_k$, $k=0,1,\dots,K$, are the filter coefficients with $h_0=1$. $H\sbra{z}$ is called the annihilating filter since it can be verified by (\ref{formu:b}) and (\ref{formu:H}) that
\equ{\begin{split}h_m \ast b_m
&=\sum_{k=0}^K h_kb_{m-k} =\sum_{k=0}^K h_k \sum_{l=1}^K p_l\theta_l^{m-k}\\
&=\sum_{l=1}^K p_l\theta_l^{m} H\sbra{\theta_l} =0. \end{split}\label{formu:hb0}}
Based on any $2K<2M-1$ consecutive values of $b_m$, we can build a linear system of $K$ equations by (\ref{formu:hb0}), from which the coefficients $h_k$, $k\in\mbra{K}$, can be solved. Then $\theta_k=e^{-i2\pi f_k}$, $k\in\mbra{K}$, are obtained as roots of $H(z)$ by (\ref{formu:H}). After that, $p_k$ can be easily solved based on (\ref{formu:b}).

\subsection{Proof of (\ref{formu:upperbound})} \label{sec:proofupperbound}
Our proof is inspired by \cite{bhaskar2013atomic} on the complete data case. By (\ref{formu:datomni}) the dual atomic norm of $\m{w}_{\m{\Omega}}\in\bC^L$ in the missing data case is given by
\equ{\datomni{\m{w}_{\m{\Omega}}}=\sqrt{L\sigma}\sup_{f\in\left[0,1\right)} \abs{W\sbra{e^{i2\pi f}}}, \label{formu:datomicnorm2}}
where
\equ{W\sbra{e^{i2\pi f}} = \frac{1}{\sqrt{L\sigma}} \sum_{m\in\m{\Omega}} w_me^{-i2\pi \sbra{m-1}f}}
is standard Gaussian distributed for any $f$ given that $w_m$ is Gaussian distributed with variance $\sigma$. Without loss of generality, we assume $\Omega_1=1$ and $\Omega_L=\overline{M}$ since the distribution of $\m{w}$ is invariant due to a constant phase change. For convenience, denote $\overline{W}=\sup_{f\in\left[0,1\right)} \abs{W\sbra{e^{i2\pi f}}}$ and similarly define $\overline{W'}$, where $W'$ denotes the derivative of $W$. Similarly to \cite{bhaskar2013atomic}, our proof is based on the following two results.
\begin{lem}[\cite{schaeffer1941inequalities}] Let $q(z)$ be any polynomial of degree $n$ on complex numbers with derivative $q'(z)$. Then,
\equ{\sup_{\abs{z}\leq1} \abs{q'\sbra{z}} \leq n\sup_{\abs{z}\leq1}\abs{q\sbra{z}}.} \label{lem:bernstein}
\end{lem}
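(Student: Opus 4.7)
My plan is a two-stage reduction followed by invocation of the classical Bernstein inequality for trigonometric polynomials. First I would reduce to the boundary of the disk: since $q$ and $q'$ are both polynomials, hence entire, the maximum modulus principle gives
\[ \sup_{|z|\le 1}|q(z)| = \sup_{|z|=1}|q(z)|, \qquad \sup_{|z|\le 1}|q'(z)| = \sup_{|z|=1}|q'(z)|, \]
so it suffices to prove the inequality on the unit circle only.

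Second, I would pass from the complex-polynomial statement to a trigonometric-polynomial statement. Writing $z = e^{i\theta}$ and $T(\theta) := q(e^{i\theta}) = \sum_{k=0}^{n} a_k e^{ik\theta}$ produces a trigonometric polynomial of degree at most $n$, and the chain rule gives $T'(\theta) = i e^{i\theta} q'(e^{i\theta})$, so that $|T(\theta)| = |q(e^{i\theta})|$ and $|T'(\theta)| = |q'(e^{i\theta})|$. The target bound is therefore equivalent to $\|T'\|_\infty \le n \|T\|_\infty$, which is the classical Bernstein inequality for trigonometric polynomials of degree $\le n$.

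For that inequality I would invoke the M.\ Riesz interpolation identity,
\[ T'(\theta_0) = \sum_{k=1}^{2n} \beta_k\, T(\theta_0+\phi_k), \qquad \phi_k = \tfrac{(2k-1)\pi}{2n}, \qquad \beta_k = \tfrac{(-1)^{k+1}}{4n\sin^2(\phi_k/2)}, \]
which I would establish by checking that both sides agree on the spanning set $\{1,\cos j\theta,\sin j\theta\}_{j=1}^n$ (equivalently, by uniqueness of Lagrange-type interpolation in the $(2n+1)$-dimensional space of such polynomials). Granted the formula, the triangle inequality yields $|T'(\theta_0)| \le \|T\|_\infty \sum_{k=1}^{2n}|\beta_k|$, and the closed-form evaluation $\sum_{k=1}^{2n}|\beta_k| = n$ (cross-checked against the extremizer $T(\theta) = \sin(n\theta)$, which attains equality) completes the bound.

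The main obstacle is the Riesz identity itself: verifying it on the basis and then evaluating the absolute sum of the weights to exactly $n$ are both elementary but laborious trigonometric manipulations. Alternative routes---such as a contradiction argument counting sign changes of $T(\theta) - (\sin(n\theta+\psi))$ against a hypothetical extremizer, or Markov-type factorization of $n^2 - T'(\theta)^2$---bypass the Riesz weights but require comparable work. Since the lemma is cited from Schaeffer (1941), the paper simply invokes it as a black box.
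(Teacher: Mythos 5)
Your proposal is correct, but note that the paper offers no proof of this lemma at all: it is invoked purely as a citation to Schaeffer's 1941 survey, exactly as you anticipate in your closing sentence. What you have written is the standard textbook derivation --- maximum modulus to push the suprema to the unit circle, the substitution $z=e^{i\theta}$ turning $q$ into a trigonometric polynomial $T$ of degree $n$ with $\abs{T'(\theta)}=\abs{q'(e^{i\theta})}$, and then the M.~Riesz interpolation identity with weight sum $\sum_k\abs{\beta_k}=n$ --- and it is sound. Two small points are worth tightening if you were to write it out in full. First, $T(\theta)=\sum_{k=0}^n a_k e^{ik\theta}$ has complex coefficients, whereas the Riesz identity is usually verified on the real basis $\lbra{1,\cos j\theta,\sin j\theta}$; this is harmless because the identity is $\bC$-linear in $T$, so checking it on that (real) spanning set extends it to all complex combinations, and the triangle-inequality step is indifferent to whether $T$ is real- or complex-valued. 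Second, your parenthetical appeal to ``uniqueness of Lagrange-type interpolation'' is slightly off as stated: there are only $2n$ nodes $\phi_k$ but the space of degree-$n$ trigonometric polynomials is $(2n+1)$-dimensional, so uniqueness of interpolation does not apply directly; the direct verification on the basis (your primary route) is the one to use. With those caveats the argument closes, and the equality case $T(\theta)=\sin(n\theta)$ confirms the constant $n$ is sharp.
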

\begin{lem}[\cite{bhaskar2013atomic}] Let $x_1,\cdots,x_N$ be complex Gaussian random variables with unit variance. Then,
\equ{E\mbra{\max_{1\leq n\leq N}\abs{x_i}} \leq \sqrt{\ln N+1}.} \label{lem:maxgauss}
\end{lem}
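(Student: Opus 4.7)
The plan is to reduce the claim to a moment-generating-function calculation exploiting the fact that, for a complex Gaussian $x$ with $E|x|^2 = 1$ (write $x = u + iv$ with $u,v$ i.i.d.\ $N(0,1/2)$), the squared modulus $|x|^2$ is exponentially distributed with mean one. The key input is therefore the identity $E[\exp(\lambda|x_n|^2)] = (1-\lambda)^{-1}$ for all $\lambda \in (0,1)$, which uses only the marginals and so makes no joint-distribution assumption on the $x_n$'s, matching the stated generality of the lemma.

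The first step is to pass from $|x_n|$ to $|x_n|^2$ via Jensen's inequality applied to the convex map $t \mapsto t^2$: one has $(E[\max_n |x_n|])^2 \leq E[\max_n |x_n|^2]$, so it suffices to show the right-hand side is at most $\ln N + 1$. The second step is the standard Chernoff/log-sum-exp argument. For each $\lambda \in (0,1)$, Jensen's inequality for $\exp$ together with the crude pointwise bound $e^{\max_n y_n} \leq \sum_n e^{y_n}$ yields $\exp(\lambda E[\max_n |x_n|^2]) \leq E[\max_n \exp(\lambda|x_n|^2)] \leq \sum_n E[\exp(\lambda|x_n|^2)] = N/(1-\lambda)$, and taking logarithms gives $E[\max_n |x_n|^2] \leq \lambda^{-1}\bigl(\ln N + \ln\tfrac{1}{1-\lambda}\bigr)$.

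The third step is to optimize over $\lambda \in (0,1)$ and argue that the infimum is at most $\ln N + 1$. I expect this to be the main obstacle. The first-order stationary point sits near $\lambda^\star = \ln N /(\ln N + 1)$ and, when substituted back, delivers a bound of the form $\ln N + 1 + O\bigl(\ln\ln N/\ln N\bigr)$, which is slightly looser than claimed. A cleaner route is to bypass Chernoff entirely and integrate the exact tail $P(|x_n| > t) = e^{-t^2}$: a union bound gives $P(\max_n |x_n| > t) \leq N e^{-t^2}$, so $E[\max_n |x_n|] \leq \int_0^\infty \min(1, N e^{-t^2})\,dt$. Splitting at $t_0 = \sqrt{\ln N}$ and using the Mills-type bound $\int_{t_0}^\infty e^{-t^2}\,dt \leq e^{-t_0^2}/(2 t_0)$ produces $E[\max_n |x_n|] \leq \sqrt{\ln N} + 1/(2\sqrt{\ln N})$, whose square is $\ln N + 1 + 1/(4\ln N)$.

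Both routes therefore land within a lower-order correction of the target $\ln N + 1$. The remaining technical hurdle is to shave off this residue to hit the stated constant exactly; I would address it either by re-optimizing the split point $t_0$ so that the two pieces of the tail integral balance, by using a slightly sharper Mills-ratio inequality, or by a more careful second-order analysis of the Chernoff bound in $\lambda$. Either refinement is elementary once the main structure above is in place, since the exponential distribution of $|x|^2$ controls the entire calculation and no independence between the $x_n$'s is needed.
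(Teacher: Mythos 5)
The paper never proves this lemma --- it is imported verbatim from \cite{bhaskar2013atomic} --- so the only question is whether your argument actually closes, and as written it does not. You have assembled all the right ingredients (the exponential law of $\abs{x_n}^2$, the Jensen step $\sbra{E\max_n\abs{x_n}}^2\leq E\max_n\abs{x_n}^2$, the union-bound tail integral), but both concrete routes you propose provably fall short of the constant. The Chernoff route is a dead end, not a near miss: minimizing $\lambda^{-1}\sbra{\ln N+\ln\frac{1}{1-\lambda}}$ over $\lambda\in(0,1)$ yields an excess of order $\ln\ln N$ \emph{additively} (not $O(\ln\ln N/\ln N)$ as you state --- at $\lambda=\ln N/(\ln N+1)$ the bound is $(\ln N+1)\sbra{1+\frac{\ln(\ln N+1)}{\ln N}}\approx\ln N+1+\ln\ln N$, and for $N=e$ the minimum is about $3.15$ against a target of $2$), so no second-order analysis in $\lambda$ will recover $\ln N+1$. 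The tail integral applied to $\abs{x_n}$ rather than $\abs{x_n}^2$ likewise overshoots by $\frac{1}{4\ln N}$ after squaring, and shaving that off via sharper Mills-ratio estimates is delicate and unnecessary.

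The fix is a one-line reordering of your own steps: integrate the tail of the \emph{squared} maximum. Since $P\sbra{\abs{x_n}^2>s}=e^{-s}$ marginally, the union bound gives
\equ{E\mbra{\max_n\abs{x_n}^2}=\int_0^\infty P\sbra{\max_n\abs{x_n}^2>s}\,ds\leq\int_0^\infty\min\sbra{1,Ne^{-s}}\,ds=\ln N+Ne^{-\ln N}=\ln N+1,}
with no residue, no optimization, and no independence assumption; combining with your Jensen step finishes the proof. In short, you squared at the wrong end of the argument --- the exponential tail of $\abs{x_n}^2$ integrates to exactly the stated constant, whereas the Gaussian-type tail of $\abs{x_n}$ does not.
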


The derivation is divided into two steps. At the first step we show that the dual atomic norm expressed in (\ref{formu:datomicnorm2}) can be upper bounded by the maximum of its values on a uniform grid of $N$ points on the unit circle $\left[0,1\right)$. At the second step an upper bound of the maximum is computed and the grid number $N$ is optimized. According to Lemma \ref{lem:bernstein}, for any $f,s\in\left[0,1\right)$ it holds that
\equ{\begin{split}
&\abs{W\sbra{e^{i2\pi f}}}-\abs{W\sbra{e^{i2\pi s}}}\\
& \leq \abs{e^{i2\pi f}-e^{i2\pi s}} \overline{W'}\\
& = \abs{e^{i\pi\sbra{f+s}}\sbra{e^{i\pi\sbra{f-s}}-e^{i\pi\sbra{-f+s}}}} \overline{W'}\\
&\leq 2\pi\abs{f-s}\cdot \overline{M}\cdot\overline{W}. \end{split} \label{formu:W1}}
Let $s$ take values in the set $\lbra{0,\frac{1}{N},\cdots,\frac{ N-1}{N}}$. For any $f$, we may find some $s$ in the set such that $\abs{f-s}\leq \frac{1}{2N}$. It then follows from (\ref{formu:W1}) that
$\overline{W}\leq \max_{0\leq m\leq N-1}\abs{W\sbra{e^{i2\pi m/N}}} + \frac{\pi\overline{M}}{N}\overline{W}$
and thus
\equ{\overline{W}\leq \sbra{1-\frac{\pi\overline{M}}{N}}^{-1}\max_{0\leq m\leq N-1}\abs{W\sbra{e^{i2\pi m/N}}}. \label{formu:boundW}}

At the second step, it follows from (\ref{formu:boundW}) and Lemma \ref{lem:maxgauss} that
\equ{E\datomni{\m{w}} \leq \sqrt{L\sigma}\sbra{1-\frac{\pi\overline{M}}{N}}^{-1} \sqrt{\ln N+1}. \label{formu:edatomn2}}
Let $N=p\pi\overline{M}$. Then it is easy to show that the right hand side of (\ref{formu:edatomn2}) is minimized when $p$ is the limit point of the sequence $\lbra{p_j}$ with $p_0>2$ and
\equ{p_{k+1}=2\ln p_{k} + 2\ln \sbra{\pi\overline{M}}+3.}
Moreover, the limit point falls in the interval $\sbra{2\ln\overline{M},5\ln\overline{M}}$ when $\overline{M}$ is modestly large.


\subsection{Proof of (\ref{formu:Natomnvsatomn})} \label{sec:Natomnvsatomn}
For a column vector $\m{w}_{\m{\Omega}}\in\bC^L$, its dual atomic norms are
\lentwo{\equa{\datomni{\m{w}_{\m{\Omega}}}
&=& \sup_{f\in\left[0,1\right)} \abs{\sum_{m\in\m{\Omega}} w_me^{-i2\pi \sbra{m-1}f}}, \label{formu:datomicnorm21}\\ \Ndatomni{\m{w}_{\m{\Omega}}}
&=&\sup_{f\in\widetilde{\m{f}}} \abs{\sum_{m\in\m{\Omega}} w_me^{-i2\pi \sbra{m-1}f}}. \label{formu:datomicnorm22}}
}An immediate result is that
$\datomni{\m{w}_{\m{\Omega}}}\geq \Ndatomni{\m{w}_{\m{\Omega}}}$.
On the other hand, by (\ref{formu:boundW}) we have
$\datomni{\m{w}_{\m{\Omega}}}\leq \sbra{1-\frac{\pi\overline{M}}{N}}^{-1} \Ndatomni{\m{w}_{\m{\Omega}}}$.
So it holds that
\equ{\Ndatomni{\m{w}_{\m{\Omega}}}\leq \datomni{\m{w}_{\m{\Omega}}}\leq \sbra{1-\frac{\pi\overline{M}}{N}}^{-1} \Ndatomni{\m{w}_{\m{\Omega}}},}
which concludes (\ref{formu:Natomnvsatomn}).

\bibliographystyle{IEEEtran}


\end{document}